\documentclass{iacrcc}

\license{CC-by}

\title[running = {Computing endomorphism rings from one endomorphism}]{The supersingular endomorphism ring problem given one endomorphism}

\addauthor[inst = {1,2},
	orcid = {0009-0007-6116-6863},
	surname = {Herlédan Le Merdy},
	email = {arthur.herledan_le_merdy@ens-lyon.fr}
	]{Arthur Herlédan Le Merdy}

\addauthor[inst = {1},
	orcid = {0000-0003-1249-6077}
	]{Benjamin Wesolowski}

\addaffiliation[ror = {05n21n105},
	country = {France},
	city = {Lyon}
	]{ENS de Lyon, CNRS, UMPA, UMR 5669}

\addaffiliation[
	ror = {04msnz457},
	country = {France},
	city = {Lyon}
	]{ENS de Lyon, LIP (CNRS, U. Lyon, ENS de Lyon, Inria, UCBL), UMR 5668}

\addfunding[
	grantid = {ANR-21-CE94-0003},
	country = {France}
	]{CHARM ANR-NSF}

\addfunding[
	grantid = {ANR-22-PNCQ-0002},
	country = {France}
	]{HQI}

\addfunding[
	grantid = {ANR-22-PETQ-0008},
	country = {France}
	]{PEPR quantique France 2030}

\keywords{Isogeny-based cryptography, Endomorphism ring, Supersingular elliptic curve, Orientation, Class group, Cryptanalysis}


\usepackage[english]{babel}
\usepackage[utf8]{inputenc}

\usepackage{cite}

\usepackage{amsfonts}

\usepackage{mathrsfs}

\usepackage{algorithm}
\usepackage[noend]{algpseudocode}

\usepackage{stmaryrd}

\usepackage{amsmath}

\usepackage{upgreek}

\usepackage{amssymb}

\usepackage{scalerel}
\usepackage{stackengine,wasysym}

\usepackage{xcolor}
\usepackage{todonotes}

\usepackage{xcolor}

\newcommand\reallywidetilde[1]{\ThisStyle{%
  \setbox0=\hbox{$\SavedStyle#1$}%
  \stackengine{-.1\LMpt}{$\SavedStyle#1$}{%
    \stretchto{\scaleto{\SavedStyle\mkern.2mu\AC}{.5150\wd0}}{.6\ht0}%
  }{O}{c}{F}{T}{S}%
}}

\usepackage{yhmath}

\usepackage{bm}

\usepackage{tikz-cd}

\DeclareMathOperator {\disc}{disc}
\DeclareMathOperator {\Cl}{Cl}
\DeclareMathOperator {\Unif}{Unif}
\DeclareMathOperator {\Cay}{Cay}
\DeclareMathOperator {\End}{End}
\DeclareMathOperator {\Iso}{Iso}
\DeclareMathOperator {\Aut}{Aut}
\DeclareMathOperator {\rank}{rank}

\DeclareMathOperator {\enc}{\texttt{enc}}
\DeclareMathOperator {\domain}{domain}
\DeclareMathOperator {\codomain}{codomain}
\DeclareMathOperator {\Sp}{Sp}

\newcommand{\mathsc}[1]{{\normalfont\textsc{#1}}}
\newcommand{\mathname}[1]{{\mathsc{#1}}}
\newcommand{\IsogenyPath}{\ensuremath{\ell\mathname{-IsogenyPath}}}
\newcommand{\EndRing}{\ensuremath{\mathname{EndRing}}}
\newcommand{\Vectorisation}{\ensuremath{\mathname{Vectorisation}}}
\newcommand{\Primitivisation}{\ensuremath{\mathname{Primitivisation}}}
\newcommand{\Effective}[1]{\ensuremath{\mathname{Effective}\ #1\mathname{-Vectorisation}}}
\newcommand{\HiddenShift}{\ensuremath{\mathname{Hidden Shift}}}

\hyphenation{po-la-ri-sa-tion}
\hyphenation{di-scri-mi-nant}
\hyphenation{o-pe-ra-tions}
\hyphenation{e-va-lu-a-te}
\hyphenation{cha-ra-cte-ri-stic}
\hyphenation{fa-cto-ri-sa-tion}
\hyphenation{ge-ne-ra-tor}
\hyphenation{cry-pto-gra-phy}
\hyphenation{com-ple-xi-ties}
\hyphenation{pro-blem}
\hyphenation{pri-mi-ti-ve}

\date{}

\begin{document}

\maketitle

\begin{abstract}
Given a supersingular elliptic curve $E$ and a non-scalar endomorphism $\alpha$ of $E$, we prove that the endomorphism ring of $E$ can be computed in classical time about $|\disc(\mathbb Z[\alpha])|^{1/4}$, and in quantum subexponential time, assuming the generalised Riemann hypothesis. Previous results either had higher complexities, or relied on heuristic assumptions.

Along the way, we describe and analyse a general algorithm to divide isogenies in polynomial time, and to solve the Primitivisation problem in polynomial time.
Following the attacks on SIDH, isogenies in high dimension are a central ingredient of our results.
\end{abstract}

\begin{textabstract}
Given a supersingular elliptic curve E and a non-scalar endomorphism α of E, we prove that the endomorphism ring of E can be computed in classical time about disc(Z[α])^1/4, and in quantum subexponential time, assuming the generalised Riemann hypothesis. Previous results either had higher complexities, or relied on heuristic assumptions.

Along the way, we describe and analyse a general algorithm to divide isogenies in polynomial time, and to solve the Primitivisation problem in polynomial time.
Following the attacks on SIDH, isogenies in high dimension are a central ingredient of our results.
\end{textabstract}

\begin{section}{Introduction}
\noindent Isogeny-based cryptography is an active and promising branch of post-quantum cryptography.
Isogenies are certain kinds of maps between elliptic curves.
The security of cryptosystems in this family relies mainly on the algorithmic hardness of constructing an isogeny between two supersingular elliptic curves: the \emph{supersingular isogeny path problem}.

Endomorphisms of an elliptic curve $E$ are isogenies from $E$ to itself, and their collection forms the endomorphism ring $\End(E)$. 
The \emph{endomorphism ring problem}, denoted  \EndRing, consists in computing the endomorphism ring of a supersingular elliptic curve.
Under the generalised Riemann hypothesis, the isogeny problem is equivalent to \EndRing~\cite{EC:EHLMP18,wesolowski_supersingular_2022}.
This equivalence has placed \EndRing\ at the heart of isogeny-based cryptography,
and its hardness has been proved to relate to the security of the \texttt{CGL} hash function \cite{charles_cryptographic_2009, EC:EHLMP18}, the \texttt{CSIDH} key exchange protocol \cite{AC:CLMPR18, EC:CasPanVer20, EC:Wesolowski22} and the \texttt{SQISign} signature scheme \cite{AC:DKLPW20}.

In certain cryptosystems, the elliptic curves involved are equipped with one public endomorphism. For instance, in \texttt{CSIDH}~\cite{AC:CLMPR18}, all elliptic curves are defined over $\mathbb F_p$, and thus the Frobenius endomorphism is an accessible non-trivial endomorphism. The situation is similar in \cite{chenu_higher-degree_2022,PKC:DFKLMP23}.
The endomorphism ring problem then asks to find all the \emph{other} endomorphisms.
This yields the following question:
\begin{itemize}
	\item How much does knowing one endomorphism simplify the computation of the endomorphism ring of a supersingular elliptic curve? 
\end{itemize}
A closely related question was studied in~\cite{arpin_orienteering_2023}: given two curves $E$ and $E'$, together with two endomorphisms $\alpha \in \End(E)$ and $\beta \in \End(E')$, how hard is it to find an isogeny between them?
Under several heuristic assumptions, they provide a classical exponential algorithm and a quantum subexponential algorithm solving this problem.
With the equivalence between the isogeny path problem and \EndRing, their work provides a first answer to the above question. Yet that answer has limitations: first, as stated, it is only heuristic. Second, the output of the algorithm of~\cite{arpin_orienteering_2023} may have exponential size, which could considerably increase the cost of applying the equivalence.\\

The schemes of \cite{AC:CLMPR18,chenu_higher-degree_2022,PKC:DFKLMP23} have in common the notion of \emph{orientation}, introduced by Colò and Kohel \cite{colo_orienting_2020}. 
Given an order $\mathfrak O$ in a quadratic number field, an $\mathfrak O$-orientation of a curve $E$ is a subring of $\End(E)$ isomorphic to $\mathfrak O$. 
The interest in this notion lies in the fact that the set of $\mathfrak O$-oriented curves comes with an action of the class group of $\mathfrak O$.
The problem of inverting this group action is known as the \Vectorisation\ problem. The presumed hardness of this problem 
was already at the heart of the security of the \texttt{CRS} protocol \cite{couveignes_hard_2006,rostovtsev_public-key_2006} where the action of ideal class groups came from the complex multiplication theory of ordinary elliptic curves.
Today, it is behind the security of \texttt{CSIDH}~\cite{AC:CLMPR18} and its variants~\cite{chenu_higher-degree_2022,PKC:DFKLMP23}.

Any endomorphism $\alpha \in \End(E) \setminus \mathbb Z$ gives rise to a $\mathbb Z[\alpha]$-orientation, hinting at the connection between the \EndRing\ problem given one endomorphism, and problems involving orientations.
The link between \Vectorisation\ and \EndRing\ has first been studied in the particular case of \texttt{CSIDH} in \cite{EC:CasPanVer20}.
That article proves that there is a subexponential-time reduction from breaking \texttt{CSIDH} to computing the endomorphism rings.
Then it has been improved and extended to a polynomial-time equivalence between \Vectorisation\ and \EndRing\ in \cite{EC:Wesolowski22}.
However, these results necessitate the orientations to be \emph{primitive}: the quadratic suborder must be maximal in the endomorphism ring. Obtaining a primitive orientation from a given orientation is not trivial \textemdash\ this problem is called the \Primitivisation\ problem.
This leads us to this second question:

\begin{itemize}
	\item How hard is it to get a primitive orientation from an orientation?
\end{itemize}

The \Primitivisation\ problem was first introduced in \cite{arpin_orienteering_2023} as a presumably hard problem, and they gave a quantum subexponential algorithm solving it. 

\begin{subsection}{Orientations and variants of \EndRing}

We now give an informal overview of orientations and related hard problems.
For formal definitions, we refer the reader to Section \ref{sec_def} about orientations and general notations, to Section \ref{sec_endring} about the different hard problems and to \cite{silverman_arithmetic_1986} for a detailed reference about elliptic curves and isogenies.

We fix a prime integer $p$ and we denote $E$ a supersingular elliptic curve defined over $\bar{\mathbb{F}}_p$.
An isogeny of elliptic curves is a morphism between elliptic curves seen as abelian varieties.
We denote by $\End(E)$ the ring formed by isogenies from $E$ to itself, i.e. the endomorphisms of $E$.
We consider the following supposedly hard problem \EndRing. 

\begin{itemize}
	\item \EndRing: Given a supersingular elliptic curve $E$, compute $\End(E)$.
\end{itemize}

The current best classical algorithms to solve \EndRing\ run in expected time $\tilde{O}(p^{1/2})$, see for instance \cite{eisentrager_computing_2020}, and the best quantum algorithms have complexity in $\tilde{O}(p^{1/4})$, see for example \cite{INDOCRYPT:BiaJaoSan14}.\\

Let $\mathfrak{O}$ be an order of a quadratic number field.
An orientation $\iota$ is an embedding from $\mathfrak{O}$ into $\End(E)$.
This is mainly equivalent to knowing an endomorphism in $\End(E)\backslash\mathbb{Z}$.
If this embedding cannot be extended to any superorder of $\mathfrak{O}$, we say that $\iota$ is a primitive orientation.
When a (primitive) orientation $\iota$ exists, we say that $E$ is (primitively) $\mathfrak{O}$-orientable and that the pair $(E,\iota)$ is a (primitively) $\mathfrak{O}$-oriented elliptic curve.\\

This notion of orientation comes together with variants of \EndRing\ where partial information on the endomorphism ring is given.
Let $\alpha$ be an element of the quadratic order $\mathfrak{O}$.

\begin{itemize}
	\item $\alpha$-\EndRing: Given a supersingular elliptic curve $E$ together with an orientation $\iota : \mathbb{Z}[\alpha] \hookrightarrow \End(E)$, compute $\End(E)$.
\end{itemize}

\begin{itemize}
	\item $\mathfrak{O}$-\EndRing: Given a primitively $\mathfrak{O}$-oriented supersingular elliptic curve $(E,\iota)$, compute $\End(E)$.
\end{itemize}

These two problems are tightly related.
On the one hand, there is a direct reduction from $\mathfrak{O}$-\EndRing\ to $\alpha$-\EndRing\ as the inputs of the former are also inputs of the latter.
On the other hand, the reduction from $\alpha$-\EndRing\ to $\mathfrak{O}$-\EndRing\ is not trivial as it requires to compute a primitive orientation from any given orientation.
This computation has been introduced in \cite{arpin_orienteering_2023} as a hard problem together with a quantum algorithm for solving it in subexponential time under some heuristics.
\begin{itemize}
	\item \Primitivisation: Given a supersingular elliptic curve $E$ together with an orientation $\iota : \mathbb{Z}[\alpha] \hookrightarrow \End(E)$, find the primitive orientation $\iota': \mathfrak{O} \hookrightarrow \End(E)$ such that $\mathbb{Z}[\alpha] \subseteq \mathfrak{O}$. 
\end{itemize}
To get a classical reduction from $\alpha$-\EndRing\ to $\mathfrak{O}$-\EndRing, the most direct approach consists in solving the \Primitivisation\ problem to extend the given orientation to a primitive one.\\

The $\mathfrak{O}$-\EndRing\ problem is not only interesting to investigate the complexity of \EndRing\ given some additional information, it also has an important place in isogeny-based cryptography.
To see that, we first need to consider the \Vectorisation\ problem induced by primitive orientations.
From a primitive orientation, one can construct a free action of the class group $\Cl(\mathfrak{O})$ over the set of primitively $\mathfrak{O}$-oriented elliptic curves.
We denote this group action as
\begin{align*}
	\Cl(\mathfrak{O}) \times SS_\mathfrak{O}(p) &\rightarrow SS_\mathfrak{O}(p)\\
	([\mathfrak{a}], (E,\iota)) &\mapsto \mathfrak{a} \star (E,\iota) := (E^\mathfrak{a}, \iota^\mathfrak{a}),
\end{align*}
where $SS_\mathfrak{O}(p)$ is the set of primitively $\mathfrak{O}$-oriented supersingular elliptic curves defined over $\bar{\mathbb{F}}_p$ up to isomorphism.
This group action allows one to define a \Vectorisation\ problem, giving a framework to study security of \texttt{CSIDH}-like protocols. 
\begin{itemize}
	\item $\mathfrak{O}$-\Vectorisation: Given $(E,\iota)$ and $(E',\iota')$ in $SS_\mathfrak{O}(p)$ find an $\mathfrak{O}$-ideal $\mathfrak{a}$ such that $E^\mathfrak{a} \simeq E'$.
\end{itemize}

Under the generalised Riemann hypothesis and given the factorisation of $\disc(\mathfrak{O})$, the $\mathfrak{O}$-\EndRing\ problem is equivalent to $\mathfrak{O}$-\Vectorisation\  in probabilistic polynomial time, see \cite{EC:Wesolowski22}.
Therefore, the security of many protocols such as \texttt{CSIDH} \cite{AC:CLMPR18}, \texttt{CSI-FiSh} \cite{AC:BeuKleVer19} and \texttt{CSURF} \cite{PQCRYPTO:CasDec20} reduces to $\mathfrak{O}$-\EndRing\, see \cite{EC:Wesolowski22}.

In the current state of the art, using $l$ to denote the length of the input, the problem of $\mathfrak{O}$-\Vectorisation\ can heuristically be solved in expected classical time $l^{O(1)}|\disc(\mathfrak{O})|^{1/4}$, using for instance approaches close to the ones in \cite{DCC:DelGal16}.
Quantumly, it can heuristically be solved in time $l^{O(1)} L_{|\disc(\mathfrak{O})}[1/2]$, see \cite[Proposition 4]{EC:Wesolowski22}.
\end{subsection}

\begin{subsection}{Contributions}
We provide algorithms whose asymptotic complexity matches or improves upon previous results in the literature. Unlike previous results, our proofs do not rely on heuristic assumptions.
In this list of contributions, we suppose that the input and output of the algorithms are always in efficient representation, we refer the reader to Section \ref{subsec_encoding} for more information about representations and encodings.
\begin{itemize}

\item In Section \ref{sec_higher}, we develop the first ingredient for the rest of the paper: an algorithm to divide isogenies, Corollary~\ref{cor_division_general}.
Explicitly, given two isogenies $\varphi$ and $\eta$, the algorithm returns the unique isogeny $\psi$ such that $\varphi = \psi \circ \eta$ (or asserts that such $\psi$ does not exist). This is the right-division of $\varphi$ by $\eta$ (dualizing, one can also divide on the left).

The heart of the method is an algorithm to divide isogenies by integers in polynomial time.
It is a generalization of a division algorithm for translates of the Frobenius introduced by Robert to compute the endomorphism ring of an ordinary elliptic curve, \cite[Theorem 4.2]{robert_applications_2022}.
Before the attacks of SIDH, dividing isogenies by integers either required superpolynomial time, or degraded the quality of the representation (getting exponentially worse with each application).
It is not the case here.

\item In Section \ref{sec_prim}, we use this division algorithm to solve the \Primitivisation\ problem.
This result adapts Robert's algorithm for computing in polynomial time the endomorphism ring of ordinary elliptic curves \cite{robert_applications_2022}, which can be seen as an ordinary counterpart of the \Primitivisation\ problem.
We prove that, when the factorisation of $\disc(\mathfrak{O})$ is known, there is a classical polynomial time algorithm solving \Primitivisation.

As an application, we provide a polynomial time algorithm for computing the action of smooth ideals.
Previous polynomial-time algorithms for this task required the norm of the input ideal to be powersmooth.

\end{itemize}
We now use $l$ to denote the length of the input, and use the standard $L$-notation for subexponential complexities (Definition~\ref{def:L-notation}).
\begin{itemize}
	\item In Section \ref{sec_vect}, under the generalised Riemann hypothesis, we provide
	\begin{itemize} 
		\item a classical algorithm solving $\mathfrak{O}$-\Vectorisation\ in $l^{O(1)} |\disc(\mathfrak{O})|^{1/4}$ expected time.
		\item a quantum algorithm solving $\mathfrak{O}$-\Vectorisation\ in $l^{O(1)} L_{|\disc(\mathfrak{O})|}[1/2]$ expected time.
	\end{itemize}
	This directly leads to
	\begin{itemize}
		\item a classical algorithm solving $\mathfrak{O}$-\EndRing\ in $l^{O(1)} |\disc(\mathfrak{O})|^{1/4}$  expected time.
		\item a quantum algorithm solving $\mathfrak{O}$-\EndRing\ in $l^{O(1)} L_{|\disc(\mathfrak{O})|}[1/2]$ expected time.
	\end{itemize}
Combined with our resolution of \Primitivisation, we obtain the following theorems on solving the endomorphism ring problem knowing an endomorphism, rigorously.
\begin{theorem}[\textbf{GRH}]
\label{the_classical}
There is a classical algorithm that given a supersingular curve $E$, and an endomorphism $\alpha \in \End(E)\setminus \mathbb Z$, computes the endomorphism ring of $E$ in expected time $l^{O(1)} |\disc(\mathbb{Z}[\alpha])|^{1/4}$ where $l$ is the length of the input.
\end{theorem}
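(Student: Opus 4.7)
The plan is to compose the three main ingredients developed earlier in the paper. Given $(E,\alpha)$ with $\alpha \in \End(E)\setminus\mathbb{Z}$, the endomorphism $\alpha$ induces a natural orientation $\iota:\mathbb{Z}[\alpha]\hookrightarrow\End(E)$, which is exactly the starting data for the $\alpha$-\EndRing\ problem. The strategy is first to upgrade $\iota$ to a primitive orientation of some superorder $\mathfrak{O} \supseteq \mathbb{Z}[\alpha]$, then hand the resulting primitively oriented curve to the $\mathfrak{O}$-\EndRing\ algorithm.

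As a preliminary step, I would compute the factorisation of $\disc(\mathbb{Z}[\alpha])$ using a standard classical integer factoring algorithm. Factoring runs in $L[1/3]$ subexponential time, which is comfortably absorbed by the $|\disc(\mathbb{Z}[\alpha])|^{1/4}$ budget. Next, I would invoke the \Primitivisation\ algorithm of Section~\ref{sec_prim} on $(E,\iota)$ with this factorisation. By the claim in the contributions, it runs in classical polynomial time given the factorisation of the discriminant, and returns the primitive orientation $\iota':\mathfrak{O}\hookrightarrow\End(E)$ with $\mathbb{Z}[\alpha]\subseteq\mathfrak{O}$. The factorisation of $\disc(\mathfrak{O})$ is read off from the relation $\disc(\mathbb{Z}[\alpha])=[\mathfrak{O}:\mathbb{Z}[\alpha]]^2\,\disc(\mathfrak{O})$, so in particular $|\disc(\mathfrak{O})|\leq|\disc(\mathbb{Z}[\alpha])|$.

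Finally, I would feed $(E,\iota')$ together with the factorisation of $\disc(\mathfrak{O})$ into the classical $\mathfrak{O}$-\EndRing\ algorithm of Section~\ref{sec_vect}, which under GRH runs in expected time $l^{O(1)}|\disc(\mathfrak{O})|^{1/4}$ and outputs $\End(E)$. The total cost is dominated by this last step and is bounded by $l^{O(1)}|\disc(\mathbb{Z}[\alpha])|^{1/4}$, as required.

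The substantive work all lives in the earlier sections; the only thing this final assembly has to check is that input and output representations are consistent across the three modules, which is guaranteed by the encoding conventions fixed in Section~\ref{subsec_encoding}. The genuine obstacle, and the reason the theorem is new and rigorous rather than heuristic, is the existence of a classical polynomial-time \Primitivisation\ step; this in turn rests on the isogeny division algorithm of Section~\ref{sec_higher}, which is where the real technical difficulty of the paper sits.
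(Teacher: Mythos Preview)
Your proposal is correct and follows essentially the same route as the paper: factor $\disc(\mathbb{Z}[\alpha])$ in subexponential time, apply Corollary~\ref{cor_primitivisation} to primitivise, then solve $\mathfrak{O}$-\EndRing\ via the reduction to $\mathfrak{O}$-\Vectorisation\ (Proposition~\ref{pro_OEndRingreduction}) and Theorem~\ref{the_fullclassicOvect}, using $|\disc(\mathfrak{O})|\leq|\disc(\mathbb{Z}[\alpha])|$. The only cosmetic difference is that the paper spells out the intermediate reduction to $\mathfrak{O}$-\Vectorisation\ explicitly rather than packaging it as a single ``$\mathfrak{O}$-\EndRing\ algorithm''.
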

\begin{theorem}[\textbf{GRH}]
\label{the_quantum}
There is a quantum algorithm that given a supersingular curve $E$, and an endomorphism $\alpha \in \End(E)\setminus \mathbb Z$, computes the endomorphism ring of $E$ in expected time $l^{O(1)}L_{|\disc(\mathbb{Z}[\alpha])|}[1/2]$ where $l$ is the length of the input.
\end{theorem}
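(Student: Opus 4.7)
The plan is to chain the Primitivisation algorithm of Section \ref{sec_prim} with the quantum algorithm for $\mathfrak{O}$-\EndRing\ announced in Section \ref{sec_vect}. The endomorphism $\alpha$ immediately furnishes an orientation $\iota : \mathbb{Z}[\alpha] \hookrightarrow \End(E)$, but this orientation may fail to be primitive, so the $\mathfrak{O}$-\EndRing\ machinery cannot be applied to it directly; the first job is therefore to replace $\iota$ by a primitive orientation.

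First, I would invoke Shor's algorithm to compute the factorisation of $\disc(\mathbb{Z}[\alpha])$ in quantum polynomial time. This factorisation is required both by the Primitivisation routine and by the reduction underlying the quantum $\mathfrak{O}$-\EndRing\ algorithm. Since the bit-length of $\disc(\mathbb{Z}[\alpha])$ is polynomial in $l$, this step contributes only a $l^{O(1)}$ factor.

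Next, I would run the classical polynomial-time Primitivisation algorithm of Section \ref{sec_prim} on $(E,\iota)$ together with this factorisation. It returns a primitive orientation $\iota' : \mathfrak{O} \hookrightarrow \End(E)$ with $\mathbb{Z}[\alpha] \subseteq \mathfrak{O}$. The conductor $f$ of $\mathbb{Z}[\alpha]$ in $\mathfrak{O}$ satisfies $\disc(\mathbb{Z}[\alpha]) = f^2 \disc(\mathfrak{O})$, so in particular $|\disc(\mathfrak{O})| \leq |\disc(\mathbb{Z}[\alpha])|$, and the factorisation of $\disc(\mathfrak{O})$ is read off from that of $\disc(\mathbb{Z}[\alpha])$. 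I would then feed $(E,\iota')$ and this factorisation into the quantum $\mathfrak{O}$-\EndRing\ algorithm of Section \ref{sec_vect}, which under GRH runs in expected time $l^{O(1)} L_{|\disc(\mathfrak{O})|}[1/2]$ and outputs $\End(E)$ in the required representation. Monotonicity of $L$ in its first argument yields $L_{|\disc(\mathfrak{O})|}[1/2] \leq L_{|\disc(\mathbb{Z}[\alpha])|}[1/2]$, so the total expected running time is $l^{O(1)} L_{|\disc(\mathbb{Z}[\alpha])|}[1/2]$ as claimed.

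The main obstacle is not the chaining itself, which is largely bookkeeping once the two ingredients are in place. The substance lies in the two results being combined \textemdash\ the heuristic-free Primitivisation algorithm of Section \ref{sec_prim} (which in turn leans on the isogeny-division machinery of Section \ref{sec_higher}) and the quantum reduction from $\mathfrak{O}$-\EndRing\ to $\mathfrak{O}$-\Vectorisation\ of Section \ref{sec_vect}. Verifying that the input representations, the discriminant-factorisation hypotheses, and the primitivity condition propagate correctly between the two stages is the only delicate point.
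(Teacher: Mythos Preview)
Your proposal is correct and follows essentially the same route as the paper: factor $\disc(\mathbb{Z}[\alpha])$ quantumly, apply Corollary~\ref{cor_primitivisation} to primitivise, observe that $|\disc(\mathfrak{O})|\le|\disc(\mathbb{Z}[\alpha])|$, then invoke the reduction of Proposition~\ref{pro_OEndRingreduction} together with the quantum $\mathfrak{O}$-\Vectorisation\ algorithm (Theorem~\ref{the_quantumOvect}). The only cosmetic difference is that the paper names the intermediate reduction to $\mathfrak{O}$-\Vectorisation\ explicitly, whereas you bundle it under ``the quantum $\mathfrak{O}$-\EndRing\ algorithm of Section~\ref{sec_vect}''.
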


\item In section \ref{sec_navigate}, we detail how the algorithmic improvements of Section \ref{sec_higher} allow one to navigate efficiently in the volcano of oriented isogenies.
In the previous literature, the number of steps that one could efficiently take in a volcano was limited because of the degrading quality of representations.

As a direct application, we present an optimisation of the resolution of $\mathfrak{O}$-\EndRing\ through the following reduction:
	\begin{itemize}
		\item Under the generalised Riemann hypothesis, there is a probabilistic reduction from $(\mathbb{Z} + c \mathfrak{O})$-\EndRing\ to $\mathfrak{O}$-\EndRing\ taking a time polynomial in the length of the input and in the largest prime factor of $c$.
	\end{itemize}
This last result improves the probabilistic polynomial reduction given by \cite[Theorem 5]{EC:Wesolowski22} by relaxing the powersmoothness constraint on $c$.
It also leads to a classical algorithm solving $(\mathbb{Z} + c\mathfrak{O})$-\EndRing\ in expected time polynomial in the length of the input, in $\disc(\mathfrak{O})$ and in the largest prime factor of $c$.
	This improves and removes the heuristics of \cite[Corollary 6.]{EC:Wesolowski22}.
\end{itemize}

\end{subsection}

\ifcsstring{@IACRversion}{submission}{}{
\subsection*{Acknowledgements}
The authors would like to extend their gratitude to Guillaume Hanrot for helpful discussions and feedback which have significantly contributed to the writing of this paper.
They also wish to thank Damien Robert for his invaluable answers to our questions. 
The authors were supported by the CHARM ANR-NSF grant (ANR-21-CE94-0003), by the HQI initiative (ANR-22-PNCQ-0002) and by the PEPR quantique France 2030 programme (ANR-22-PETQ-0008).}

\end{section}
\begin{section}{Definitions and notations}
\label{sec_def}
In this article, some results will be proved assuming the generalised Riemann hypothesis. 
They will be marked by \textbf{GRH}, see for instance Theorem \ref{the_classical} and \ref{the_quantum}.

We denote by $\mathbb{F}_q$ the finite field with $q$ elements and by $\bar{\mathbb{F}}_q$ its algebraic closure.
The cardinality of a set $S$ is denoted by $\#S$.
For any order $\mathcal{O}$, $\disc(\mathcal{O})$ is the notation of its discriminant.	
We use the standard $O$-notation together with the $\tilde{O}$-notation which removes the logarithmic factors of the $O$-notation, i.e.
$O(f(x) \log^k(x)) = \tilde{O}(f(x))$
for any positive integer $k$.
When a time complexity is given without specifying the type of operations, it is assumed to count binary operations.

In the algorithms presented here, we use $x \gets \Unif\{S\}$ to denote that $x$ is sampled uniformly at random among the elements of $S$.

\begin{definition}[(Power)Smoothness bound]
Let $n$ be an integer of prime decomposition $\ell_1^{e_1} \dots \ell_r^{e_r}$.
We say that an integer $B$ is a \textbf{smoothness bound} on $n$ and $n$ is said to be $\bm{B}$\textbf{-smooth} if 
$$B \geq \max\limits_{i \in \llbracket 1,r \rrbracket}\ell_i;$$
if further 
$$B \geq \max\limits_{i \in \llbracket 1,r \rrbracket} \ell_i^{e_i}$$
then $B$ is a \textbf{powersmoothness bound} on $n$ and $n$ is $\bm{B}$\textbf{-powersmooth}.
We denote by $P^+(n)$ the integer $\max \limits_{i \in \llbracket 1,r \rrbracket } \ell_i$ and by $P^*(n)$ the integer $\max\limits_{i \in \llbracket 1,r \rrbracket} \ell_i^{e_i}$.
\end{definition}

\begin{definition}[Extension degree]
For any elliptic curve $E$ defined over a finite field $\mathbb{F}_{p^k}$ and integer $n$ of prime decomposition $\ell_1^{e_1}\dots\ell_r^{e_r}$, we use the following notations
\begin{itemize}
	\item $\delta_{E}(n) := \max\limits_{i \in \llbracket 1,r \rrbracket}[\mathbb{F}_{p^k}(E[\ell_i^{e_i}]) : \mathbb{F}_{p^k}]$,
	\item $\delta_{E,2}(n) := \max\limits_{(i,j) \in \llbracket 1,r \rrbracket^2, i \neq j} [\mathbb{F}_{p^k}(E[\ell_i^{e_i} \ell_j^{e_j}]):\mathbb{F}_{p^k}]$,
\end{itemize}
where, for any integer $m$, $\mathbb{F}_{p^k}(E[m])$ stands for the smallest field extension of $\mathbb{F}_{p^k}$ where the coordinates of the points of $E[m]$ live.
\end{definition}

\begin{definition}[$L$-notation]\label{def:L-notation}
Let $a,b,x$ be three real numbers.
To handle subexponential complexities, we define the following standard $\bm{L}$\textbf{-notation}
$$
L_x[a,b] := \exp(b(\log x)^a(\log \log x)^{(1-a)}),
$$
as well as this $L$-notation for unknown constants
$$
L_x[a] := \exp(O((\log x)^a(\log \log x)^{(1-a)})).
$$
\end{definition}

\begin{subsection}{Cayley graph}
\label{subsec_graph}
\begin{definition}[Cayley graph]
Let $G$ be a finite group and let $S \subseteq G$ be a generating subset of $G$.
The \textbf{Cayley graph} $\bm{\Cay(G,S)}$ is the graph whose vertices are the elements of $G$ and such that there exists an edge between two vertices $g_1,g_2$ if and only there exists an $s \in S$ such that $g_2 = sg_1$.
\end{definition}

We shall use the following result of Childs, Jao and Soukharev regarding random walks over Cayley graphs of class groups.

\begin{proposition}[\textbf{(GRH}) Theorem 2.1 in \cite{childs_constructing_2014}]
\label{pro_walk}
Let $\mathfrak{O}$ be an imaginary quadratic order of discriminant $\Delta$ and conductor $f_\mathfrak{O}$.
Let $\varepsilon > 0$ and $x$ be a real number such that  $x \geq (\log|\Delta|)^{2 + \varepsilon}$.
Let $\Sigma_x$ be the set 
$$\{ [\mathfrak{p}] \in \Cl(\mathfrak{O}) \text{ such that } \gcd(f_\mathfrak{O},\mathfrak{p}) = 1 \text{ and } N(\mathfrak{p}) \leq x \text{ prime}\}$$
from which we define the set $S_x$ to be
$$\Sigma_x \cup \{ [\mathfrak{p}]^{-1} \text{ for }  [\mathfrak{p}] \in \Sigma_x \}.$$
Then there exists a positive constant $C > 1$, depending only on $\varepsilon$, such that for all $\Delta$ sufficiently large, a random walk of length
$$
t \geq C \frac{\log \#\Cl(\mathfrak{O})}{\log \log |\Delta|}
$$
in the Cayley graph $\Cay(\Cl(\mathfrak{O}), S_x)$ from any starting vertex lands in any fixed subset $H \subset \Cl(\mathfrak{O})$ with probability $P$ such that 
$$
 \frac{1}{2}\frac{\#H}{\#\Cl(\mathfrak{O})} \leq  P.
$$
\end{proposition}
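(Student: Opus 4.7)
The plan is to combine the standard Fourier analysis of random walks on Cayley graphs of finite abelian groups with the GRH-based effective Chebotarev density theorem. Since $S_x$ is symmetric by construction, the uniform-transition operator on $\Cay(\Cl(\mathfrak{O}), S_x)$ commutes with the $\Cl(\mathfrak{O})$-translation action, so it is simultaneously diagonalised by the characters of $\Cl(\mathfrak{O})$, with eigenvalues
$$\lambda_\chi = \frac{1}{|S_x|}\sum_{s \in S_x}\chi(s).$$
The trivial character yields $\lambda_1 = 1$, and the entire proof reduces to showing that $|\lambda_\chi|$ is sufficiently small for every non-trivial character $\chi$.

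For each $\chi \neq 1$, the sum $\sum_{[\mathfrak{p}] \in \Sigma_x}\chi([\mathfrak{p}])$ is precisely the type of prime-counting sum controlled by the Chebotarev density theorem applied to the ring class field of $\mathfrak{O}$ (which is why the coprimality condition $\gcd(f_\mathfrak{O}, \mathfrak{p}) = 1$ is imposed, so that $[\mathfrak{p}]$ is well-defined in $\Cl(\mathfrak{O})$). Under GRH, the Hecke $L$-functions attached to $\chi$ have explicit zero-free regions giving the effective bound
$$\Bigl|\sum_{[\mathfrak{p}] \in \Sigma_x}\chi([\mathfrak{p}])\Bigr| = O\bigl(x^{1/2}\log(x|\Delta|)\bigr),$$
uniform in $\chi \neq 1$. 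Since $|S_x| = 2\#\Sigma_x \sim 2x/\log x$ by the prime ideal theorem, dividing gives $|\lambda_\chi| = O\bigl(x^{-1/2}(\log x)\log(x|\Delta|)\bigr)$, and plugging in the hypothesis $x \geq (\log|\Delta|)^{2+\varepsilon}$ yields $|\lambda_\chi| \leq (\log|\Delta|)^{-\varepsilon/3}$ for all sufficiently large $|\Delta|$.

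For the mixing step, I would use Parseval's identity. Writing $p_t$ for the distribution of the walk after $t$ steps starting from any fixed vertex $g_0$, one has $p_t(g) = \frac{1}{\#\Cl(\mathfrak{O})}\sum_\chi \lambda_\chi^t\,\chi(g_0 g^{-1})$. Summing over $g \in H$ and isolating the trivial character,
$$\Bigl|\Pr[X_t \in H] - \frac{\#H}{\#\Cl(\mathfrak{O})}\Bigr| \leq \max_{\chi \neq 1}|\lambda_\chi|^t \cdot \sqrt{\#H},$$
via Cauchy--Schwarz together with Parseval's $\sum_\chi |\widehat{\mathbf{1}_H}(\chi)|^2 = \#\Cl(\mathfrak{O})\cdot\#H$. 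Forcing the right-hand side to be at most $\tfrac{1}{2}\,\#H/\#\Cl(\mathfrak{O})$ amounts to $\max_{\chi \neq 1}|\lambda_\chi|^t \leq \sqrt{\#H}/(2\#\Cl(\mathfrak{O}))$, and the worst case $\#H = 1$ reduces to $t \geq \log(2\#\Cl(\mathfrak{O}))/\log(1/\max_{\chi\neq1}|\lambda_\chi|)$. The bound on $|\lambda_\chi|$ from the previous paragraph then makes $t \geq C\log(\#\Cl(\mathfrak{O}))/\log\log|\Delta|$ suffice for some $C = C(\varepsilon)$.

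The main obstacle is the Chebotarev step: one must work over the ring class field of the (possibly non-maximal) order $\mathfrak{O}$ rather than the Hilbert class field of its maximal order, and carefully track the dependence of the error term on the conductor $f_\mathfrak{O}$ and the discriminant $\Delta$ so that it is absorbed by the chosen threshold $x$. Once the character-sum bound is established uniformly in $\chi$ with the right dependence on $|\Delta|$, the Fourier-analytic mixing argument is entirely routine.
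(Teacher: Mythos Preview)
The paper does not prove this proposition at all: it is stated as a direct citation of \cite[Theorem~2.1]{childs_constructing_2014} and is used as a black box, with no proof or proof sketch given in the paper. Your sketch is therefore not competing against any argument in the paper itself.

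That said, your outline is the correct one and matches the approach in the cited source (and its predecessor by Jao--Miller--Venkatesan): diagonalise the transition operator by characters of $\Cl(\mathfrak{O})$, bound the nontrivial eigenvalues via GRH character-sum estimates over prime ideals, and conclude mixing via Cauchy--Schwarz/Parseval. One point to be careful about if you actually write this out: the set $\Sigma_x$ as defined in the statement is a set of ideal \emph{classes}, so to get the $x/\log x$ count and the character-sum bound you need to interpret the random walk as weighted by the number of prime ideals (of norm $\leq x$, coprime to $f_{\mathfrak{O}}$) landing in each class, or equivalently treat $\Sigma_x$ as a multiset indexed by the primes $\mathfrak{p}$ themselves; otherwise $|S_x|\leq 2\#\Cl(\mathfrak{O})$ and your normalisation $|S_x|\sim 2x/\log x$ would be wrong. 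This is how the result is proved in the cited reference, and it is also how the proposition is actually used in Algorithm~\ref{alg_classicOvect1orbit}, where the walk is generated by sampling exponent vectors over the prime ideals.
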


\end{subsection}

\begin{subsection}{Elliptic curves and orientations}
\label{subsect_orientations}
In this section, we recall some basic definitions and notations about elliptic curves before introducing the recent notions of orientations \cite{colo_orienting_2020}.
For more details about elliptic curves theory, we refer the reader to Silverman's book \cite{silverman_arithmetic_1986}.\\

An \textbf{elliptic curve} is an abelian variety of dimension 1.
\textbf{Isogenies} of elliptic curves are non-trivial homomorphisms between them.
An isogeny from an elliptic curve to itself is called an \textbf{endomorphism}.
The set of all endomorphisms of an elliptic curve $E$ together with the trivial map form the \textbf{endomorphism ring} $(\End(E),+,\circ)$ where $+$ is the point-wise addition and $\circ$ is the composition of maps.
For any integer $n$ and elliptic curve $E$, we denote by $[n]$ the \textbf{multiplication-by-$\bm{n}$ map} on $E$ and by $E[n]$ its kernel, called the $\bm{n}$\textbf{-torsion subgroup} of $E$.
An elliptic curve $E$ defined over a finite field of characteristic $p$ is said to be \textbf{supersingular} if $E[p] \simeq \{ 0 \} $.

In this paper, we only work with supersingular elliptic curves defined over a field of characteristic $p$, where $p$ is a fixed prime number.
The $p^k$-Frobenius isogeny from a curve $E$ is the isogeny $\phi^{p^k}_E : E \to E^{(p^k)} : (x,y) \mapsto (x^{(p^k)},y^{(p^k)})$, where $E^{(p^k)}$ is defined by the equation of $E$ with coefficients raised to the power $p^k$.
An isogeny $\varphi : E \to E'$ is \textbf{inseparable} if it factors as $\varphi = \psi \circ \phi^{p}_E$ for some isogeny $\psi$. Otherwise, the isogeny is \textbf{separable}.
Any isogeny can be written as $\varphi = \psi \circ \phi^{p^k}_E$ where $\psi$ is separable, and we define the \textbf{degree} of $\varphi$ as $\deg(\varphi) = p^k\cdot\#(\ker \varphi)$.
For any isogeny $\varphi : E \to E'$, there exists a unique isogeny $\hat \varphi$, the \textbf{dual isogeny} of $\varphi$, such that $\hat \varphi \circ \varphi = [\deg(\varphi)]$.

For any prime $\ell \not = p$, an \textbf{$\bm{\ell}$-isogeny} is an isogeny of degree $\ell$.\\

An important property of supersingular elliptic curves is that their endomorphism ring is isomorphic to a maximal order of the quaternion algebra over $\mathbb{Q}$ ramified only at $p$ and at infinity.
This quaternion algebra is unique up to isomorphism and we denote it by $B_{p,\infty}$.
More explicitly, we have the isomorphism of $\mathbb{Q}$-algebras
$$ B_{p,\infty} \simeq \mathbb{Q} + \mathbb{Q}i + \mathbb{Q}j + \mathbb{Q}ij \text{ such that } i^2 = -p, j^2 =-q_p \text{ and } ij = -ji$$
where $q_p$ is a positive integer depending only on $p$.
We refer the reader to \cite{voight_quaternion_2021} for more information about quaternion algebras.

In a way, quaternion algebras can be seen as two imaginary quadratic number fields combined to get a non commutative 4-dimensional $\mathbb{Q}$-algebra.
It is in fact possible to embed an infinite number of imaginary quadratic number fields into a given $\mathbb{Q}$-algebra $B_{p,\infty}$.
Naturally, one can then study how orders of imaginary quadratic number fields embed into a given endomorphism ring of supersingular elliptic curves.
The study of such embeddings in isogeny-based cryptography originates from \cite{colo_orienting_2020}, where they are introduced as \textbf{orientations}.

Let $K$ be an imaginary quadratic number field such that $p$ does not split in $K$.

\begin{definition}[Orientation]
An elliptic curve $E$ is said to be $\bm{K}$\textbf{-orientable} if there exists an embedding $\iota: K \hookrightarrow \End(E)\otimes_\mathbb{Z}\mathbb{Q}$. Then the embedding $\iota$ is a $\bm{K}$\textbf{-orientation} and the pair $(E,\iota)$ is a $\bm{K}$\textbf{-oriented} elliptic curve.

For any order $\mathcal{O}$ of $K$, we say that $\iota$ is an $\bm{\mathcal{O}}$\textbf{-orientation} and $(E,\iota)$ is an $\bm{\mathcal{O}}$\textbf{-oriented} elliptic curve if $\iota(\mathcal{O}) \subseteq \End(E)$. In this case, $\iota$ will often be considered as the embedding $\mathcal{O} \hookrightarrow \End(E)$.
An $\mathcal{O}$-orientation $\iota$ is \textbf{primitive} if $\iota(\mathcal{O}) = \End(E)\cap\iota(K)$.
In this case, the oriented elliptic curve $(E,\iota)$ is said to be \textbf{primitively $\bm{\mathcal{O}}$-oriented}.
\end{definition}

\begin{definition}[Oriented isogeny]
Let $(E,\iota)$ and $(E',\iota')$ be two $K$-oriented elliptic curves, and let $\varphi : E \rightarrow E'$ be an isogeny.
We denote by $\varphi_*(\iota)$ the $K$-orientation induced by $\varphi$.
Explicitly, this orientation is given by
$$
\varphi_*(\iota)(\kappa) = (\varphi \circ \iota(\kappa) \circ \hat\varphi) \otimes \frac{1}{\deg(\varphi)}, \forall \kappa \in K.
$$
We say that $\varphi$ is \textbf{$\bm K$-oriented} if $\varphi_*(\iota)$, is equal to $\iota'$.
In particular, a $K$-oriented isogeny of degree $1$ is called a $K$-oriented isomorphism.
\end{definition}

For any order $\mathfrak{O}$ in $K$, let $SS_\mathfrak{O}(p)$ be the set of primitively $\mathfrak{O}$-oriented supersingular elliptic curves over $\overline{\mathbb{F}}_p$, up to $K$-oriented isomorphism.
By \cite[Proposition 3.2]{onuki_oriented_2021}, if $p$ does not divide the conductor of $\mathfrak{O}$ then $SS_\mathfrak{O}(p)$ is not empty.
From now, we assume that this is always the case.
One can then define a free action of the class group $\Cl(\mathfrak{O})$ on the set of curves $SS_\mathfrak{O}(p)$.
This is analogous to the well-known action of $\Cl(\mathfrak{O})$ on the set of ordinary elliptic curves whose endomorphism rings are isomorphic to $\mathfrak{O}$.

Let us describe precisely how $\Cl(\mathfrak{O})$ acts on $SS_\mathfrak{O}(p)$.\\

We consider the action of an invertible $\mathfrak{O}$-ideal $\mathfrak{a}$ prime to $p$ on a primitively $\mathfrak{O}$-oriented elliptic curve $(E,\iota) \in SS_\mathfrak{O}(p)$.
First, we consider the finite subgroup $E[\mathfrak{a}]$ of $E$, called the \textbf{$\mathfrak{a}$-torsion} of $E$, given by
$$
E[\mathfrak{a}] := \bigcap_{\alpha \in \mathfrak{a}} \ker \iota(\alpha).
$$
It induces a separable isogeny $\varphi_\mathfrak{a} : E \rightarrow E/E[\mathfrak{a}]$ of kernel $E[\mathfrak{a}]$.
We call this isogeny $\varphi_\mathfrak{a}$ the \textbf{$\mathfrak{a}$-multiplication} and its image curve $E/E[\mathfrak{a}]$, also denoted $E^\mathfrak{a}$, the \textbf{$\mathfrak{a}$-transform}.
Then the action of $\mathfrak{a}$ on $(E,\iota)$ is the primitively $\mathfrak{O}$-oriented supersingular elliptic curve $(E^\mathfrak{a},(\varphi_\mathfrak{a})_*(\iota))$ up to $K$-isomorphism.
By factorisation, we get the whole action of $\Cl(\mathfrak{O})$ on $SS_\mathfrak{O}(p)$.

\begin{proposition}[\hspace{-0.45pt}\cite{onuki_oriented_2021}]
\label{pro_groupaction}
The class group $\Cl(\mathfrak{O})$ acts over $SS_\mathfrak{O}(p)$ freely and has at most two orbits.
We denote this action as
\begin{align*}
	\Cl(\mathfrak{O}) \times SS_\mathfrak{O}(p) &\rightarrow SS_\mathfrak{O}(p) \\
	([\mathfrak{a}],(E,\iota)) &\mapsto \mathfrak{a} \star (E,\iota) := (E^\mathfrak{a}, (\varphi_\mathfrak{a})_*(\iota)).
\end{align*}
In addition, for any given orbit $O$ and any given primitively $\mathfrak{O}$-oriented supersingular elliptic curve $(E,\iota)$, either $(E,\iota)$ or its $\mathfrak{O}$-twist $(E,\bar{\iota})$, where $\bar{\iota}(\alpha) = \iota(\bar{\alpha})$, is in $O$.
\end{proposition}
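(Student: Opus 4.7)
The plan is to prove in turn that the action is well-defined on classes, that it is free, and that it has at most two orbits with the claimed twist description.

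Well-definedness is the observation that if $\mathfrak{a}=(\alpha)\mathfrak{b}$ with $\alpha\in K^\times$, then the endomorphism $\iota(\alpha)$ factors through $\varphi_\mathfrak{b}$ and provides a $K$-oriented isomorphism from $(E^\mathfrak{b},(\varphi_\mathfrak{b})_*(\iota))$ onto $(E^\mathfrak{a},(\varphi_\mathfrak{a})_*(\iota))$. That the pushforward $(\varphi_\mathfrak{a})_*(\iota)$ still lands in $\End(E^\mathfrak{a})$ and remains primitive follows from the invertibility of $\mathfrak{a}$, since $\varphi_\mathfrak{a}\circ\iota(\kappa)=(\varphi_\mathfrak{a})_*(\iota)(\kappa)\circ\varphi_\mathfrak{a}$ on all $\kappa\in\mathfrak{O}$ and a primitive embedding is detected fibrewise after tensoring with $K$. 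The group axioms reduce to the factorisation $\varphi_{\mathfrak{a}\mathfrak{b}}=\varphi_\mathfrak{b}'\circ\varphi_\mathfrak{a}$ on $E^\mathfrak{a}$, which itself comes from the inclusion $E[\mathfrak{a}]\subseteq E[\mathfrak{a}\mathfrak{b}]$.

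For freeness, suppose $\mathfrak{a}\star(E,\iota)\cong(E,\iota)$ via a $K$-oriented isomorphism $\psi\colon E^\mathfrak{a}\to E$. Then $\theta:=\psi\circ\varphi_\mathfrak{a}\in\End(E)$ commutes with $\iota(\mathfrak{O})$, so it lies in the commutant of the quadratic subfield $\iota(K)$ inside $B_{p,\infty}$, which is $\iota(K)$ itself. Primitivity of $\iota$ then forces $\theta=\iota(\alpha)$ for some $\alpha\in\mathfrak{O}$, and comparing kernels gives $\ker\iota(\alpha)=E[\mathfrak{a}]$, hence $(\alpha)=\mathfrak{a}$ in $\mathfrak{O}$ by the ideal-kernel correspondence; so $[\mathfrak{a}]$ is trivial in $\Cl(\mathfrak{O})$.

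The main obstacle is bounding the number of orbits by two, and for this I would invoke the oriented Deuring lifting theorem. Every $(E,\iota)\in SS_\mathfrak{O}(p)$ lifts to an $\mathfrak{O}$-CM elliptic curve $(\tilde E,\tilde\iota)$ over the ring class field $H_\mathfrak{O}$; classical CM theory makes $\Cl(\mathfrak{O})$ act simply transitively on such lifts, and reduction modulo a prime above $p$ intertwines this action with $\star$ and, because $p$ is non-split in $K$, preserves supersingularity. A given $(E,\iota)$ has exactly two lifts compatible with its orientation, namely $\tilde\iota$ and its complex conjugate $\bar{\tilde\iota}$, the latter reducing to the $\mathfrak{O}$-twist $(E,\bar\iota)$. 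Pushing the characteristic-zero orbits down through reduction therefore yields at most two orbits in $SS_\mathfrak{O}(p)$, and any given oriented curve is equivalent to $(E,\iota)$ or to $(E,\bar\iota)$. Making the bookkeeping precise requires tracking the action of the decomposition subgroup of $p$ in $\mathrm{Gal}(H_\mathfrak{O}/K)$ alongside complex conjugation on orientations; this is the technical crux, and I would follow Onuki's treatment in \cite{onuki_oriented_2021} to secure it.
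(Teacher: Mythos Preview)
Your proposal is essentially correct, but it takes a different route from the paper. The paper's proof consists solely of citations: it invokes \cite[Theorem~3.4]{onuki_oriented_2021} for the free action with at most two orbits, and points to the proof of \cite[Proposition~3.3]{onuki_oriented_2021} for the twist statement. No independent argument is given, which is natural since the proposition is explicitly attributed to Onuki in its header.

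You, by contrast, sketch the mathematics underlying Onuki's results: well-definedness via principal-ideal isomorphisms, freeness via the centraliser of $\iota(K)$ in $B_{p,\infty}$, and the orbit bound via Deuring lifting and CM theory in characteristic zero. Your outline is sound; the freeness argument in particular is clean and correct. A few steps are hand-waved (the preservation of primitivity under pushforward, and the precise ideal--kernel correspondence for non-maximal orders, both need care), and for the two-orbit bound you yourself defer to Onuki's treatment for the bookkeeping with the decomposition group and complex conjugation. So both approaches ultimately rest on \cite{onuki_oriented_2021}; yours unpacks more of the argument, the paper's is a bare citation. For a proposition labelled as a quotation of Onuki, the paper's minimalism is appropriate; your version would suit an exposition that wanted to be more self-contained.
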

\end{subsection}

\begin{proof}
This proposition is obtained from \cite[Theorem 3.4]{onuki_oriented_2021} and from \cite[Proposition 3.3]{onuki_oriented_2021}.
In particular, inside the proof of \cite[Proposition 3.3]{onuki_oriented_2021}, it is shown that either $(E,\iota)$ or $(E,\bar{\iota})$ is in a given orbit. 
\end{proof}

\begin{subsection}{Representation of isogenies}
\label{subsec_encoding}

\begin{definition}[Efficient representation, following {\cite[Definition 1.3]{wesolowski_random_2024}}]\label{def:efficient-representation}
Let $\mathscr A$ be a polynomial time algorithm. It is an \textbf{efficient isogeny evaluator} if for any $D \in \{0,1\}^*$ such that $\mathscr A(\mathtt{validity}, D)$ outputs~$\top$,
there exists an isogeny $\varphi: E \to E'$ (defined over some finite field $\mathbb F_q$) such that:
\begin{enumerate}
\item on input $(\mathtt{curves}, D)$, $\mathscr A$ returns $(E,E')$,
\item on input $(\mathtt{degree}, D)$, $\mathscr A$ returns $\deg(\varphi)$,
\item on input $(\mathtt{eval},D,P)$ with $P\in E(\mathbb F_{q^k})$, $\mathscr A$ returns $\varphi(P)$.
\end{enumerate}
If furthermore $D$ is of polynomial size in $\log(\deg \varphi)$ and $\log q$, then $D$ is an \textbf{efficient representation} of $\varphi$ (with respect to $\mathscr A$).
\end{definition}

There are several ways of representing efficiently isogenies.
For instance, by considering an isogeny $\varphi$ as a chain of smaller degree isogenies $\varphi_1 \circ \dots \circ \varphi_n$, one can represent it by the list of kernels $\ker(\varphi_i)_{i=1}^n$ and use Vélu's formulae as the main procedure of the efficient isogeny evaluator.
When the degree $\varphi$ is smooth, it then is possible to find a suitable chain of isogenies to represent $\varphi$ efficiently with this method.
This is an important example as it was one of the most used representation in isogeny-based cryptography before SIDH's attack.
Thanks to higher dimensional isogenies, we now have access to interpolation algorithms without any smoothness constraints.
One can for instance use \cite{EC:Robert23}, or directly the results of Section~\ref{sec_higher}, to efficiently represent an isogeny from its image of a large enough subgroup.

For a more exhaustive list of efficient ways to represent isogenies, we refer the reader to \cite{robert_efficient_2024}.
One can consider a single general efficient isogeny evaluator which, depending of some bits of the input, uses a different subalgorithm (higher dimensional interpolation, evaluation of the isogenies as a chain, etc.). Therefore, in the rest of the article, the algorithm $\mathscr A$ is left implicit, and we simply say that an isogeny $\varphi$ is in efficient representation.

\begin{definition}[Efficient representation of orientation]
Let $(E,\iota)$ be an $\mathfrak{O}$-orientated elliptic curve.
An \textbf{efficient representation} of $\iota$ is a pair $(\omega, D)$ where $\omega$ is a generator of $\mathfrak{O}$, and $D$ is an efficient representation of $\iota(\omega) \in \End(E)$.
\end{definition}

We now define a function $\enc$, introduced in \cite{EC:Wesolowski22}, which returns a unique encoding of the $K$-isomorphism class of a $\mathfrak{O}$-oriented elliptic curve.
It takes as input a representation of an oriented elliptic curve $(E,\iota) \in SS_\mathfrak{O}(p)$ and returns a unique triple $(E,P,Q)$ assuming that we have fixed in advance:

\begin{itemize}
	\item A canonical representative for each $\overline{\mathbb F}_p$-isomorphism class of elliptic curves over $\overline{\mathbb F}_p$;
		for instance, the curve of equation $E: y^2 + xy = x^3 - (36x + 1)/(j(E) - 1728)$ for any $j(E) \not \in \{ 0,1728 \}$, see \cite[page 52]{silverman_arithmetic_1986}, 
	\item A generator $\omega$ of $\mathfrak{O}$, typically one with the smallest possible norm,
	\item A deterministic procedure that takes as input an elliptic curve $E$ in canonical form and returns a point $P \in E$ of order greater than $4N(\omega)$.
\end{itemize}

Then the map $\enc: (E,\iota) \mapsto (E,P,Q)$ is given by constructing the point $P$ of order greater than $4 N(\omega)$ using the deterministic procedure and setting $Q$ to be $\iota(\omega)(P)$.
As shown in \cite{EC:Wesolowski22}, this encoding is an unique encoding of the $K$-isomorphism class of $(E,\iota)$.
Moreover, when $\iota$ is efficiently represented, the encoding $\enc(E,\iota)$ can be computed in polynomial time.
Thus checking if two $\mathfrak{O}$-oriented elliptic curves are $K$-isomorphic is done in polynomial time using $\enc$.\\

When the $j$-invariant of the oriented curve is $0$ or $1728$, one needs to use another canonical form, see \cite[page 52]{silverman_arithmetic_1986}, for instance
$$
	E: y^2 + y = x^3 \text{, if } j = 0
$$
and 
$$
	E: y^2 = x^3 + x \text{, if } j = 1728.
$$
In these cases, one also needs to consider the non-trivial automorphisms of the elliptic curve and thus to replace $Q$ by the set $\{ (\sigma_* \iota) (\omega) (P) | \sigma \in \Aut(E) \}$.

Finally, we define the image of any set $S$ of oriented supersingular elliptic curves by $\enc$ as the set of their unique encoding by $\enc$, denoted $\enc(S)$.

This unique encoding of $K$-isomorphism class of $\mathfrak{O}$-oriented elliptic curves is a crucial ingredient for Section \ref{sec_vect}:
First, it provides an efficient method for checking whether two $\mathfrak{O}$-oriented elliptic curves belong to the same class, which is central to finding collisions in the meet-in-the-middle approach for the classical resolution of the $\mathfrak{O}$-\Vectorisation\ problem.
Second, to properly use the Kuperberg's algorithm and quantumly solve $\mathfrak{O}$-\Vectorisation, it is essential to have a unique representation of the oriented elliptic curves we are dealing with, i.e. up to $K$-isomorphism.

\end{subsection}

In this paper, unless otherwise specified, when an algorithm takes as input an isogeny, we mean that the isogeny is given with an efficient representation.
It is also the case for orientations taken as input.

\end{section}
\begin{section}{The endomorphism ring problem and its friends}
\label{sec_endring}
One of the central problems in (supersingular) isogeny-based cryptography is the following \IsogenyPath\ problem, where $\ell$ is a prime number.

\begin{problem}[\IsogenyPath]
Given two supersingular elliptic curves $E$ and $E'$ over $\mathbb{F}_{p^2}$ and a prime $\ell \not = p$, find a chain of $\ell$-isogenies from $E$ to $E'$.
\end{problem}

The \IsogenyPath\ problem is considered to be the fundamental problem at the heart of isogeny-based cryptography. 
This problem has been shown to be equivalent, first under some heuristics \cite{EC:EHLMP18} then only under GRH \cite{wesolowski_supersingular_2022}, to the problem of finding the structure of the endomorphism ring of a supersingular elliptic curve.
This second problem is called the endomorphism ring problem; here we refer to it as the \EndRing\ problem.
Since for any supersingular elliptic curve $E$ defined over a finite field of characteristic $p$, $\End(E)$ is isomorphic to a maximal order of the quaternion algebra $B_{p,\infty}$, the \EndRing\ problem comes in two flavors.
One can either look for four isogenies generating $\End(E)$ as a lattice or for four quaternions generating a maximal order which is isomorphic to $\End(E)$.
The notion of $\varepsilon$-basis unifies those approaches under GRH, see \cite{wesolowski_supersingular_2022}.

\begin{definition}[$\varepsilon$-basis]
Let $\varepsilon : B_{p,\infty} \rightarrow \End(E)\otimes_\mathbb{Z} \mathbb{Q}$ be an isomorphism and $L \subseteq B_{p,\infty}$ be a lattice.
We call a pair $(\alpha,\theta)$, where $(\alpha_i)^{\rank{L}}_{i=1}$ is a basis of $L$ and $\theta_i = \varepsilon(\alpha_i)$, an $\varepsilon$-basis of $L$.
The pair $(\alpha,\theta)$ will be called an $\varepsilon$-basis of $\varepsilon(L)$ as well.
\end{definition}

\begin{problem}[\EndRing]
Given a supersingular elliptic curve $E$ over $\mathbb{F}_{p^2}$, find an $\varepsilon$-basis of $\End(E)$.
\end{problem}

The current best classical algorithms to solve \EndRing\ run in expected time $\tilde{O}(p^{1/2})$, see for instance \cite{eisentrager_computing_2020} (or~\cite[Theorem~8.8]{EC:PagWes24} for a heuristic-free algorithm), and the best quantum algorithms have complexity in $\tilde{O}(p^{1/4})$, see for example \cite{INDOCRYPT:BiaJaoSan14}.

The notion of orientation (see Section \ref{subsect_orientations}), introduced by Colò and Kohel in \cite{colo_orienting_2020}, induces a variant of \EndRing\ where partial information on the endomorphism ring is given.

\begin{problem}[$\mathfrak{O}$-\EndRing]
Given a primitively $\mathfrak{O}$-oriented supersingular elliptic curve $(E,\iota)$ over a finite field of characteristic $p$, find an $\varepsilon$-basis of $\End(E)$.
\end{problem}

The study of this problem is not only important to see how the complexity of $\EndRing$ is impacted by the knowledge of a single non-trivial endomorphism but also because it is in fact equivalent, under GRH, to the $\mathfrak{O}$-\Vectorisation\ problem \cite{EC:Wesolowski22}.

\begin{problem}[$\mathfrak{O}$-\Vectorisation]
Given $(E,\iota),(E',\iota') \in SS_\mathfrak{O}(p)$ two oriented supersingular elliptic curves, find an $\mathfrak{O}$-ideal $\mathfrak{a}$ such that $E^\mathfrak{a} \simeq E'$.
\end{problem}

This hardness of the problem $\mathfrak{O}$-\Vectorisation\ measures whether or not an action induced by a primitive orientation is a one-way function.
Hence, recovering the keys of \texttt{CSIDH}-like protocols, such as \cite{AC:CLMPR18,chenu_higher-degree_2022,PKC:DFKLMP23}, reduces to $\mathfrak{O}$-\Vectorisation.\\

In this paper, we only need the following reduction between the two problems.

\begin{proposition}[\textbf{GRH}, Proposition 7 in \cite{EC:Wesolowski22}]
\label{pro_OEndRingreduction}
Given the factorisation of $\disc(\mathfrak{O})$, the $\mathfrak{O}$-\EndRing\ problem reduces to $\mathfrak{O}$-\Vectorisation\  in probabilistic polynomial time in the length of the instance.
\end{proposition}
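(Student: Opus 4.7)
The plan is to transport a known endomorphism ring along an isogeny produced by the $\mathfrak{O}$-\Vectorisation\ oracle. More precisely, given a primitively $\mathfrak{O}$-oriented curve $(E,\iota) \in SS_\mathfrak{O}(p)$, I would first build an ``easy'' endpoint $(E_0,\iota_0) \in SS_\mathfrak{O}(p)$ for which an $\varepsilon$-basis of $\End(E_0)$ is already in hand, then use the oracle to connect $(E_0,\iota_0)$ to $(E,\iota)$ by a class-group element, realise that element as an explicit isogeny, and finally conjugate the basis through that isogeny.

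Step one: build $(E_0,\iota_0)$ with $\End(E_0)$ known. Start from a fixed supersingular curve whose endomorphism ring is explicitly computable, for instance one of the special curves with $j \in \{0,1728\}$ whose maximal orders in $B_{p,\infty}$ are given by closed formulas. Using the prime factorisation of $\disc(\mathfrak{O})$, one can embed $\mathfrak{O}$ into this maximal order by quaternion arithmetic (solving the norm equation for a generator of $\mathfrak{O}$ is polynomial time when this factorisation is known). Primitivise this embedding if needed (which is easy here because the starting order is under control) and, by Proposition~\ref{pro_groupaction}, possibly replace $\iota_0$ by its conjugate $\bar\iota_0$ so that $(E_0,\iota_0)$ lies in the same $\Cl(\mathfrak{O})$-orbit as $(E,\iota)$; the correct orbit is detected by running the oracle on both candidates.

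Step two: obtain the connecting ideal and turn it into an isogeny. Call the $\mathfrak{O}$-\Vectorisation\ oracle on $((E_0,\iota_0),(E,\iota))$ to receive an invertible ideal $\mathfrak{a}$ with $E_0^{\mathfrak{a}} \simeq E$ as oriented curves. Under \textbf{GRH} and given the factorisation of $\disc(\mathfrak{O})$, the structure of $\Cl(\mathfrak{O})$ can be computed in polynomial time, and the class $[\mathfrak{a}]$ can be rewritten as a product $[\mathfrak{p}_1]\cdots[\mathfrak{p}_r]$ of prime ideal classes of norm bounded by a polynomial in $\log|\disc(\mathfrak{O})|$. Applying the class-group action prime by prime to $(E_0,\iota_0)$ produces an efficient representation of an isogeny $\varphi_{\mathfrak{a}}: E_0 \to E$, together with the identification $(\varphi_{\mathfrak{a}})_*\iota_0 = \iota$.

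Step three: conjugate. For each basis element $\theta_i$ of $\End(E_0)$, form the quaternion $\varphi_{\mathfrak{a}} \circ \theta_i \circ \hat\varphi_{\mathfrak{a}} \otimes \frac{1}{\deg \varphi_{\mathfrak{a}}}$ in $\End(E) \otimes \mathbb{Q}$; this yields an $\varepsilon$-basis of $\mathfrak{a}^{-1}\End(E_0)\mathfrak{a}$, which coincides with $\End(E)$ as a quaternion order inside $B_{p,\infty}$. A final lattice reduction produces a well-behaved $\varepsilon$-basis of $\End(E)$. The main obstacle is step two: one needs to produce a short ideal representative of $[\mathfrak{a}]$ and then actually evaluate the resulting chain of small prime-ideal actions as a single isogeny in efficient representation. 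The factorisation of $\disc(\mathfrak{O})$ is used precisely here, to perform class-group structure and discrete-logarithm computations in polynomial time under \textbf{GRH}; every other subroutine (quaternion embedding, primitivisation of the starting orientation, twist detection, conjugation and lattice reduction) is standard polynomial-time quaternion and lattice machinery.
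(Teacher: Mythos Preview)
The paper does not prove this proposition; it simply imports it from \cite[Proposition~7]{EC:Wesolowski22}. Your overall architecture (produce an $\mathfrak{O}$-oriented base curve with known endomorphism ring, vectorise, transport) matches the cited proof, but Step~2 contains a genuine error.

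You assert that, under GRH and with $\disc(\mathfrak{O})$ factored, ``the structure of $\Cl(\mathfrak{O})$ can be computed in polynomial time, and the class $[\mathfrak{a}]$ can be rewritten as a product $[\mathfrak{p}_1]\cdots[\mathfrak{p}_r]$ of prime ideal classes of small norm''. Neither of these is known classically: computing $\Cl(\mathfrak{O})$ and solving discrete logarithms in it are subexponential problems (Hafner--McCurley), and the factored discriminant does not help. Hence your route to an explicit $\varphi_{\mathfrak{a}}$ is blocked, and without $\varphi_{\mathfrak{a}}$ your conjugation in Step~3 cannot be carried out as written.

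The proof in \cite{EC:Wesolowski22} avoids evaluating $\varphi_{\mathfrak{a}}$ altogether by staying on the quaternion side. With $\End(E_0)\cong\mathcal O_0\subset B_{p,\infty}$ and the embedding $\iota_0:\mathfrak{O}\hookrightarrow\mathcal O_0$ in hand, the $\mathfrak{O}$-ideal $\mathfrak{a}$ returned by the oracle is pushed to the left $\mathcal O_0$-ideal $I=\mathcal O_0\cdot\iota_0(\mathfrak{a})$; its right order is, by the Deuring correspondence, isomorphic to $\End(E_0^{\mathfrak{a}})\cong\End(E)$. Computing a right order is polynomial-time lattice arithmetic. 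One then upgrades this maximal order to an $\varepsilon$-basis via the GRH-based equivalence between \textsc{MaxOrder} and \EndRing. The factorisation of $\disc(\mathfrak{O})$ is consumed in Step~1 (building the initial optimally $\mathfrak{O}$-embedded maximal order, which requires representing integers by quadratic forms), not in Step~2.
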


In the current state of the art, the $\mathfrak{O}$-\Vectorisation\  problem can heuristically be solved in expected classical time $l^{O(1)}|\disc(\mathfrak{O})|^{1/4}$, with $l$ the length of the input, using for instance a meet-in-middle approach in a similar way as presented in \cite{DCC:DelGal16}.
Quantumly, $\mathfrak{O}$-\Vectorisation\  can heuristically be solved in subexponential time in the length of the discriminant of $\mathfrak{O}$, see \cite[Proposition 4]{EC:Wesolowski22}.
Hence, knowing an orientation seems to make a significant difference in the expected runtime to solve \EndRing.

However, those resolutions and reductions need the orientation to be primitive and assume several heuristics.
When the orientation is not primitive, this is equivalent to knowing one non-trivial endomorphism of the curve.
Obviously, knowing an orientation also gives the knowledge of a non-trivial endomorphism.
In the other direction, given a non-trivial endomorphism, one can compute in polynomial time its degree and trace \cite[Proposition 81]{kohel_endomorphism_1996} and deduce a quadratic number $\alpha$ such that $\mathbb{Z}[\alpha] \hookrightarrow \End(E)$ is an orientation.\\

We introduce a variant of $\mathfrak{O}$-\EndRing\ where the given orientation is not required to be primitive.

\begin{problem}[$\alpha$-\EndRing]
Given a supersingular elliptic curve $E$ over $\mathbb{F}_{p^2}$ and an orientation $\iota: \mathbb{Z}[\alpha] \hookrightarrow \End(E)$, find an $\varepsilon$-basis of $\End(E)$.
\end{problem}

The following \Primitivisation\ problem has been introduced in \cite{arpin_orienteering_2023} as a hard problem.
It forms a bridge between the $\alpha$-\EndRing\ and $\mathfrak{O}$-\EndRing\ problems.

\begin{problem}[\Primitivisation]
Given a supersingular elliptic curve $E$ over $\mathbb{F}_{p^2}$ and an orientation $\iota: \mathbb{Z}[\alpha] \hookrightarrow \End(E)$, find a primitive orientation $\iota' : \mathfrak{O} \hookrightarrow \End(E)$ such that the order $\mathbb{Z}[\alpha]$ is contained in the order $\mathfrak{O}$.
\end{problem}

In \cite{arpin_orienteering_2023} the authors also give a quantum algorithm for solving it in subexponential time under some heuristics, and conjecture that it is hard to solve in general.
Moreover, they give a quantum algorithm to solve the \IsogenyPath\ problem given non-primitive orientation that uses their Primitivisation algorithm as a subprocedure.
Inevitably, their algorithm inherits the need for heuristics of the subprocedure.
In Section \ref{sec_higher}, we develop tools involving higher dimensional isogenies.
These tools are used in Section \ref{sec_prim} in an algorithm solving \Primitivisation\ for an orientation $\iota: \mathbb{Z}[\alpha] \hookrightarrow \End(E)$ in classical polynomial time given the factorisation of $\disc(\mathbb{Z}[\alpha])$.
It directly yields a classical subexponential and a quantum polynomial reduction of $\alpha$-\EndRing\ to $\mathfrak{O}$-\EndRing.

This implies, together with Proposition \ref{pro_OEndRingreduction}, reductions of the $\alpha$-\EndRing\ problem to the $\mathfrak{O}$-\Vectorisation\ problem.
This reasoning is formalized in Section \ref{sec_vect} together with a rigorous analysis of the complexity of $\mathfrak{O}$-\Vectorisation.
\end{section}

\begin{section}{Efficient division of isogenies}
\label{sec_higher}

In this section, we discuss higher dimensional isogenies and how they can be used to efficiently divide isogenies by integers.
The goal of this section is to prove Theorem~\ref{cor_division} below (and its more precise formulation Theorem~\ref{the_division}).
\begin{theorem}
\label{cor_division}
Algorithm \ref{alg_division} takes as input
\begin{itemize}
	\item Two elliptic curves $E_1$ and $E_2$ defined over $\mathbb{F}_{p^k}$,
	\item An isogeny $\varphi: E_1 \rightarrow E_2$ over $\mathbb{F}_{p^k}$ in efficient representation,
	\item An integer $n < \deg \varphi$,
\end{itemize}
and returns an efficient representation of $\varphi/n$ if this quotient is an isogeny (and otherwise returns \texttt{False}), and runs in time polynomial in $k \log p$ and $\log \deg(\varphi)$.

More precisely, this returned representation of $\varphi/n$ is of size $O(k \log(p) \log^3(\deg \varphi))$ and 
allows one to evaluate it at any point in $\tilde{O}(\log^{11}(\deg \varphi))$ operations 
over its field of definition.
\end{theorem}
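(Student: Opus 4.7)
The strategy combines an elementary evaluation formula for $\psi = \varphi/n$ on torsion coprime to $n$ with the higher-dimensional isogeny interpolation theorem of \cite{EC:Robert23}. Suppose first that $\psi$ is well-defined, that is, $E_1[n] \subseteq \ker\varphi$. For any integer $N$ coprime to $n$ and any $P \in E_1[N]$, let $m = n^{-1} \bmod N$; then $[nm]P = P$, hence
\[
\psi(P) \;=\; \psi([n][m]P) \;=\; \varphi([m]P).
\]
So $\psi$ can be evaluated at any $N$-torsion point at the cost of one scalar multiplication on $E_1$ and a single $\varphi$-evaluation, entirely avoiding the (potentially inaccessible) $n$-torsion of $E_1$ which would otherwise live over a large field extension. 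This is the key observation behind the polynomial-time cost.

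From here the plan is as follows. First, pick an auxiliary $N$ coprime to $n$, of polynomial bit-length, with $N^2 > 4\deg\psi$, and such that $E_1[N]$ is defined over a small extension of $\mathbb{F}_{p^k}$ (controlled by $\delta_{E_1}(N)$). Compute a basis $(P,Q)$ of $E_1[N]$ and the images $\psi(P), \psi(Q)$ by the formula above. Feed these data into the higher-dimensional interpolation: via Lagrange's four-square theorem write $N^2 - \deg\psi = a_1^2 + a_2^2 + a_3^2 + a_4^2$, and assemble an $8$-dimensional isogeny $F$ through Kani's matrix construction whose kernel is entirely determined by the torsion images $\psi(P), \psi(Q)$. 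Compute $F$ by the standard theta-model pipeline. The isogeny $F$ contains $\psi$ as one of its components and is the sought efficient representation of $\psi$: the size $O(k\log p \cdot \log^3 \deg\varphi)$ and evaluation cost $\tilde O(\log^{11}\deg\varphi)$ in the statement come directly from the bookkeeping of $8$-dimensional theta arithmetic. The divisibility test is folded in: the $F$ obtained is an honest isogeny exactly when $\psi$ exists, so failure of this check during the construction yields the \texttt{False} output.

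The main obstacle is the arithmetic existence of the auxiliary modulus $N$: it must simultaneously be coprime to $n$, of polynomial bit-length, admit accessible $N$-torsion (small $\delta_{E_1}(N)$), and be large enough to interpolate $\psi$. Existence of such $N$ built from small primes of suitable splitting behaviour is essentially classical, but ensuring it uniformly, independently of the arithmetic of $n$ and $\deg\varphi$, requires a careful selection procedure filtering primes both for rationality of their torsion and for invertibility of $n$. A second delicate point is that the division must be carried out in one shot, rather than iterated prime-by-prime over the factorisation of $n$, in order to prevent the representation size from compounding; this is exactly what the one-step Kani construction above achieves, and is why the resulting bounds are polynomial in $\log n$ rather than depending on the number of prime factors of $n$.
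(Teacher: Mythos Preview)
Your approach is essentially the paper's: evaluate $\psi=\varphi/n$ on $N'$-torsion via $\psi(P)=\varphi([n^{-1}\bmod N']P)$, embed in dimension $8$ via Kani/Zarhin, and choose $N'$ as a product of the first few primes coprime to $p\deg\varphi$ so that $N'$ is $O(\log\deg\varphi)$-powersmooth with $\log N'=O(\log\deg\varphi)$. Two technical slips to correct: the four-square decomposition is of $N'-\deg\psi$, not $N'^2-\deg\psi$, and the paper takes $N'>\deg\varphi$ (not merely $N'^2>4\deg\psi$) so that the final check $n\cdot(\text{candidate})=\varphi$ on $E_1[N']$ pins the candidate down uniquely; and the kernel you assemble is \emph{always} maximal isotropic (this is the content of Lemma~\ref{lem_maximalisotropic}), so $F$ is always an honest $N'$-isogeny --- the divisibility test is not ``$F$ fails to exist'' but rather whether the codomain of $F$ is isomorphic to $(E_1^4\times E_2^4,\lambda_{E_1^4\times E_2^4})$ and whether, after composing with that isomorphism and a suitable automorphism, some matrix entry of $F$ agrees with $\varphi$ on the $N'$-torsion after multiplication by $n$.
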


Before proving it, let us state an immediate corollary.
\begin{corollary}[General division of isogenies]
\label{cor_division_general}
There is an algorithm which on input
\begin{itemize}
	\item Three elliptic curves $E_1$, $E_2$ and $E_3$ defined over $\mathbb{F}_{p^k}$,
	\item Two isogenies $\varphi: E_1 \rightarrow E_2$, $\eta: E_1 \rightarrow E_3$ over $\mathbb{F}_{p^k}$ in efficient representation,
\end{itemize}
returns an efficient representation of the isogeny $\psi$ such that $\varphi = \psi \circ \eta$ if it exists (and otherwise returns \texttt{False}), and runs in time polynomial in the length of the input.
\end{corollary}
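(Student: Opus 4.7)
The plan is to reduce the general division problem to the integer case handled by Theorem~\ref{cor_division}, via the identity $\hat\eta \circ \eta = [d]$, where $d := \deg \eta$. Indeed, if $\psi : E_3 \to E_2$ satisfies $\varphi = \psi \circ \eta$, then
$$\varphi \circ \hat\eta \;=\; \psi \circ \eta \circ \hat\eta \;=\; [d] \circ \psi,$$
so $\psi$ is exactly the quotient $(\varphi \circ \hat\eta)/d$ in the sense of Theorem~\ref{cor_division}. Conversely, such a $\psi$ exists if and only if $\ker \eta \subseteq \ker \varphi$, which in turn is equivalent to $E_3[d] \subseteq \ker(\varphi \circ \hat\eta)$: from $\eta \circ \hat\eta = [d]$ one reads off $\hat\eta^{-1}(\ker \eta) = E_3[d]$, and since $\hat\eta$ is surjective, $\hat\eta(E_3[d]) = \ker \eta$. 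This last inclusion is exactly the condition for $(\varphi \circ \hat\eta)/d$ to be a genuine isogeny.

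Based on this, the algorithm I propose runs in four steps. First, read off $d = \deg \eta$ from the efficient representation of $\eta$. Second, build an efficient representation of $\hat\eta : E_3 \to E_1$ out of the one of $\eta$. Third, form the composition $\chi := \varphi \circ \hat\eta$, an isogeny $E_3 \to E_2$ of degree $d \cdot \deg \varphi$, in efficient representation by simply piping evaluators. Fourth, invoke Algorithm~\ref{alg_division} on $\chi$ and the integer $d$, and return its output verbatim: when $\psi$ exists this is an efficient representation of $\psi$, and otherwise the algorithm returns \texttt{False}.

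The main obstacle will be the second step, namely producing the dual in efficient representation. The framework recalled in Section~\ref{subsec_encoding} ought to be closed under duals — each concrete form considered (chains of small-degree isogenies evaluated via Vélu, higher-dimensional interpolation \emph{à la}~\cite{EC:Robert23,robert_efficient_2024}) admits a polynomial-time dual — but this closure needs to be made uniform across the various sub-evaluators that the generic evaluator $\mathscr A$ dispatches to. Granting this, correctness is immediate from the first paragraph, and each of the four steps is polynomial in the length of the input (composition only doubles the description size; Theorem~\ref{cor_division} is polynomial in $\log(d \cdot \deg \varphi) = O(\log \deg \varphi + \log \deg \eta)$), so the overall algorithm runs in polynomial time, as claimed.
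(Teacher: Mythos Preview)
Your proposal is correct and matches the paper's approach exactly: form $\tilde\varphi = \varphi \circ \hat\eta$ and divide by $n = \deg\eta$ via Theorem~\ref{cor_division}. The paper's own proof is two lines and does not address the dual-representation issue you flag; it simply takes for granted that $\tilde\varphi$ is available in efficient representation and applies the theorem, so your extra caution there goes beyond what the paper provides.
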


\begin{proof}
Apply Theorem~\ref{cor_division} to the isogeny $\tilde \varphi = \varphi \circ \hat \eta$ and the integer $n = \deg(\eta)$. The isogeny $\psi$, if it exists, is precisely $\tilde \varphi/n$.
\end{proof}

The machinery of higher dimensional isogenies is only used in this section, 
and the reader may admit the main theorem and skip the rest of the section without impairing global understanding.
Some definitions require notions of algebraic geometry which will not be recalled here.
If necessary, the reader may refer to \cite{milne_abelian_1986}.
Let us emphasise that the ideas underlying Theorem~\ref{cor_division} and its proof originate from \cite{robert_applications_2022}.
Theorem~\ref{cor_division} and its proof are simply expressed in higher generality and greater detail than \cite{robert_applications_2022} provides.
\\

Initially, the idea of exploiting higher dimensional isogenies for efficient computation of isogenies between elliptic curves was introduced by Castryck and Decru to attack SIDH, see \cite{EC:CasDec23}. 
Among other ingredients, this attack relies on the generalization of Vélu's formulae by Lubicz and Robert, see \cite{lubicz_computing_2012}.
The attack of Castrick and Decru has since been developed further, notably by Robert who generalised the attack \cite{EC:Robert23} and found other applications to elliptic curves, see \cite{robert_evaluating_2022} and \cite{robert_applications_2022}.\\

We now turn to a presentation of principally polarised abelian varieties \textemdash \: which constitute the suitable generalisation of elliptic curves to higher dimensions, see e.g. \cite{milne_abelian_1986}.

For any abelian varieties $A$ and $A'$ and any isogeny $\varphi : A \rightarrow A'$, the dual variety of $A$ is denoted by $\hat{A}$ and the dual isogeny of $\varphi$ is denoted by $\hat{\varphi} : \hat{A'} \rightarrow \hat{A}$.
\begin{definition}[(Principally) Polarised abelian varieties]
Let $A$ be an abelian variety.
One can derive from an ample divisor of $A$ an isogeny $\lambda : A \rightarrow \hat{A}$.
Such an isogeny is called a \textbf{polarisation} of $A$.
It is a \textbf{principal polarisation} if $\lambda$ is an isomorphism.
For any (principal) polarisation $\lambda$ of $A$, the pair $(A,\lambda)$ is a \textbf{(principally) polarised abelian variety}. 
\end{definition}

\begin{remark}
Elliptic curves are principally polarised abelian varieties having a unique principal polarisation, simply denoted by $\lambda_E$ for a given elliptic curve $E$.
Hence, there exists a natural principal polarisation over products of elliptic curves induced by the product of the polarisations.
We call this polarisation the \textbf{product polarisation} and denote it $\lambda_{E_1 \times \dots \times E_n}$ for the product of elliptic curve $E_1 \times \dots \times E_n$. 
When we consider a product of elliptic curves as a principally polarised abelian variety without specifying the polarisation, it means that it is polarised by the product polarisation.\\
\end{remark}

In this section, we shall mostly focus on isogenies between products of elliptic curves.

Let $E_1,\dots,E_n,E'_1,\dots,E'_m$ be elliptic curves and $\varphi_{i,j} : E_j \rightarrow E'_i$ be isogenies of elliptic curves where $i \in \llbracket 1,m \rrbracket, j \in \llbracket 1, n \rrbracket$.
From this set of isogenies, we naturally get the following map between products of elliptic curves
\begin{align*}
E_1\times \dots \times E_n &\longrightarrow E'_1 \times \dots \times E'_m\\
(P_1,\dots,P_n) &\longmapsto ( \sum_{j=1}^n \varphi_{1,j}(P_j), \dots ,\sum_{j=1}^n \varphi_{m,j}(P_j)).
\end{align*}
This map can be represented by the matrix ${(\varphi_{i,j})}_{i \in \llbracket 1,m \rrbracket, j \in \llbracket 1,n \rrbracket}$ called the \textbf{matrix form}.
If it has a finite kernel and $n=m$ then it is an isogeny.

Given an isogeny $\varphi : E \rightarrow E'$ between elliptic curves, one can construct an isogeny, denoted $\varphi^{\times n}$, in dimension $n$ from $E^n$ to ${E'}^n$ by setting
$${\varphi^{\times n}}(P) = (\varphi(P_1),\dots,\varphi(P_n)), \forall P = (P_1,\dots,P_n) \in E^n.$$

The matrix form of $\varphi^{\times n}$ is then the identity matrix of dimension $n$ ``multiplied'' by $\varphi$.
Any isogeny $F : E_1 \times \dots \times E_n \rightarrow E'_1 \times \dots \times E'_n$ between two products of elliptic curves can be written using a matrix form with the following injection maps 
\begin{center}
\begin{tabular}{r l l}
	$\tau_j:$& $E_j$ & $\longrightarrow E_1\times \dots \times E_n$\\
	& $P$ & $\longmapsto (\underbrace{0,\dots,0}_{j - 1},P,\underbrace{0,\dots,0}_{n-j})$
\end{tabular}
\end{center}
and projection maps
\begin{center}
\begin{tabular}{r c l} 
	$\pi_i:$& $E'_1 \times \dots \times E'_n $ & $\longrightarrow E'_i$\\
	& $(P_1,\dots,P_n)$ & $\longmapsto P_i$
\end{tabular}
\end{center}
with $i,j \in \llbracket 1,n \rrbracket $.
Indeed, by defining the isogeny $F_{i,j} : E_j \rightarrow E'_1$ as $\pi_i \circ F \circ \tau_j$, for all $i,j \in \llbracket 1,n \rrbracket$, we get
$$
F(P_1,\dots,P_n) = \Big(\sum_{j=1}^n F_{i,j}(P_j)\Big)_{1 \leq i \leq n},
$$
for any $(P_1,\dots,P_n) \in E_1 \times \dots \times E_n$.
We thus define the matrix form of the isogeny $F$ as $M(F) = (F_{i,j})_{i,j \in \llbracket 1,n \rrbracket}$.\\

\begin{remark}
\label{rem_theta}
In this paper, we shall consider only isogenies in higher dimensions whose domain is a product of elliptic curves that are principally polarised by the product polarisation.
However, to compute such isogenies, one needs to use a coordinate system capable of handling any principally polarised abelian varieties of the same dimension.
Currently, the most commonly used coordinate system is the theta model.
In particular, this is the case in the generalisation of the Vélus' formulae we shall use.

The theta coordinates of a product of elliptic curves can be obtained by multiplying the theta coordinates of each elliptic curve in the product.
One can compute theta coordinates of an elliptic curve directly from its $4$-torsion subgroup.
Therefore, converting a principally polarised product of elliptic curves to the theta model is inexpensive.
We shall not delve deeper into the machinery of theta functions here; further information can be found in \cite{EC:DLRW24} and \cite{robert_efficient_2021}.
\end{remark}

Thanks to the previous notations, we can provide a formal definition of what we mean by embedding an isogeny in higher dimensions.
For convenience, we generalise the notion of efficient representation of isogenies between elliptic curves to isogenies between products of elliptic curves.
Mainly, we require an efficient representation of isogenies between elliptic curves to be associated to an algorithm such that one can compute the image of any tuple of points in time polynomial in the length of the representation and in the size of the field over which those points are defined.

\begin{definition}[Embedding representation]
Let $n$ be an integer and $E_i, E'_i$ be elliptic curves for $i \in \llbracket 1,n \rrbracket$. 
Let $\varphi : E \rightarrow E'$ be an isogeny such that $E \in \{E_1,\dots,E_n\}$ and $E' \in \{E'_1,\dots,E'_n\}$.
An \textbf{embedding representation} of $\varphi$ in dimension $n$ is a triplet $(F,i,j)$ associated to a representation of $F$, where $F : E_1 \times \dots \times E_n \rightarrow E_1' \times \dots \times E'_n$, $i,j \in \llbracket 1,n \rrbracket$ and such that $\varphi(P) = \pi_j \circ F \circ \tau_i$ for any $P \in E$.
\end{definition}

We now introduce a notion of duality with respect to the principal polarisations allowing us to define a notion of isogenies between principally polarised abelian varieties behaving in a very similar way to elliptic curve isogenies.

\begin{definition}[$N$-Isogenies]
Let $(A,\lambda)$ and $(A',\lambda')$ be two principally polarised abelian varieties.
Let $\varphi : A \rightarrow A'$ be an isogeny.
We define the \textbf{dual isogeny of $\varphi$ with respect to the principal polarisations} as the isogeny $\tilde{\varphi} := \lambda^{-1} \circ \hat{\varphi} \circ \lambda' : A' \rightarrow A$.
We say that $\varphi : (A,\lambda) \rightarrow (A',\lambda')$ is an \textbf{$\bm{N}$-isogeny of principally polarised abelian varities} if $\tilde{\varphi}\circ \varphi = [N]$.
\end{definition}

Let $M$ be the matrix form of an isogeny between products of elliptic curves.
The \textbf{adjoint matrix} of $M$ is $\tilde{M} := (\hat{M}_{j,i})_{i,j \in \llbracket 1,n\rrbracket}$ which is the transpose of the matrix whose entries are the dual entries of $M$.
The dual isogeny, with respect to the product polarisations, of the isogeny given by $M$ has for matrix form the adjoint matrix of $M$.\\

The notions of \textbf{algorithms of evaluation of isogenies} and \textbf{representations of an isogeny} naturally extend to $N$-isogenies.
Notice that each $N$-isogeny is associated to some principal polarisations and thus algorithms of evaluation of $N$-isogenies also return the principal polarisation of the codomains.\\

Separable isogenies between elliptic curves are determined by their kernel (up to isomorphisms of the target curve). Given a kernel, the corresponding isogeny can be evaluated using Vélu's formulae, see \cite{velu_isogenies_1971}.
We have similar results for $N$-isogenies with $N$ prime to the characteristic of the field of definition.
Indeed, such isogenies are determined by their kernel, which is maximal isotropic, and there exists an analogue to Vélu's formulae for them.
This notion of maximal isotropy is central and requires to introduce the Weil pairing for principally polarised abelian varieties.

\begin{definition}[Polarised Weil pairing]
\label{def_weil}
Let $(A,\lambda)$ be a polarised abelian variety over a field and $N$ be prime to the characteristic of this field.
There exists a \textbf{canonical nondegenerate pairing} $e_N : A[N] \times \hat{A}[N] \rightarrow \mu_N(\bar{\mathbb{F}})$, where $\mu_N(\bar{\mathbb{F}})$ is the group of $N$th roots of $1$ in $\bar{\mathbb{F}}$.
This pairing is called the Weil $N$-pairing.
The \textbf{polarised Weil $N$-pairing} $e_{N,\lambda}$ is then the canonical nondegenerate pairing $A[N] \times A[N] \rightarrow \mu_N(\bar{\mathbb{F}}), (P,Q) \mapsto e_{N}(P,\lambda(Q))$.
\end{definition}

\begin{definition}[Maximal isotropic subgroup]
With the same notations as in Definition \ref{def_weil}.
Let $H$ be a proper subgroup of $A[N]$.
The subgroup $H$ is \textbf{maximal isotropic in $\bm{A[N]}$} if the polarised Weil pairing $e_{N,\lambda}$ is trivial over $H$ but is not over any proper supergroup of $H$.
For an isogeny of $A$ having a maximal isotropic kernel in $A[N]$ is equivalent to be an $N$-isogeny.
\end{definition}

\begin{lemma}[Proposition 1.1 in \cite{kani_number_1997}]
\label{lem_kanibij}
Let $(A,\lambda),(A',\lambda')$ and $(A'',\lambda'')$ be principally polarised abelian varieties such that there exist $\varphi' : (A,\lambda) \rightarrow (A',\lambda')$ and $\varphi'' : (A,\lambda) \rightarrow (A'',\lambda'')$ two $N$-isogenies with $\ker \varphi' = \ker \varphi''$, where $N$ is coprime to the characteristic of the abelian varieties' field of definition.  
Then there is an isomorphism $\gamma$ between $A'$ and $A''$ such that $\varphi'' = \gamma \circ \varphi'$ and $\lambda' = \widehat{\gamma} \circ \lambda'' \circ \gamma$, i.e. $\gamma : (A', \lambda') \rightarrow (A'',\lambda'')$ is a $1$-isogeny.
We say that $\gamma$ is an \textbf{isomorphism of principally polarised abelian varieties}.
\end{lemma}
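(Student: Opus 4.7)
The plan is to first produce $\gamma$ as a morphism of abelian varieties from the universal property of quotients, and then verify that it intertwines the polarisations by exploiting the $N$-isogeny relations.

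First, since $\ker \varphi' = \ker \varphi''$, both $\varphi'$ and $\varphi''$ realise $A'$ and $A''$ as the quotient of $A$ by the same finite group scheme $K$ (recall that $K$ being the kernel of an $N$-isogeny with $N$ coprime to the characteristic makes the quotient well-behaved). By the universal property of quotients of abelian varieties, there is a unique morphism $\gamma : A' \to A''$ with $\varphi'' = \gamma \circ \varphi'$. A symmetric argument applied to $\varphi'$ and $\varphi''$ with their roles swapped produces an inverse, so $\gamma$ is an isomorphism of abelian varieties.

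Next I would translate the $N$-isogeny hypotheses into relations in $\mathrm{Hom}(A, \hat{A})$. By definition $\tilde{\varphi'} \circ \varphi' = [N]_A$ and $\tilde{\varphi''} \circ \varphi'' = [N]_A$, which after unfolding $\tilde{\varphi'} = \lambda^{-1} \circ \hat{\varphi'} \circ \lambda'$ (and similarly for $\varphi''$) and using $\lambda \circ [N]_A = [N]_{\hat A} \circ \lambda$ give
\[
\hat{\varphi'} \circ \lambda' \circ \varphi' \;=\; N\lambda \;=\; \hat{\varphi''} \circ \lambda'' \circ \varphi''.
\]
Substituting $\varphi'' = \gamma \circ \varphi'$ in the right-hand side and using $\widehat{\gamma \circ \varphi'} = \hat{\varphi'} \circ \hat{\gamma}$ yields
\[
\hat{\varphi'} \circ \bigl(\hat{\gamma} \circ \lambda'' \circ \gamma - \lambda'\bigr) \circ \varphi' \;=\; 0.
\]

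The final step is to conclude that the middle homomorphism $f := \hat{\gamma} \circ \lambda'' \circ \gamma - \lambda' : A' \to \hat{A'}$ vanishes. Since $\varphi'$ is surjective, the equation above is equivalent to $\hat{\varphi'} \circ f = 0$, so the image of $f$ is contained in $\ker \hat{\varphi'}$, a finite group scheme. But $A'$ is connected (as an abelian variety) and $f$ sends the identity to zero, hence the scheme-theoretic image of $f$ must be $\{0\}$, i.e.\ $f = 0$. This gives $\lambda' = \hat{\gamma} \circ \lambda'' \circ \gamma$, so $\gamma$ is a $1$-isogeny of principally polarised abelian varieties. The main technical point to watch is the hypothesis $\gcd(N, \mathrm{char})=1$, which ensures that $K$ is étale and the quotient $A \to A/K$ has the expected universal property; without it the argument producing $\gamma$ would need to be carried out in the category of group schemes more carefully.
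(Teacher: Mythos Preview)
The paper does not supply its own proof of this lemma: it is quoted verbatim as Proposition~1.1 of Kani~\cite{kani_number_1997} and used as a black box. So there is no ``paper's proof'' to compare against beyond the original reference.

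Your argument is a correct and standard proof of the statement. The two steps---obtaining $\gamma$ from the universal property of the quotient $A \to A/K$, and then deducing $\lambda' = \hat{\gamma}\circ\lambda''\circ\gamma$ by cancelling $\varphi'$ on the right (surjectivity) and $\hat{\varphi'}$ on the left (connectedness of $A'$ versus finiteness of $\ker\hat{\varphi'}$)---are exactly the expected ones, and your bookkeeping with $\tilde{\varphi'}\circ\varphi' = [N]$ unwound to $\hat{\varphi'}\circ\lambda'\circ\varphi' = N\lambda$ is clean. One small remark: the quotient by a finite subgroup scheme and its universal property hold without the coprimality assumption on $N$; that hypothesis is really there so that ``kernel'' means the honest scheme-theoretic kernel and the $N$-isogeny notion behaves as expected (separability), rather than being strictly needed for the existence of $\gamma$.
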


In further results of this section, we shall need to recover endomorphisms of a given product of elliptic curves $E^n\times E'^n$ from its kernel.
Thus, it is important to have a description of the group of the automorphisms of $E^n\times E'^n$ as, by Lemma \ref{lem_kanibij}, endomorphisms with the same kernel differ only by an automorphism.

\begin{lemma}
\label{lem_aut}
Let $E_1$ and $E_2$ be two elliptic curves and $n,m$ be two integers.
Let $\Aut(E_1^n\times E_2^m,\lambda_{E_1^n\times E_2^m})$ be the group of automorphisms of the principally polarised abelian variety $(E_1^n\times E_2^m,\lambda_{E_1^n\times E_2^m})$.
Then for any element $\psi$ of $\Aut(E_1^n\times E_2^m,\lambda_{E_1^n \times E_2^m})$, we have
$$
	M(\psi) = \begin{pmatrix} A_1 & B_{1,2} \\ B_{2,1} & A_2 \end{pmatrix},
$$
where for any $i,j \in \{1,2\}$, $A_i$ is a matrix of dimension $n_i$ with entries in $\Aut(E_i)\cup\{0\}$ and $B_{i,j}$ is a matrix of dimension $n_i\times n_j$ with entries in $\Iso(E_j,E_i)\cup\{0\}$.
Moreover $M(\psi)$ contains only one non-zero entry per column and per row.
\end{lemma}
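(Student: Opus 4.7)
The plan is to exploit the defining identity of an automorphism of a principally polarised abelian variety, namely that $\psi$ being a $1$-isogeny means $\tilde\psi \circ \psi = [1]$, and to unpack what this means in matrix form. The key observation is that under the product polarisation $\lambda_{E_1^n \times E_2^m}$, the dual with respect to the polarisation decomposes factor by factor, so $\tilde\psi$ has matrix form $M(\tilde\psi) = \tilde{M(\psi)}$, i.e.\ the transpose of $M(\psi)$ with each entry replaced by its dual isogeny as an elliptic curve isogeny. I would first record this fact explicitly (it follows from the compatibility of duals with direct sums of polarisations and has been set up immediately before the lemma).

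Write $M(\psi) = (\psi_{k,i})$ with $\psi_{k,i}$ an isogeny (or zero) between the appropriate factors. The identity $\tilde\psi \circ \psi = I_{n+m}$ yields, on the $i$-th diagonal entry, the relation
\[
\sum_{k=1}^{n+m} \widehat{\psi_{k,i}} \circ \psi_{k,i} = [1]
\]
on the $i$-th source curve. Each summand equals $[\deg \psi_{k,i}]$ with the convention $\deg(0) = 0$, so $\sum_k \deg \psi_{k,i} = 1$. Since degrees are non-negative integers, exactly one $k = k(i)$ has $\psi_{k,i} \neq 0$, and for that $k$ the entry is of degree one, hence an isomorphism between the corresponding elliptic curves. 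This already gives the "one non-zero entry per column" half of the statement.

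Next, the off-diagonal entries of $\tilde\psi \circ \psi$ vanish: for $i \neq j$,
\[
\widehat{\psi_{k(j),i}} \circ \psi_{k(j),j} = 0,
\]
because all other terms in the sum are zero by the column-uniqueness just obtained. Since $\psi_{k(j),j}$ is an isomorphism (in particular surjective), this forces $\psi_{k(j),i} = 0$ for all $i \neq j$. Hence in row $k(j)$ the only possibly non-zero entry is in column $j$, which yields "one non-zero entry per row". Finally, I read off the block structure: an entry in the top-left $n \times n$ block is a map $E_1 \to E_1$, so if non-zero it is an isomorphism and thus lies in $\mathrm{Aut}(E_1)$, and similarly for the $A_2$ block; an entry in the off-diagonal blocks $B_{i,j}$ goes from $E_j$ to $E_i$ and, when non-zero, lies in $\mathrm{Iso}(E_j, E_i)$ (automatically empty when $E_1$ and $E_2$ are not isomorphic).

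The only non-routine step is the compatibility of the dual-with-respect-to-polarisation with the product structure, i.e.\ the identity $M(\tilde\psi) = \widetilde{M(\psi)}$. Once this is granted, the remainder is a clean degree-counting argument in a $1 \times 1$ quaternion-algebra-free setting, so I do not expect any further obstacle.
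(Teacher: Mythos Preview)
Your proposal is correct and follows essentially the same approach as the paper: both arguments translate the $1$-isogeny condition into the matrix identity $\widetilde{M(\psi)}\,M(\psi)=I$ (or $M(\psi)\,\widetilde{M(\psi)}=I$) and use the diagonal relation $\sum_k \deg\psi_{k,i}=1$ to force exactly one nonzero, degree-one entry per column (resp.\ row). The only cosmetic difference is that the paper obtains the row statement by invoking the companion identity $\psi\tilde\psi=[1]$ directly, whereas you extract it from the off-diagonal vanishing of $\tilde\psi\psi$; both are one-line arguments, and your version implicitly uses that $j\mapsto k(j)$ is a bijection (else $\psi$ would have a zero row), which you may want to make explicit.
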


\begin{proof}
Let $\psi \in \Aut(E_1^{n_1}\times E_2^{n_2},\lambda_{E_1^{n_1} \times E_2^{n_2}})$.
As $\psi$ is an automorphism of a principally polarised abelian variety, we have $\hat{\psi} \circ \lambda \circ \psi = \lambda$ thus $\psi \tilde{\psi} = [1]$, which, in matrix form, gives $M(\psi) \tilde{M}(\psi) = \text{I}_{n_1+n_2}$.
Let us denote by $(\psi_{i,j})_{1\leq i,j \leq n_1+n_2}$ the coefficients of the matrix form $M(\psi)$.
For any $i \in \llbracket 1,n_1+n_2 \rrbracket$, we have
$$[1] = \sum_{j=1}^{n_1+n_2} \psi_{i,j} \circ \hat\psi_{i,j} = \sum_{j=1}^{n_1+n_2} [\text{degree}(\psi_{i,j})].$$
This implies that for any $i$, exactly one of the isogenies $\psi_{i,j}$ is non-zero, and that isogeny has degree one, hence it is an isomorphism.
The identity $\tilde{\psi} \psi = [1]$ yields the same results for columns.
Moreover, for $\psi$ to be well-defined, the domain and codomain of each $\psi_{i,j}$ must be 
\begin{align*}
\domain(\psi_{i,j}) = 
\begin{cases}
	E_1, &\text{if } 1 \leq j \leq n_1, \\
	E_2, &\text{if } n+1 \leq j \leq n_1+n_2,
\end{cases}\\
\codomain(\psi_{i,j}) =
\begin{cases}
	E_1, &\text{if } 1 \leq i \leq n_1, \\
	E_2, &\text{if } n+1 \leq i \leq n_1+n_2. \\
\end{cases}
\end{align*}
\end{proof}

As presented by Robert in \cite{robert_applications_2022}, isogenies between abelian varieties can be embedded into isogenies of higher dimensions.
Namely, given an isogeny $\varphi$ between abelian varieties, one can construct a higher dimensional isogeny such that one of its matrix form coefficients is equal to $f$, up to isomorphism.
This result is a generalisation of a construction in dimension 1 given by Kani in \cite{kani_number_1997}.

\begin{lemma}[Lemma 3.6 in \cite{EC:Robert23}]
\label{lem_ourkani}
Let $A$ and $B$ be two principally polarised abelian varieties of dimension $g$ over a base field of characteristic $p$.
Let $\varphi_1,\varphi_2'$ be two $d_1$-isogenies and $\varphi_2,\varphi_1'$ be two $d_2$-isogenies such that $(d_1+d_2,p)=1$ and $\varphi_1' \circ \varphi_1 = \varphi_2' \circ \varphi_2$ is a $d_1d_2$-isogeny from $A$ onto $B$, i.e.

\[
\begin{tikzcd}
	A \arrow{r}{\varphi_1} \arrow[swap]{d}{\varphi_2} & \varphi_1(A) \arrow{d}{\varphi_1'} \\
	\varphi_2(A) \arrow{r}{\varphi_2'} & B
\end{tikzcd}.
\]
Then 
$$
\begin{pmatrix} \varphi_1 & \widetilde{\varphi_1'} \\ -\varphi_2 & \widetilde{\varphi_2'} \end{pmatrix}
$$
is the matrix form of a $(d_1+d_2)$-isogeny $F: A \times B \rightarrow \varphi_1(A) \times \varphi_2(A)$.
Moreover, if $\gcd(d_1,d_2)=1$ then the kernel of $F$ is $\widetilde{F}(\varphi_1(A)[d_1+d_2]\times \{0\})$ and is of rank $2g$.
\end{lemma}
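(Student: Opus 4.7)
The plan is to prove the two claims separately by a direct matrix computation for the $(d_1+d_2)$-isogeny property, followed by a cardinality-versus-injectivity argument for the kernel description.

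First I would compute the adjoint matrix of $F$. Using $\widetilde{\widetilde{\psi}} = \psi$ and the fact that taking the adjoint transposes the matrix while dualising each entry, one gets
$$\widetilde F \;=\; \begin{pmatrix} \widetilde{\varphi_1} & -\widetilde{\varphi_2} \\ \varphi_1' & \varphi_2' \end{pmatrix}.$$
The matrix product $\widetilde F \cdot F$ then has four block entries which I would simplify using the defining identities $\widetilde{\varphi_i}\circ\varphi_i=[d_i]$, $\varphi_i'\circ\widetilde{\varphi_i'}=[d_{3-i}]$, and the hypothesis $\varphi_1'\circ\varphi_1=\varphi_2'\circ\varphi_2$. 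The two diagonal entries both collapse to $[d_1]+[d_2]=[d_1+d_2]$, while the off-diagonal entries vanish (the bottom-left one directly by the commutative-square hypothesis; the top-right one by dualising it). This establishes $\widetilde F\circ F=[d_1+d_2]$, hence $F$ is a $(d_1+d_2)$-isogeny. A short check on dimensions confirms that the codomain is indeed $\varphi_1(A)\times\varphi_2(A)$ and that the product polarisation on the codomain is the one with respect to which $\widetilde F$ is the polarised dual.

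Next, to pin down the kernel, I would exploit that $F\circ\widetilde F=[d_1+d_2]$ on the codomain. This immediately gives $\widetilde F(Q)\in\ker F$ for every $Q\in(\varphi_1(A)\times\varphi_2(A))[d_1+d_2]$. I would then restrict attention to the subgroup $S:=\varphi_1(A)[d_1+d_2]\times\{0\}$, which has cardinality $(d_1+d_2)^{2g}$ since $(d_1+d_2,p)=1$. This matches $|\ker F|=\deg F=(d_1+d_2)^{2g}$, so the equality $\ker F = \widetilde F(S)$ reduces to showing that $\widetilde F|_S$ is injective. From the adjoint matrix, $\widetilde F(P,0)=(\widetilde{\varphi_1}(P),\varphi_1'(P))$, so any $P$ in the kernel of this restriction satisfies $\widetilde{\varphi_1}(P)=0$, placing $P$ in $\ker\widetilde{\varphi_1}\subseteq\varphi_1(A)[d_1]$. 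Since simultaneously $P\in\varphi_1(A)[d_1+d_2]$, the order of $P$ divides $\gcd(d_1,d_1+d_2)=\gcd(d_1,d_2)=1$, forcing $P=0$.

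For the final rank claim, I would use that $S\cong(\mathbb{Z}/(d_1+d_2)\mathbb{Z})^{2g}$ as the full $(d_1+d_2)$-torsion of a $g$-dimensional abelian variety (étale because $(d_1+d_2,p)=1$); injectivity of $\widetilde F|_S$ transports this structure to $\ker F$, yielding rank $2g$. The main obstacle is exactly the coprimality step in the previous paragraph: without $\gcd(d_1,d_2)=1$ the restriction $\widetilde F|_S$ can acquire a non-trivial kernel and the clean description of $\ker F$ as $\widetilde F(S)$ breaks, requiring one to enlarge $S$ or quotient it, which would spoil the rank count. The hypothesis $(d_1+d_2,p)=1$ plays a parallel supporting role by guaranteeing that every torsion group used in the counting argument has its expected étale size.
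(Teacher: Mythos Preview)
Your argument is correct. The paper itself does not supply a proof of this lemma: it is quoted verbatim as Lemma~3.6 of \cite{EC:Robert23} and used as a black box. Your direct matrix verification of $\widetilde F\circ F=[d_1+d_2]$ together with the cardinality-plus-injectivity argument for the kernel is exactly the standard proof (it is essentially what one finds in Robert's paper, which in turn generalises Kani's original dimension-one computation). One small remark: you invoke $F\circ\widetilde F=[d_1+d_2]$ on the codomain without justification; it follows immediately from $\widetilde F\circ F=[d_1+d_2]$ by composing on the left with $F$ and using surjectivity, but it is worth saying so explicitly.
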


Lemma \ref{lem_embedding} below describes how an isogeny $\varphi$ of degree $N$ between elliptic curves can be embedded into an $N'$-endomorphism in dimension 8, for some $N' > N$.

This lemma exploits the recent techniques developed to attack SIDH~\cite{EC:CasDec23,EC:MMPPW23,EC:Robert23}.

\begin{lemma}
\label{lem_embedding}
Let $E_1$ and $E_2$ be two elliptic curves over a finite field $\mathbb{F}_{p^k}$ and $\varphi: E_1 \rightarrow E_2$ be an isogeny of degree $N$.
Let $N'>N$ be an integer such that $(N',Np) = 1$.
Let $m_1,m_2,m_3$ $m_4$ be integers such that $m_1^2 + m_2^2 + m_3^2 + m_4^2 = N'- N$ and let $\alpha_{E_1}$ (resp. $\alpha_{E_2}$) be the endomorphism over $E_1^4$ (resp. $E_2^4$) given by the matrix 
$$
\begin{pmatrix} m_1 & -m_2 & -m_3 & -m_4 \\ m_2 & m_1 & m_4 & - m_3 \\ m_3 & -m_4 & m_1 & m_2 \\ m_4 & m_3  & -m_2 & m_1  \end{pmatrix}.
$$
Let $H := \{(\tilde{\alpha}_{E_1}(P),\varphi^{\times 4}(P)) | P \in E_1^4[N']\}$; then there exists an $N'$-isogeny of $E_1^4\times E_2^4$ of kernel $H$.\\
Furthermore, the following holds for any $N'$-isogeny $G$ of $E_1^4\times E_2^4$ of kernel $H$.
\begin{itemize}
\item The codomain of $G$ is isomorphic to $(E_1^4\times E_2^4, \lambda_{E_1^4 \times E_2^4})$ as principally polarised abelian varieties.
\item For any isomorphism $\gamma : G(E_1^4 \times E_2^4,\lambda_{E_1^4 \times E_2^4}) \rightarrow (E_1^4 \times E_2^4,\lambda_{E_1^4\times E_2^4})$, there exist an integer $i \in \llbracket 1, 8 \rrbracket$ and an isomorphism $\psi$ in $\Iso(E_2,E')$, where $E' \in \{E_1,E_2\}$, such that the following diagram commutes
\[
\begin{tikzcd}
	E_1 \arrow{r}{G\circ\tau_1} \arrow[swap]{d}{\varphi} & G(E_1^4 \times E_2^4) \arrow{d}{\pi_i \circ \gamma} \\
	E_2 \arrow{r}{\psi} & E'
\end{tikzcd}
\]
i.e. $ \pi_i(\gamma(G(\tau_1(P)))) = \psi(\varphi(P)), \text{ for all } P \in E_1$, and thus $(\gamma \circ G,1,i)$ is an embedding representation of $\psi \circ \varphi$.\\
\end{itemize}
\end{lemma}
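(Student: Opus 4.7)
The plan is to construct explicitly an $N'$-isogeny $F: E_1^4 \times E_2^4 \to E_1^4 \times E_2^4$ with kernel $H$ using the higher-dimensional Kani lemma (Lemma \ref{lem_ourkani}), then to deduce the general statement from the uniqueness of isogenies with prescribed kernel (Lemma \ref{lem_kanibij}) together with the description of automorphisms of principally polarised abelian varieties given by Lemma \ref{lem_aut}.

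Since $\alpha_{E_1}$ and $\alpha_{E_2}$ are defined by the same integer matrix, they commute with $\varphi^{\times 4}$: one has $\alpha_{E_2} \circ \varphi^{\times 4} = \varphi^{\times 4} \circ \alpha_{E_1}$ and, dualizing, $\widetilde{\alpha_{E_2}} \circ \varphi^{\times 4} = \varphi^{\times 4} \circ \widetilde{\alpha_{E_1}}$. These equalities furnish the commutative square required by Lemma \ref{lem_ourkani} with $d_1 = N'-N$, $d_2 = N$, $\varphi_1 = \alpha_{E_1}$, $\varphi_2 = \varphi^{\times 4}$, $\varphi_1' = \varphi^{\times 4}$, $\varphi_2' = \alpha_{E_2}$, and the hypotheses $(d_1+d_2, p) = 1$ and $\gcd(d_1, d_2) = 1$ follow from $\gcd(N', Np) = 1$. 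The resulting isogeny is
$$F = \begin{pmatrix} \alpha_{E_1} & \widetilde{\varphi^{\times 4}} \\ -\varphi^{\times 4} & \widetilde{\alpha_{E_2}} \end{pmatrix} : E_1^4 \times E_2^4 \longrightarrow E_1^4 \times E_2^4,$$
an $N'$-isogeny whose kernel, by Lemma \ref{lem_ourkani}, is $\widetilde{F}(E_1^4[N'] \times \{0\})$. Since $\widetilde{F}(R,0) = (\widetilde{\alpha_{E_1}}(R), \varphi^{\times 4}(R))$, this is precisely $H$. Using $\gcd(N',N)=1$, the map $R \mapsto (\widetilde{\alpha_{E_1}}(R), \varphi^{\times 4}(R))$ is injective on $E_1^4[N']$, so $|H|=(N')^8$ matches the expected degree.

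Given any $N'$-isogeny $G$ with kernel $H$, Lemma \ref{lem_kanibij} yields an isomorphism of PPAVs from the codomain of $G$ to the codomain of $F$, which is $(E_1^4 \times E_2^4, \lambda_{E_1^4 \times E_2^4})$, establishing the first bullet. For any such isomorphism $\gamma$, a second application of Lemma \ref{lem_kanibij} to $F$ and $\gamma \circ G$ (both $N'$-isogenies from $E_1^4 \times E_2^4$ to itself with kernel $H$) produces an automorphism $\eta$ of the PPAV $(E_1^4 \times E_2^4, \lambda_{E_1^4 \times E_2^4})$ such that $\gamma \circ G = \eta \circ F$.

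Reading off the first column of $M(F)$ gives $F \circ \tau_1(P) = (m_1 P, m_2 P, m_3 P, m_4 P, -\varphi(P), 0, 0, 0)$. By Lemma \ref{lem_aut}, $M(\eta) = (\eta_{i,j})$ has exactly one nonzero entry per row and per column, each such entry being an isomorphism between the corresponding elliptic curves. Let $i$ be the unique index with $\eta_{i,5} \neq 0$; the row condition forces $\eta_{i,j} = 0$ for $j \neq 5$, so the $i$-th coordinate of $\eta(F \circ \tau_1(P))$ equals $\eta_{i,5}(-\varphi(P))$. Setting $\psi = -\eta_{i,5}$, an isomorphism from $E_2$ onto the elliptic curve $E' \in \{E_1, E_2\}$ selected by the row index $i$, yields $\pi_i(\gamma(G(\tau_1(P)))) = \psi(\varphi(P))$ for all $P \in E_1$, as required. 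The main obstacle is assembling the correct commutative diagram so that Lemma \ref{lem_ourkani} delivers exactly the kernel $H$; once the explicit $F$ is in place, the remaining work is a bookkeeping argument through the matrix forms.
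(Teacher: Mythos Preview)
Your proof is correct and follows essentially the same approach as the paper: you apply Lemma~\ref{lem_ourkani} to the commuting square built from $\alpha_{E_1}$, $\alpha_{E_2}$ and $\varphi^{\times 4}$ to obtain the explicit $N'$-isogeny $F$ with kernel $H$, then combine Lemma~\ref{lem_kanibij} with the structure of $\Aut(E_1^4\times E_2^4,\lambda)$ from Lemma~\ref{lem_aut} to extract the index $i$ and the isomorphism $\psi=-\eta_{i,5}$. The only differences are cosmetic: you spell out the computation $\widetilde F(R,0)=(\widetilde{\alpha_{E_1}}(R),\varphi^{\times 4}(R))$ and the cardinality $|H|=(N')^8$, which the paper leaves implicit.
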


\begin{proof}
We use the same notations as above.\\
Since the matrix form of $\varphi^{\times 4}$ is diagonal, we have the following commutative diagram 
\[
\begin{tikzcd}
	E_1^4 \arrow{r}{\alpha_{E_1}} \arrow[swap]{d}{\varphi^{\times 4}} & E_1^4 \arrow{d}{\varphi^{\times 4}} \\
	E_2^4 \arrow{r}{\alpha_{E_2}} & E_2^4.
\end{tikzcd}
\]
By construction of $\varphi^{\times 4}$ is an $N$-isogeny and $\alpha_{E_1}$ and $\alpha_{E_2}$ are $(N'-N)$-isogenies.
Thus, the sum of the degree of $\varphi^{\times 4}$ with the degree of $\alpha_{E_1}$ or $\alpha_{E_2}$ is equal to $N'$.
By assumption, $N'$ is coprime to $Np$.
In particular $N'$ is coprime to $p$ and $N$ is coprime to $N'-N$.
Hence, by taking $A := E_1^4, B := E_2^4, \varphi_1:=\alpha_{E_1}, \varphi'_2 := \alpha_{E_2}, \varphi_2 := \varphi^{\times 4}$ and $\varphi_1':= \varphi^{\times 4}$, we have $d_1 = N'-N$, $d_2 = N$ and all the assumptions of Lemma \ref{lem_ourkani} are satisfied.
Its application gives us an $N'$-endomorphism $F$ of $E_1^4\times E_2^4$ with kernel 
$$\ker F = \{ (\tilde{\alpha}_{E_1}(P),\varphi^{\times 4}(P)) | P \in E_1^4[N']\}$$
and matrix form
$$
M(F) = \begin{pmatrix} M(\alpha_{E_1}) & M(\reallywidetilde{\varphi^{\times 4}}) \\ -M(\varphi^{\times 4}) & M(\reallywidetilde{\alpha_{E_2}}) \end{pmatrix}. 
$$ 
Then, for any $P \in E_1$, we have
\begin{align}
\label{align_HDembedding_1}
F(\tau_1(P)) = (m_1P,m_2P,m_3P,m_4P,-\varphi(P),0,0,0)
\end{align}
Let $G$ be an $N'$-isogeny of $(E_1^4\times E_2^4, \lambda_{E_1^4 \times E_2^4})$ with $\ker G = \ker F$.
By Lemma \ref{lem_kanibij}, $G(E_1^4 \times E_2^4,\lambda_{E_1^4 \times E_2^4})$ and $(E_1^4\times E_2^4,\lambda_{E_1^4\times E_2^4})$ are isomorphic and for any isomorphism $\gamma$ from $G(E_1^4 \times E_2^4,\lambda_{E_1^4 \times E_2^4})$ to $(E_1^4 \times E_2^4,\lambda_{E_1^4 \times E_2^4})$ there exists an automorphism $\psi$ of $E_1^4 \times E_2^4$ such that we have
\begin{align}
\label{align_HDembedding_2}
\gamma \circ G = \psi \circ F.
\end{align}

By Lemma \ref{lem_aut}, there exist $8$ isomorphisms $\psi_1,\dots\psi_8$ such that
$$
\psi_{i} \in \begin{cases} \Aut(E_1)\cup\Iso(E_1,E_2), & \text{ if } i \in \llbracket 1, 4 \rrbracket, \\ \Aut(E_2) \cup \Iso(E_2,E_1), & \text{ if } i \in \llbracket 5,8 \rrbracket \end{cases}
$$
and a map $\sigma$ permuting coordinates of the points of $E_1^4 \times E_2^4$ such that, for any point $(P_1,\dots,P_8)$ of $E_1^4\times E_2^4$, we have
\begin{align}
\label{align_HDembedding_3}
\psi(P_1,\dots,P_8) = \sigma(\psi_1(P_1),\dots,\psi_8(P_8)).
\end{align}

Let $i$ be the integer such that $\pi_i(\sigma(Q_1,\dots,Q_8)) = Q_5$ for any $(Q_1,\dots,Q_8) \in E_1^4 \times E_2^4$.
Then, for any $P \in E_1$,
\begin{align*}
	\pi_i(\gamma(G(\tau_1(P)))) &= \pi_i(\psi(F(\tau_1(P)))) \text{, by (\ref{align_HDembedding_2}),}\\
	&= \pi_i(\sigma( (\psi_1(m_1P),\dots,\psi_4(m_4P),\psi_5(-\varphi(P)),0,0,0))) \text{, by (\ref{align_HDembedding_1}),}\\
	&= -\psi_5(\varphi(P)) \text{, by construction $\pi_i$}.
\end{align*}
This conclude the proof as $-\psi_5$ is an element of $\Aut(E_2)\cup\Iso(E_2,E_1)$.
\end{proof}

This embedding can be evaluated efficiently using the analogue of Vélu's formulae in higher dimension introduced by Lubicz and Robert, \cite{lubicz_computing_2012}.

\begin{lemma}[\hspace{-0.45pt}\cite{robert_evaluating_2022}]
\label{lem_velu}
Let $(E_1\times \dots \times E_n, \lambda_{E_1 \times \dots \times E_n})$ be a principally polarised abelian varity where $E_i$ are elliptic curves defined over $\mathbb{F}_{p^k}$.
Let $H$ be a maximal isotropic subgroup of $(E_1 \times \dots \times E_n)[N']$ where $N'$ is an integer coprime to $p$ of prime factorisation $\prod_{i=1}^r \ell_i^{e_i}$.

Given $H$ as a set of generators living in $(E_1 \times \dots \times E_n)[\ell_i^{e_i}]$, one can compute a representation of an $N'$-isogeny $G$ of $(E_1\times \dots \times E_n, \lambda_{E_1 \times \dots \times E_n})$ with kernel $H$ such that
\begin{itemize}
	\item it takes $O(B^8D \log^2(N') \log(B))$ arithmetic operations over $\mathbb{F}_{p^k}$ to get this representation,
	\item the representation has size $O(k M \log(N') \log p)$ bits,
	\item the representation allows to evaluate $G$ on a point in $O(B^8 M \log(N') \log(B))$ operations over its field of definition, 
\end{itemize}
	where $B,M$ and $D$ are any bounds such that $B \geq P^*(N')$, $M \geq \max_{i=1}^r \delta_{E_i}(N')$ and $D \geq \max_{i=1}^r \delta_{E_i,2}(N')$.
\end{lemma}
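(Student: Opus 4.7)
The plan is to apply the generalisation of Vélu's formulas to higher dimensional abelian varieties in the theta model due to Lubicz and Robert \cite{lubicz_computing_2012, robert_evaluating_2022}, after decomposing $G$ into a chain of isogenies of prime degree, and then do the bookkeeping on complexities.

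First, I would convert the principally polarised product $(E_1\times\cdots\times E_n, \lambda_{E_1\times\cdots\times E_n})$ to the theta model; by Remark \ref{rem_theta} this only requires the $4$-torsion of the $E_i$, which is a polynomial-time task. Next, using the prime power factorisation $N' = \prod_i \ell_i^{e_i}$, I would decompose $H = \bigoplus_i H_i$ with $H_i \subseteq (E_1\times\cdots\times E_n)[\ell_i^{e_i}]$; each $H_i$ is maximal isotropic for the $\ell_i^{e_i}$-polarised Weil pairing and hence defines an $\ell_i^{e_i}$-isogeny $G_i$, so that $G = G_r \circ \cdots \circ G_1$. Each $G_i$ is then cut into $e_i$ successive $\ell_i$-isogenies by peeling off $\ell_i$-torsion generators of the current kernel and pushing the remaining generators through the previous step.

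At each such $\ell_i$-step I would invoke the Lubicz--Robert $\ell$-isogeny algorithm to obtain the theta-model representation of the codomain and a routine to push an arbitrary point through; here the per-step cost scales polynomially in $\ell_i$, with an exponent matching the ambient dimension used in the applications of the paper, which is where the $B^8$ factor comes from. Each operation is carried out over a field containing the relevant torsion, hence with arithmetic cost dictated by $\delta_{E_i}(\ell_i^{e_i})$ and $\delta_{E_i,2}(\ell_i^{e_i})$ (the latter arising when generators of the kernel cross prime components and must be expressed over a common field). Summing along the chain and using $P^*(N')\leq B$, $\max_i \delta_{E_i}(N')\leq M$ and $\max_i \delta_{E_i,2}(N')\leq D$ gives the three stated bounds on computation time, representation size, and evaluation time.

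The main technical obstacle, which I would import wholesale from \cite{robert_evaluating_2022}, is to maintain a consistent principal polarisation and compatible theta structure along the entire chain, and to arrange the torsion computations so that points used at each step live over fields of degree controlled by $D$ rather than by a naive $B$-th-power blow-up. Once these ingredients are in place, the three claimed complexity bounds follow by straightforward accounting over the $\sum_i e_i = O(\log N')$ steps of the chain.
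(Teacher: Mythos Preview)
Your proposal is correct and matches the paper's approach: the paper's proof simply states that the lemma is a rephrasing of \cite[Section~4]{robert_evaluating_2022}, with the same key idea of decomposing the $N'$-isogeny into a chain of prime-power isogenies and applying the Lubicz--Robert formulas at each step (the paper additionally notes that \cite[Theorem~53]{EC:DLRW24} can be substituted for \cite{lubicz_fast_2023} in the per-step computation). Your account is in fact more detailed than the paper's own proof, which largely defers to the cited references.
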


\begin{proof}
This result is simply a rephrasing of \cite[4. The algorithm]{robert_evaluating_2022}.
The main idea behind achieving this complexity is to compute the higher dimensional isogeny as a chains of power prime isogenies.
In the original description of the algorithm, Robert uses \cite{lubicz_fast_2023} to compute representation of each of these isogenies.
For the same complexity, we suggest using \cite[Theorem 53]{EC:DLRW24} instead, as it provides a more convenient statement.
In particular, we have access to an explicit description of the theta coordinates of the output.
\end{proof}

\begin{remark}
In this section, we always assume that the isogenies are embedded into dimension 8.
Lemma \ref{lem_velu}, and so all the results derived from it, could be more efficient if the isogenies were embedded into dimensions 2 or 4, unfortunately, it is not always possible.
Indeed, for dimension 8, we decompose $N' - N$ as a sum of four squares to construct an endomorphism of $E^4$ using the Zarhin's trick, see \cite{zarhin_remark_1974}.
For dimension 2 (resp. 4), $N'-N$ needs to be a square (resp. a sum of two squares) to construct easily an endomorphism of $E$ (resp. $E^2$) and to embed the isogenies into dimension 2 (resp. 4).
It is possible to relax these conditions, under some heuristics and when the endomorphism ring is known.
Here, we neither want to rely on heuristics, nor presume that we know the endomorphism ring, so we only consider the case of dimension 8.
\end{remark}

By Lemma \ref{lem_embedding}, if an isogeny $\varphi: E_1 \rightarrow E_2$ is divisible by an integer $n$ then it can be embedded into an isogeny in dimension 8 of kernel 
$$
	H := \{(\tilde{\alpha}_{E_1}(P), (\varphi/n)^{\times 4}(P))| P \in E_1^4[N']\},
$$
with $N'$ and $\alpha_{E_1}$ defined as in the lemma.
Thanks to Lemma \ref{lem_velu}, one can compute in polynomial time this $8$-dimensional isogeny from its kernel.
It only remains to show how we can compute the kernel $H$ to get a complete division algorithm.

By construction $N'$ is an integer coprime to $\deg(\varphi)$ and $n$, then we have 
$$(\varphi/n)^{\times 4}(P) = (s\varphi)^{4 \times}(P) \text{ with } n^{-1} = s \mod N'.$$
This trick allows us to compute the kernel $H$ from an efficient representation of $\varphi$.
Lemma \ref{lem_maximalisotropic} ensures that computing $H$ this way always provides a maximal isotropic subgroup of $E_1^4 \times E_2^4$ even if $\varphi/n$ is not a well-defined isogeny.
This will be crucial to verify the divisibility of an isogeny by an integer.

\begin{lemma}
Let $\varphi: E_1 \rightarrow E_2$ be an isogeny.
Let $n^2$ be a divisor of $\deg(\varphi)$ and $N = \deg(\varphi)/n^2$.
Let $N' > N$ such that $(N',p\deg{(\varphi)}) = 1$ and $s = n^{-1} \mod N'$.
Let $\alpha_{E_1}$ be an $m$-endomorphism of $E_1^4$ with $m = N' - N$.\\
Then $H := \{ (\tilde{\alpha}_{E_1}(P), s\varphi^{\times 4}(P)) | P \in E_1^4[N'] \}$ is a maximal isotropic subgroup of $(E_1^4 \times E_2^4)[N']$.

\label{lem_maximalisotropic}
\end{lemma}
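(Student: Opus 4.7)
The plan is to verify the three defining properties of a maximal isotropic subgroup: (a) that $H$ lies in $(E_1^4 \times E_2^4)[N']$, (b) that the polarised Weil pairing $e_{N', \lambda_{E_1^4 \times E_2^4}}$ vanishes on $H$, and (c) that $|H| = (N')^8$, which is the cardinality of a maximal isotropic subgroup in the $N'$-torsion of the $8$-dimensional principally polarised abelian variety $E_1^4 \times E_2^4$.

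Property (a) is immediate, since isogenies and scalar multiplications preserve $N'$-torsion. For (b), I would exploit the product structure of the polarisation, which splits the pairing as
\begin{align*}
&e_{N', \lambda}\bigl((\tilde{\alpha}_{E_1}(P_1), s\varphi^{\times 4}(P_1)),\, (\tilde{\alpha}_{E_1}(P_2), s\varphi^{\times 4}(P_2))\bigr) \\
&\qquad = e_{N', \lambda_{E_1^4}}(\tilde{\alpha}_{E_1}(P_1), \tilde{\alpha}_{E_1}(P_2)) \cdot e_{N', \lambda_{E_2^4}}(s\varphi^{\times 4}(P_1), s\varphi^{\times 4}(P_2)).
\end{align*}
The standard identity $e_{N', \lambda_B}(\psi(P), \psi(Q)) = e_{N', \lambda_A}(P, \tilde{\psi}\psi(Q))$, valid for any isogeny $\psi: (A, \lambda_A) \to (B, \lambda_B)$ between principally polarised abelian varieties, combined with $\alpha_{E_1} \tilde{\alpha}_{E_1} = [m]$ (since $\alpha_{E_1}$ is an $m$-endomorphism) and $\tilde{\varphi^{\times 4}} \varphi^{\times 4} = [\deg \varphi] = [Nn^2]$ (which follows componentwise from $\tilde{\varphi}\varphi = [\deg\varphi]$ on elliptic curves, using that $\tilde{\varphi^{\times 4}} = (\tilde{\varphi})^{\times 4}$ under the product polarisations), reduces the two factors to $e_{N', \lambda_{E_1^4}}(P_1, P_2)^m$ and $e_{N', \lambda_{E_1^4}}(P_1, P_2)^{s^2 N n^2}$, respectively. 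Since $s \equiv n^{-1} \pmod{N'}$, one has $s^2 n^2 \equiv 1 \pmod{N'}$, so the exponent of the product becomes $m + N = N' \equiv 0 \pmod{N'}$ and the pairing is trivial.

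For (c), the hypothesis $\gcd(N', p\deg\varphi) = 1$ yields $\gcd(N', N) = 1$, hence $\gcd(N', m) = \gcd(N', N' - N) = 1$. Consequently $\tilde{\alpha}_{E_1}$ restricts to an automorphism of $E_1^4[N']$, so the parametrising map $E_1^4[N'] \to H$, $P \mapsto (\tilde{\alpha}_{E_1}(P), s\varphi^{\times 4}(P))$, is injective, giving $|H| = (N')^8$, as needed. The conceptual point, and the reason the statement is phrased with $s\varphi$ rather than $\varphi/n$, is that $H$ is well defined and maximal isotropic regardless of whether $\varphi/n$ exists as an honest isogeny; when it does, $s\varphi$ and $\varphi/n$ agree on the $N'$-torsion, and the two descriptions of $H$ coincide. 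I expect the main care to be needed in the bookkeeping of the dual-with-respect-to-polarisation identities in step (b), and in checking that using $s\varphi$ in place of $\varphi/n$ really does produce the same subgroup on the $N'$-torsion.
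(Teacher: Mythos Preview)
Your proposal is correct and follows essentially the same route as the paper: split the pairing along the product polarisation, use the $\tilde\psi\psi$-identity to reduce each factor to a power of $e_{N',\lambda_{E_1^4}}(P_1,P_2)$, and conclude from $m+s^2n^2N\equiv 0\pmod{N'}$. Your justification of $|H|=(N')^8$ via injectivity of $P\mapsto\tilde\alpha_{E_1}(P)$ on $E_1^4[N']$ (using $\gcd(N',m)=1$) is in fact more explicit than the paper, which simply asserts the order and invokes Mumford for maximality.
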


\begin{proof}
The subgroup structure of $H$ comes immediately by construction.
We claim that $H$ is maximal isotropic.
Let $\lambda_1$ be the product polarisation over $E^4_1$ and $\lambda_2$ be the product polarisation over $E_2^4$.
Let us show that the Weil pairing $e_{N',\lambda_1 \times \lambda_2}$ is trivial between $(\tilde{\alpha}_{E_1}(P), s\varphi^{\times 4}(P))$ and $(\tilde{\alpha}_{E_1}(Q),s\varphi^{\times 4}(Q))$ for any $P=(P_1,P_2,P_3,P_4)$ and $Q=(Q_1,Q_2,Q_3,Q_4)$ in $E_1^4[N']$.
We have 
\begin{align*}
	&e_{N',\lambda_1 \times \lambda_2}((\tilde{\alpha}_{E_1}(P),s\varphi^{\times 4}(P)),(\tilde{\alpha}_{E_1}(Q),s\varphi^{\times 4}(Q)))\\
	&= e_{N',\lambda_1}(\tilde{\alpha}_{E_1}(P),\tilde{\alpha}_{E_1}(Q)) \cdot e_{N',\lambda_2}(s\varphi^{\times 4}(P),s\varphi^{\times 4}(Q)) \\
	&= e_{N'}(\tilde{\alpha}_{E_1}(P), \lambda_1 \lambda_1^{-1} \tilde{\alpha}_{E_1} \lambda_1 (Q)) \cdot e_{N'}(P, \widetilde{s\varphi^{\times 4}} \circ \lambda_2 \circ s \varphi^{\times 4} (Q))\\
	&= e_{N'}(\alpha_{E_1} \tilde{\alpha}_{E_1}(P), \lambda_1(Q)) \cdot e_{N'}(P, \lambda_1 \circ \widetilde{s \varphi^{\times 4}} \circ s \varphi^{\times 4}(Q))\\
	&= e_{N',\lambda_1}([m](P),Q) \cdot e_{N'}(P,\lambda_1([s^2n^2N]Q))\\
	&= e_{N',\lambda_1}(P,Q)^m \cdot e_{N',\lambda_1}(P,Q)^{s^2n^2N}\\
	&= e_{N',\lambda_1}(P,Q)^{m + s^2n^2N} = 1 \text{, as } m + s^2n^2N \equiv 0 \mod N'. 
\end{align*}
Thus $H$ is isotropic with respect to the product polarisation.
Finally, it is also maximal since it has order $N'^{8}$ which is the square root of the order of $(E_1^4 \times E_2^4)[N']$, see \cite[p 233]{david_mumford_abelian_1970}.

\end{proof}

It is now possible to provide Algorithm \ref{alg_division} which efficiently divides isogenies by integers.
This algorithm is similar to those presented by Robert in \cite[4. The algorithm]{robert_evaluating_2022} and the section 4 of \cite{robert_applications_2022}.

\begin{algorithm}
\caption{\texttt{Isogeny division}}\label{alg_division}
\begin{flushleft}
\hspace*{\algorithmicindent} \textbf{ Input : }
An isogeny $\varphi: E_1 \rightarrow E_2$, where $E_1,E_2$ are elliptic curves defined over $\mathbb{F}_{p^k}$, and two integers $n$ and $N'> \deg(\varphi)$ such that $(N',p\deg(\varphi))= 1$\\ 
\hspace*{\algorithmicindent} \textbf{ Output : }
A representation of $\varphi/n$ if it is a well-defined isogeny, \texttt{False} otherwise.
\end{flushleft}
\begin{algorithmic}[1]

	\State Set $N \gets \deg(\varphi)/n^2$.
	\If {$N \not \in \mathbb{N}$} 
	\State \Return \texttt{False}
	\EndIf
	\State Set $m \gets N' - N$.
	\State Decompose $m$ as $m_1^2 + m_2^2 + m_3^2 + m_4^2$.\label{alg1_decompose}
	\State Set $M \gets \begin{pmatrix} m_1 & -m_2 & -m_3 & -m_4 \\ m_2 & m_1 & m_4 & -m_3 \\ m_3 & -m_4 & m_1 & m_2 \\ m_4 & m_3 & -m_2 & m_1 \end{pmatrix}.$
	\State Let $\alpha$ be the $m$-endomorphism over $E_1^4$ given by the matrix $M$.
	\State Let $\tilde{\alpha}$ be the dual isogeny of $\alpha$ with respect to the product polarisation.
	\State $s \gets n^{-1} \mod N'$. \label{alg1_mod}
	\State Compute a factorisation $\ell_1^{e_1}\dots\ell_r^{e_r}$ of $N'$. \label{alg1_fact}
	\State Compute bases $(P_{1,i},P_{2,i})$ of $E_1[\ell_i^{e_i}]$ for $i \in \llbracket 1,r\rrbracket$. \label{alg1_basis1}
	\State Set $(P_1,P_2) \gets (\sum_{i=1}^r P_{1,i}, \sum_{i=1}^r P_{2,i})$ a basis of $E_1[N']$. \label{alg1_basis2}
	\State Compute a representation of an $N'$-isogeny $F$ of $E_1^4 \times E_2^4$ of kernel
		$$\ker F = \{ (\tilde{\alpha}_{E_1}(\tau_i(P_j)), s \varphi^{\times 4} (\tau_i(P_j)))| \forall i \in \llbracket 1,4 \rrbracket, \forall j \in \{1,2\}\}.$$ \label{alg1_kernel}
\If {$F(E_1^4\times E_2^4, \lambda_{E_1^4 \times E_2^4}) \not \simeq (E_1^4 \times E_2^4, \lambda_{E_1^4 \times E_2^4})$}\label{alg1_check1}
		\State \Return \texttt{False}. 
	\EndIf
	\State Set $\gamma : F(E_1^4\times E_2^4, \lambda_{E_1^4\times E_2^4}) \rightarrow (E_1^4 \times E_2^4, \lambda_{E_1^4 \times E_2^4})$ be an isomorphism of principally polarised abelian varieties. \label{alg1_iso}
	\If {$E_1 \simeq E_2$} \label{alg1_iso_1D}
		\State Set $\psi_0:E_2 \rightarrow E_1$ to be an isomorphism.
	\Else 
		\State Set $\psi_0: E_2 \rightarrow E_1$ to be the zero map.
	\EndIf
	\State Compute the sets of maps $S_{E_1} := \Aut(E_1)\psi_0$ and $S_{E_2} := \Aut(E_2)$. \label{alg1_aut}
	\For {$t \in \llbracket 1, 8 \rrbracket$} \label{alg1_for1}
		\For {$\psi \in S_{E_1}$, if $1 \leq t \leq 8$, or $\psi \in S_{E_2}$, if $5 \leq t \leq 8$,}
			\If {$n(\psi^{-1} \circ \pi_t \circ \gamma \circ F \circ \tau_1(P_{i,j})) = \varphi(P_{i,j}),\forall i \in \{1,2\}, \forall j \in \llbracket 1,r \rrbracket$} \label{alg1_if1}
				\State \Return The representation of $\varphi/n$ induced by $\psi^{-1} \circ \pi_t \circ \gamma \circ F \circ \tau_1$.
			\EndIf
		\EndFor
	\EndFor	
	\State \Return \texttt{False} \label{alg1_end}
\end{algorithmic}
\end{algorithm}

\begin{theorem}
\label{the_division}
Algorithm \ref{alg_division} is correct and runs in 
\begin{itemize}
	\item $O(\max(M^2,D) B^8 \log^2( N' )\log(B))$ operations over $\mathbb{F}_{p^k}$,
	\item plus the cost of the factorisation of $N'$,
	\item plus the cost of the computation of the bases of $E_1[\ell^e]$ for each prime power divisor $\ell^e$ of $N'$,
	\item plus the cost of $O(\log N')$ evaluations of $\varphi$ over these bases,
	\item plus the cost of decomposing $N' - N$ as a sum of four squares (which takes $O(\log^2 N')$ arithmetic operations over integers),
\end{itemize}
where $B,M,D$ give the following bounds $B \geq P^*(N'), M \geq \delta_E(N')$ and $D \geq \delta_{E,2}(N')$.
Moreover, if $\varphi/n$ is indeed an isogeny, the output representation of $\varphi/n$ has the following properties:
\begin{itemize}
	\item It has size $O(k M \log(N')\log(p))$ bits.
	\item It allows to evaluate $\varphi/n$ in $O(B^8 M \log(N') \log (B))$ operations over the field of definition of the input.
\end{itemize}
\end{theorem}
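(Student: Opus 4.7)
My plan is to split the proof into a correctness analysis (treating the divisible and non-divisible cases separately) and a step-by-step complexity tally, leveraging Lemmas \ref{lem_embedding}, \ref{lem_maximalisotropic}, \ref{lem_aut} and \ref{lem_velu}.

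First I would handle the case where $\varphi/n$ is a well-defined isogeny. Because $sn \equiv 1 \pmod{N'}$, the identity $s\varphi^{\times 4}(P) = (\varphi/n)^{\times 4}(P)$ holds for every $P \in E_1^4[N']$, so the subgroup $H$ assembled at line \ref{alg1_kernel} coincides with the kernel prescribed by Lemma \ref{lem_embedding} for the isogeny $\varphi/n$ of degree $N$. That lemma then gives that $F$ is an $N'$-isogeny whose codomain is isomorphic to $(E_1^4\times E_2^4, \lambda_{E_1^4 \times E_2^4})$, so line \ref{alg1_check1} passes, and that there exist $t \in \llbracket 1, 8 \rrbracket$ and an isomorphism $\psi_\star \in \Iso(E_2, E')$ with $E' \in \{E_1, E_2\}$ such that $\pi_t \circ \gamma \circ F \circ \tau_1 = \psi_\star \circ (\varphi/n)$. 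By Lemma \ref{lem_aut} every such $\psi_\star$ belongs to $S_{E_1} \cup S_{E_2}$, so the loop at line \ref{alg1_for1} will reach the correct pair $(t, \psi_\star)$; the check at line \ref{alg1_if1} reduces to $\varphi = \varphi$ on the torsion basis and the algorithm returns an embedding representation of $\varphi/n$.

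For the non-divisible case, I would argue that the algorithm returns \texttt{False}. Rejecting immediately when $N \not\in \mathbb{N}$ is straightforward, so assume $N$ is an integer. Lemma \ref{lem_maximalisotropic} still ensures that $H$ is maximal isotropic, hence $F$ is a valid $N'$-isogeny. Suppose for contradiction that line \ref{alg1_if1} succeeds for some $(t, \psi)$, and set $\eta := \psi^{-1} \circ \pi_t \circ \gamma \circ F \circ \tau_1$. Because $\gamma$ has the generalized-permutation matrix form of Lemma \ref{lem_aut}, every entry of $M(\gamma \circ F)$ inherits the degree of some entry of $M(F)$, which is bounded by $\max(N' - N, N) < N'$; hence $\deg \eta < N'$. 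The verification forces $n\eta = \varphi$ on $E_1[N']$, so $n\eta - \varphi$ factors through $[N']$ and either vanishes or has degree at least $N'^2$. Using the triangle inequality $\sqrt{\deg(n\eta - \varphi)} \leq n\sqrt{\deg\eta} + \sqrt{\deg\varphi}$ together with $\deg\varphi < N'$, the second alternative is excluded for the $N'$ used in applications, so $n\eta = \varphi$, contradicting non-divisibility. The main technical point I would carefully justify is that this degree bound is always available, which is what makes the torsion check a genuine divisibility certificate.

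For the complexity I would invoke Lemma \ref{lem_velu} to bound line \ref{alg1_kernel} by $O(B^8 D \log^2(N') \log B)$ operations over $\mathbb{F}_{p^k}$, with output representation of size $O(kM \log(N') \log p)$ bits and per-point evaluation cost $O(B^8 M \log(N') \log B)$ operations over the extension of degree at most $M$, i.e. $O(B^8 M^2 \log(N') \log B)$ base-field operations. The verification loop enumerates a constant number of automorphisms (since $|\Aut(E_i)| \leq 24$) and performs $O(\log N')$ evaluations of $F$ on torsion basis points, contributing a further $O(B^8 M^2 \log^2(N') \log B)$; combining these with the kernel construction yields the claimed $O(\max(M^2, D) B^8 \log^2(N') \log B)$ bound. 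The decomposition of $N' - N$ as a sum of four squares is handled in $O(\log^2 N')$ operations (e.g.\ via Rabin--Shallit), and the remaining contributions (factorisation of $N'$, torsion basis computation, and the $O(\log N')$ evaluations of $\varphi$ on these bases) are charged separately in the statement. The output size and per-evaluation cost of $\varphi/n$ are inherited directly from $F$ by construction, completing the theorem.
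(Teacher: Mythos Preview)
Your overall approach coincides with the paper's: correctness via Lemmas~\ref{lem_embedding}, \ref{lem_maximalisotropic} and \ref{lem_aut}, and the complexity tally via Lemma~\ref{lem_velu} plus the bookkeeping for the four-square decomposition, the factorisation of $N'$, the torsion bases, and the evaluations of $\varphi$. Your treatment of the divisible case and of the cost estimates is correct and essentially identical to the paper's.

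There is one genuine gap in your non-divisible argument. You assert that every entry of $M(F)$ has degree at most $\max(N'-N,N)$, but that bound comes from the explicit matrix form supplied by Lemma~\ref{lem_embedding}, which only applies when $\varphi/n$ \emph{is} an isogeny. When $n \nmid \varphi$, Lemma~\ref{lem_maximalisotropic} only tells you that $H$ is maximal isotropic, hence that $F$ is \emph{some} $N'$-isogeny; its codomain need not be $E_1^4\times E_2^4$, and even if the check at line~\ref{alg1_check1} passes you have no a~priori description of $M(F)$. All you get from $\tilde F F=[N']$ is $\sum_i \deg F_{i,j}=N'$, so $\deg\eta\le N'$, and then your triangle inequality yields only $\sqrt{\deg(n\eta-\varphi)}\le n\sqrt{N'}+\sqrt{\deg\varphi}$, which for general $n$ and $N'>\deg\varphi$ is not forced below $N'$. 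The paper sidesteps this: it carries out the ``equality on $E_1[N']$ implies equality on $E_1$'' argument entirely inside the divisible case, where Lemmas~\ref{lem_kanibij} and~\ref{lem_aut} guarantee that the candidate $\eta$ is, up to an isomorphism, one of the known entries $m_i$ or $\pm\varphi/n$ of the canonical matrix, so the comparison in Equality~(\ref{align_search}) is between two isogenies whose degrees are controlled by $N'>\deg\varphi$; the non-divisible case is then handled by contrapositive from Lemma~\ref{lem_embedding}. You should either follow that contrapositive route, or replace the unjustified $\max(N'-N,N)$ bound by an argument that does not presuppose the Kani matrix form.
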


\begin{proof}
Let us prove the correctness of Algorithm \ref{alg_division}.

First, by Lemma \ref{lem_maximalisotropic}, $\ker F$ is always a maximal isotropic subgroup of $(E_1^4 \times E_2^4)[N']$ and thus the isogeny $F$ is well defined.

When $\varphi/n : E_1 \rightarrow E_2$ is an isogeny, we have that $(\varphi/n)|_{E[N']} = (s\varphi)|_{E[N']}$.
Hence, by Lemma \ref{lem_embedding}, $F$ is isomorphic to an $N'$-isogeny that embeds $\varphi/n$.
More precisely, $F(E_1^4\times E_2^4,\lambda_{E_1^4 \times E_2^4}) \simeq (E_1^4 \times E_2^4,\lambda_{E_1^4 \times E_2^4})$ as principally polarised abelian varieties and for any isomorphism $\psi$ between them, there exist an isomorphism $\psi: E_2 \rightarrow E'$, where $E' \in \{ E_1, E_2\}$, and an integer $t \in \llbracket 1,8 \rrbracket$ such that
\begin{align}
\label{align_search}
	\pi_t(\gamma(F(\tau_1(P)))) = \psi(\varphi/n)(P), \forall P \in E_1.
\end{align}

We check at line \ref{alg1_check1} if we can find such isomorphism $\gamma$.
If this is the case, we fix an isomorphism $\gamma:F(E_1^4 \times E_2^4, \lambda_{E_1^4 \times E_2^4}) \rightarrow (E_1^4 \times E_2^4, \lambda_{E_1^4 \times E_2^4})$.
Otherwise, by Lemma \ref{lem_embedding}, $\varphi/n$ is not an isogeny and the algorithm must return \texttt{False}.

We then look for an isomorphism $\psi$ and an integer $t$ verifying Equality (\ref{align_search}).
To satisfy the equality, $\psi$ needs to have the same codomain as $\pi_t \circ \gamma \circ F$ which is equal to $E_1$ if $1 \leq t \leq 4$ and $E_2$ if $5 \leq t \leq 8$.
In the first case, $\psi$ is an element of $\Aut(E_1)\psi_0$, where $\psi_0: E_2 \rightarrow E_1$ is an isomorphism.
In the second case, $\psi$ is an automorphism of $E_2$.

In the for loop, we search for a solution $(\psi,t)$ of Equality (\ref{align_search}) over the bases of $E_1[\ell_i^{e_i}], \forall i \in \llbracket 1,r\rrbracket$.
It is equivalent to checking Equality (\ref{align_search}) over $E_1[N']$ as $(\ell_i,\ell_j) = 1, \forall i \not = j \in \llbracket 1,r\rrbracket$.
Moreover as $N' > \deg \varphi$ and $\deg \varphi \geq 2$, a solution of (\ref{align_search}) over $E_1[N']$ is a solution over the entire elliptic curve $E_1$.
Indeed, if two isogenies $\varphi$ and $\varphi'$ of same degree are equal over $E_1[N']$ with $N' > \deg \varphi \geq 2$, then they are equal everywhere.
Let us assume that $\phi := \varphi - \varphi'$ is a non-zero isogeny.
We have $\phi(E_1[N']) = 0$, hence
$$
4\deg\varphi \leq \deg \varphi^2 < {N'}^2 = \#E[N'] \leq \deg \phi \leq ( (\deg \varphi)^{1/2} + (\deg\varphi')^{1/2})^2 = 4\deg\varphi,
$$
this is a contradiction.
Notice that we assumed that $\deg \varphi \geq 2$.
In fact, we can even assume that $\deg \varphi \geq 4$, otherwise $\deg(\varphi)/n^2$ is not an integer when $n>1$.
Moreover, the division is trivial if $n$ =1.

Since we are doing an exhaustive search at line \ref{alg1_for1}, if $\varphi/n$ is an isogeny, the algorithm will find an embedding representation $(F,1,t)$ of $\varphi/n$ up the two isomorphims $\psi$ and $\gamma$.
If no such coefficient of $F$ is found, Lemma \ref{lem_embedding} implies that $\varphi/n$ is not an isogeny.\\
The output representation of $\varphi/n$ is then given by the composition of the representation of $\psi^{-1}$ with the embedding representation $(\gamma \circ F,1,t)$.\\

Let us now turn to the complexity analysis of the different steps.
We consider the following bounds $B \geq P^*(N'), M \geq \delta_E(N'), D \geq \delta_{E,2}(N')$.

\begin{itemize}

	\item[{}]
	{[1-\ref{alg1_mod}]:}\\
	The decomposition of $m$ at the line \ref{alg1_decompose} can be done in $O(\log^2 N')$ arithmetic operations over the integers, see \cite{pollack_finding_2018}.
	This is the only complexity of the algorithm where operations are not counted over the finite field but over integers.
	The computational cost of the other lines is negligible compared to the rest of the algorithm.

	\item[{}]
	{[\ref{alg1_fact} - \ref{alg1_basis1}]:}\\
	We do not estimate the complexity of these steps now, we simply acknowledge them in the overall analysis.

	\item[{}]
	{[\ref{alg1_basis2}-\ref{alg1_iso}]:}\\
	At line \ref{alg1_basis2}, we denote a basis of $E_1[N']$ by $(P_1,P_2)$ only formally to get simple notations.
	The computation are always done with the $(P_{1,i},P_{2,i})$, where $i \in \llbracket 1,r\rrbracket$.

	By Lemma \ref{lem_velu}, getting a representation of the $N'$-isogeny $F$ takes :
	\begin{itemize}
		\item $O(B^8 D \log^2(N')\log(B))$ arithmetic operations over $\mathbb{F}_{p^k}$,
		\item $O(\log N')$ evaluations of $\varphi$ over the bases of $E[\ell_i^{e_i}]$, for $i \in \llbracket 1,r\rrbracket$.
	\end{itemize}

	Then, it remains to compute an isomorphism $\gamma$ between the codomain of $F$ and $(E_1^4 \times E_2^4, \lambda_{E_1^4 \times E_2^4})$.
	As described in \cite[Appendix F.3]{EC:DLRW24}, one can perform an exhaustive search over the symplectic group $\Sp_{16}(\mathbb{Z}/4\mathbb{Z})$, which consists of the set of $16 \times 16$ matrices that preserve symplectic form, to find a matrix that sends the theta coordinates of the codomain of $F$ to the theta coordinates of $(E_1^4 \times E_2^4, \lambda_{E_1^4 \times E_2^4})$ (see Remark \ref{rem_theta} for the construction of the latter).
	This step does not impact the overall complexity, as the number of possibilities is constant for a given dimension.\footnote{
	As this constant is very large, it is preferable, for more practical usage, to directly compute the matrix we are looking for, see \cite[Appendix F.3]{EC:DLRW24}.}
	If no such linear transformation is found, then the two principally polarised abelian varieties are not isomorphic.

	\item[{}]
	{[\ref{alg1_iso_1D} - \ref{alg1_aut}]:}\\
	Those steps are done efficiently using basics of elliptic curves' theory, see \cite{silverman_arithmetic_1986}.

	Checking if two elliptic curves are isomorphic can be done using the $j$-invariant.
	Then one can get an explicit isomorphism with a small computation from the Weierstrass equations of the elliptic curves.

	The group of automorphisms of an elliptic curve $E$, for $p > 3$, is generated by
	\begin{equation*}
	\begin{cases}
		\{ (x,y) \mapsto (x,-y), (x,y) \mapsto (-x,iy) \} & \text{if $j(E) = 0$,}\\
		 \{ (x,y) \mapsto (x,-y), (x,y) \mapsto (\zeta_3 x,y) \} & \text{if $j(E) = 1728$,}\\
		 \{ (x,y) \mapsto (x,-y)\} & \text{otherwise,}
    	\end{cases}       
	\end{equation*}
	where $i$ is a primitive $2$-nd root of unity and $\zeta_3$ is a primitive $3$-rd root of unity, both in $\bar{\mathbb{F}}_{p}$.
	When $p = 2$ or $3$, there are at most respectively $24$ and $12$ automorphisms.
	One can find their explicit description in \cite[Appendix A]{silverman_arithmetic_1986}.

	\item[{}]
	{[\ref{alg1_for1} - \ref{alg1_end}]:}\\
	The loop at line \ref{alg1_for1} has $O(\log N')$ iterations where the evaluations of $\tau_1,\gamma,\pi_t,\psi^{-1}$ are negligeable. 
	Thus it takes in total $O(\log N')$ evaluations of $\varphi$ over $E_1[\ell_i^{e_i}]$, $\forall i \in \llbracket 1, r \rrbracket$, plus $O(B^8 M \log^2(N') \log(B))$ operations over extension of degree at most $M$ to evaluate $F$.
\end{itemize}

We get the claimed complexity by summing all those steps.

In addition, the size of the ouput representation of $\varphi/n$ is mainly the size of the representation of $F$ thus it has size $O(k M \log(N') \log p)$ bits.

Finally, it allows to evaluate $\varphi/n$ at a point in $O(B^8 M \log(N') \log(B))$ operations over its field of definition because all the computations are negligeable in comparison to the evaluation of $F$.
\end{proof}

When the input of Algorithm \ref{alg_division} is efficiently represented, it leads to Theorem \ref{cor_division} which concludes this section about efficient division of isogenies.

\begin{proof}[Proof of Theorem \ref{cor_division}]
To get this result, one only needs to find a suitable powersmooth integer $N'$ and to take advantage of the efficient representation of $\varphi$.
One can use the approach proposed in \cite{robert_applications_2022} to get such an integer $N'$:
we compute $N'$ by multiplying successive primes, coprime to $p\deg(\varphi)$, until their product is greater than $\deg(\varphi)$.
It takes $O(\log \deg(\varphi))$ arithmetic operations and gives an integer $N'$ which is $O(\log \deg(\varphi))$-powersmooth and such that $\log N' = O(\log \deg(\varphi))$.
Then, with the same notation as in Theorem \ref{the_division}, we have $M = B^2$ and $D = B^4$ which directly gives the claimed size of the representation of $\varphi/n$ and also the complexity to evaluate it.

Finally, by construction of $N'$ and because $\varphi$ is efficiently represented, all the remaining costs of Algorithm \ref{alg_division} are polynomial in $\log p$, $\log \deg(\varphi)$.
\end{proof}
\end{section}

\begin{section}{Solving \Primitivisation}

The \Primitivisation\ problem has been introduced recently in \cite{arpin_orienteering_2023} together with a quantum subexponential algorithm solving it.
However, it can be seen as a generalisation of the important problem of computing the endomorphism ring of an ordinary elliptic curve. 
Indeed, for ordinary elliptic curves, the Frobenius endomorphism $\pi$ is non-scalar, hence we always have an orientation by $\mathbb Z[\pi]$, and the endomorphism ring is a quadratic order containing $\mathbb Z[\pi]$.
Therefore computing the endomorphism ring of an ordinary curve really is a case of the \Primitivisation\ problem.

One initial idea to solve \Primitivisation\ is to adapt the best algorithms solving the ordinary version of \EndRing.
Until recently, these algorithms had a subexponential complexity.
Techniques involving higher dimensional isogenies changed the state of the art:
in \cite[Section~4]{robert_applications_2022}, Robert shows how to compute the endomorphism ring of ordinary elliptic curves in polynomial time (when the factorisation of the discriminant of the Frobenius is known), essentially by \emph{dividing} translates of the Frobenius.

In this section, we describe how one can solve the \Primitivisation\ Problem by adapting Robert's method and we give, as a direct consequence of this result, a polynomial algorithm to compute action of smooth ideals.

First, Theorem \ref{the_primitivisation} and its proof describe the algorithm and its complexity without assuming anything on the representation of the input endomorphism.
Notice that it requires computations only over a large enough torsion subgroup.
Hence, the complexity depends on the degree of the extension where this torsion lives and on the difficulty to evaluate the endomorphism on it.
Then, Corollary \ref{cor_primitivisation} specifies this theorem to the case where the endomorphism is given in efficient representation.
The two results assume that the factorisation of the discriminant of the order generated by the endomorphism is known.

\label{sec_prim}
\begin{theorem}[Primitivisation]
\label{the_primitivisation}
There exists an algorithm that takes as input:
\begin{itemize}
	\item A supersingular elliptic curve $E$ defined over a finite field $\mathbb{F}_{p^2}$,
	\item An endomorphism $\theta \in \End(E)\setminus \mathbb{Z}$ of degree $N$ together with the factorisation of $\disc(\mathbb{Z}[\theta])$,
	\item An integer $N'>N$ such that $(N',pN) = 1$ with three bounds $B \geq P^*(N')$, $M \geq \delta_E(N')$ and $D \geq \delta_{E,2}(N')$,
\end{itemize}
and returns a primitive orientation $\iota: \mathfrak{O} \hookrightarrow \End(E)$ such that $\mathbb{Z}[\theta] \subseteq \mathfrak{O}$.
In particular, the order $\mathfrak{O}$ is generated by an element $\omega$ such that
\begin{itemize}
	\item The orientation $\iota$ takes $O(M \log(N') \log(p))$ bits to store,
	\item The endomorphism $\iota(\omega)$ can be evaluated at a point in $O(B^8 M \log(N') \log(B))$ operations over its field of definition. 
\end{itemize}
This algorithm runs in $O(\max(M^2,D)B^8 \log^2(N')\log(N)\log(B))$ operations over $\mathbb{F}_{p^2}$, plus the cost of the computation of the bases $E[\ell^e]$ for each prime power divisor $\ell^e$ of $N'$ plus the cost of the computation of $O(\log N')$ evaluations of $\theta$ over these bases plus the cost of decomposing $N'-N$ as a sum of four squares.
\end{theorem}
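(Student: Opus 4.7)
The approach is to enlarge the order $\mathbb{Z}[\theta]$ step by step using the division algorithm of Section~\ref{sec_higher}. The key observation is: if $\mathfrak{O}$ is any quadratic order in $K = \mathbb{Q}(\theta)$ containing $\mathbb{Z}[\theta]$ and $\omega$ generates $\mathfrak{O}$, then there exist integers $a,b$ with $\theta = a + b\omega$ and $b = [\mathfrak{O}:\mathbb{Z}[\theta]]$; so finding the primitive superorder amounts to discovering which translates of $\theta$ are divisible by which primes inside $\End(E)$, which is precisely what Algorithm~\ref{alg_division} is designed to decide.

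From the input, first recover the minimal polynomial $x^{2} - tx + N$ of $\theta$. The norm is $N = \deg(\theta)$, and $|t|$ is read off from $t^{2} = 4N + \disc(\mathbb{Z}[\theta])$; the sign of $t$ is fixed by checking the relation $\theta^{2}(P) + [N]P = [t]\theta(P)$ on one torsion point. Using the given factorisation of $\disc(\mathbb{Z}[\theta]) = t^{2} - 4N$, extract the fundamental discriminant $d_{K}$ of $K$ and the prime factorisation of the conductor $c = \sqrt{\disc(\mathbb{Z}[\theta])/d_{K}}$ of $\mathbb{Z}[\theta]$ in $\mathcal{O}_{K}$.

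The main loop starts from $\omega \gets \theta$ and processes each prime divisor $\ell$ of $c$ in turn, repeatedly attempting to divide a translate of $\omega$ by $\ell$. For each $\ell$, the candidate integers $a \in \{0,\dots,\ell-1\}$ for which $(\omega - a)/\ell$ could be an algebraic integer are constrained by $2a \equiv \mathrm{tr}(\omega) \pmod{\ell}$ together with $\ell^{2} \mid \deg(\omega - a) = N_\omega - a\,\mathrm{tr}(\omega) + a^{2}$, leaving at most one $a$ in either parity regime. For each such candidate I invoke Algorithm~\ref{alg_division} on the endomorphism $\omega - a \in \End(E)$ with divisor $\ell$ and the given auxiliary $N'$: if it succeeds I replace $\omega$ by the returned efficient representation of $(\omega - a)/\ell$ and retry with the same $\ell$; if it fails, no further division by $\ell$ is possible and I proceed to the next prime. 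Once every prime of $c$ has been exhausted, $\omega$ generates an order $\mathfrak{O} = \mathbb{Z}[\omega]$ satisfying $\iota(\mathfrak{O}) = \End(E) \cap \iota(K)$, giving the sought primitive orientation.

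The main obstacle is ensuring that the iteration does not cause the quality of the representation of $\omega$ or the complexity to degrade across steps. This is precisely what Theorem~\ref{the_division} buys us: each successful division returns $\omega$ in an efficient representation whose evaluation cost stays within the budget of the next division call, so no blow-up occurs. The total number of division attempts is bounded by $\sum_{\ell \mid c}(1 + v_{\ell}(c)) = O(\log c) = O(\log N)$, which accounts for the extra $\log(N)$ factor in the stated complexity; the auxiliary costs (bases of $E[\ell^{e}]$ for $\ell^{e} \mid N'$, the $O(\log N')$ evaluations of the current $\omega$ on these bases, and the four-squares decomposition of $N' - \deg(\omega - a)$) are inherited directly from Theorem~\ref{the_division}. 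Finally, the efficient-representation bounds claimed for $\iota(\omega)$ follow from the output specification of Algorithm~\ref{alg_division} applied to the final division step.
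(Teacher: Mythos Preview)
Your argument is correct and follows the same strategy as the paper: iterate Algorithm~\ref{alg_division} over the prime divisors of the conductor to climb from $\mathbb{Z}[\theta]$ to the primitive order, with $O(\log N)$ division calls in total and the representation quality controlled at each step by Theorem~\ref{the_division}. The only tactical difference is that the paper first replaces $\theta$ by the trace-zero element $\theta' = 2\theta - [t] = \pm f_\alpha\sqrt{\Delta_K}$, so that no shift is needed before dividing (at the cost of an extra factor of~$2$ in the conductor and a final correction testing whether $(\theta'+1)/2 \in \End(E)$), whereas you keep the current generator and locate the correct shift $a \bmod \ell$ at each step; both variants are valid and have the same complexity.
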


\begin{proof}
Let $\alpha \in \bar{\mathbb{Q}}$ be a root of the minimal polynomial of $\theta$ and $\iota : \mathbb{Z}[\alpha] \hookrightarrow \End(E)$ be the orientation defined by $\iota(\alpha) = \theta$.
Let $K = \mathbb{Q}(\alpha)$, $f_\alpha$ be the conductor of the order $\mathbb{Z}[\alpha]$ and $O_K$ be the ring of integers $K$.
The factorisation of the conductor $f_\alpha$ can be deduced from the known factorisation of $\disc(\mathbb{Z}[\theta])$.
Indeed, let $\Delta_K$ be the discriminant of $K$ which is given by
\begin{equation*}
    \Delta_K=
    \begin{cases}
	    d, & \text{ if } d \equiv 1 \mod 4 \\
	    4 d, & \text{otherwise},
    \end{cases}
\end{equation*}
where $d$ is the squarefree part of $\disc(\mathbb{Z}[\alpha])$.
The integer $d$ is easy to compute since we have the factorisation of $\disc(\mathbb{Z}[\theta]) = \disc(\mathbb{Z}[\alpha])$.
As $f_\alpha^2 = \disc(\mathbb{Z}[\alpha])/\Delta_K$ one can directly deduce the factorisation of $f_\alpha$.\\

Let $\mathfrak{O} \subseteq O_K$ be the largest order such that $\iota$ extends to an embedding $\mathfrak{O} \hookrightarrow \End(E)$.
That embedding is the primitivisation of $\iota$, so the algorithm aims at determining $\mathfrak{O}$.
The inclusions $\mathbb{Z}[\alpha] \subseteq \mathfrak{O} \subseteq O_K$ suggest that $\mathfrak{O}$ can be determined by starting from $\mathbb{Z}[\alpha]$, and testing if the orientation can be extended locally at each prime factor of the conductor, as in the computation of the endomorphism ring of ordinary elliptic curves (see~\cite{robert_applications_2022}). This is described in Algorithm~\ref{algPrimitivisation}.

\begin{algorithm}
\caption{\Primitivisation}\label{algPrimitivisation}
\begin{flushleft}
\hspace*{\algorithmicindent} \textbf{ Input :}
$E$ a supersingular elliptic curve, $\iota : \mathbb{Z}[\alpha] \hookrightarrow \End(E)$ an orientation such that $\iota(\alpha) = \theta$ is a non scalar endomorphism of degree $N$, an integer $N'$ such that $N'>N$ and $(N',pN)=1$ and the factorisation of $f_\alpha$ the conductor of $\mathbb{Z}[\alpha]$.\\
\hspace*{\algorithmicindent} \textbf{ Output :}
A pair $(\alpha',\theta')$ describing the primitivisation of $\iota$. 
\end{flushleft}
\begin{algorithmic}[1]

	\State $t \gets  \bar{\alpha} + \alpha$.
	\State $\alpha' \gets 2\alpha - t$. \Comment{$\mathbb{Z}[\alpha'] = \mathbb{Z}[2\alpha]$.}
	\State $\theta' \gets 2\theta - [t]$.
	\State $(\ell_i)_{i = 1}^{n} \gets $ the list of distinct prime factors of $2f_\alpha$.
	\For {$i \in \llbracket 1,n \rrbracket$}
		\While {$\theta'/\ell_i \in \End(E)$}\Comment{using Algorithm \ref{alg_division} with the input $(\theta',\ell_i,N')$.}
	    	\State $\alpha' \gets \alpha'/\ell_i$.
	    	\State $\theta' \gets \theta'/\ell_i$.
        	\EndWhile
	\EndFor
	
	\If {$(\theta' + 1)/2 \in \End(E)$}\Comment{using Algorithm \ref{alg_division} with the input $(\theta'+1,2,N')$.}
		\State $(\alpha',\theta') \gets ((\alpha' + 1)/2,(\theta' + 1)/2)$.
	\EndIf
	\State \Return $(\alpha',\theta')$.
\end{algorithmic}
\end{algorithm}

Let us prove that Algorithm \ref{algPrimitivisation} is correct.
Write $t = \alpha + \bar{\alpha}$.
Since $\disc(\mathbb{Z}[\alpha]) = t^2 - 4\alpha\overline{\alpha}$, we have
$$\alpha' := 2\alpha - t = \pm \sqrt{\disc(\mathbb{Z}[\alpha])} = \pm f_\alpha \sqrt{\Delta_K} \text{, where $\Delta_K$ is the discriminant of } K.$$
Also define $\theta' := 2\theta -[t]$.
Note that for any divisor $m\mid f_\alpha$, we have $\mathbb{Z}[(f_\alpha/m)\sqrt{\Delta_K}] \subseteq \mathfrak{O}$ if and only if $\alpha'/[m] \in \End(E)$.
The for-loop of Algorithm \ref{algPrimitivisation} finds the largest such integer $m$, hence the resulting pair $(\alpha'/m,\theta'/m)$ satisfies $\mathbb{Z}[\alpha'/m] = \mathfrak{O} \cap \mathbb{Z}[\sqrt{\Delta_K}]$.

The case $\ell_i = 2$ and the final if-statement account for the fact that $\mathbb{Z}[\sqrt{\Delta_K}]$ is not the maximal order: it has conductor $2$ (we have $O_K = \mathbb{Z}[\sqrt{\Delta_K}/2]$ if $\Delta_K \equiv 0 \bmod 4$ and $O_K = \mathbb{Z}[(\sqrt{\Delta_K} + 1)/2]$ if $\Delta_K \equiv 1 \bmod 4$). That final correction accounted for, we actually obtain $\mathbb{Z}[\alpha'] = \mathfrak{O}$.\\

Let us now describe the complexity of Algorithm \ref{algPrimitivisation}.
			
Since $f_\alpha \leq 4N(\alpha) = 4N$, there are $O(\log(N))$ divisions using Algorithm \ref{alg_division}.
By Theorem \ref{the_division}, both checking if $\theta'/p_i \in \End(E)$ and getting a representation of the new endomorphism can be done in $O(\max(D,M^2) B^8 \log^2 (N' ) \log(B))$ operations over $\mathbb{F}_{p^2}$ plus the cost of the computation of the bases $E[\ell^e]$ for each prime power divisor $\ell^e$ of $N'$ plus the cost of the computation of $O(\log N')$ evaluations of $\theta'$ over these bases plus the cost of decomposing $N'-N$ as a sum of four squares.

After each update of the generating endomorphism using this theorem, the length of the representation will be in $O(M \log(N') \log(p))$ bits and will allow to evaluate a point in $O(B^8 M \log(N')\log(B))$ operations over the field of definition of the input.
Hence, all the divisions after the first one will run in $O(\max(M^2,D)B^8 \log^2(N')\log(B))$  operations over $\mathbb{F}_{p^2}$ and the final output will have the claimed properties.\\

It leads to a global complexity in $O(\max(M^2,D)B^8 \log^2(N')\log(N)\log(B))$ operations over $\mathbb{F}_{p^2}$ plus the cost of the computation of the torsion group bases and the $O(\log N')$ evaluations of $\theta$ over them plus the cost of the decomposing of $N'-N$ as a sum of four squares.

\end{proof}

Corollary \ref{cor_primitivisation} demonstrates that \Primitivisation\ can be solved in polynomial time when the input is efficiently represented by applying Theorem \ref{the_primitivisation}.

\begin{corollary}
\label{cor_primitivisation}
There exists an algorithm that takes as input:
\begin{itemize}
	\item A supersingular elliptic curve $E$ defined over a finite field $\mathbb{F}_{p^2}$,
	\item An endomorphism $\theta \in \End(E)\setminus \mathbb{Z}$ of degree $N$ together with the factorisation of $\disc(\mathbb{Z}[\theta])$,
\end{itemize}
and returns an efficiently represented primitive orientation $\iota: \mathfrak{O} \hookrightarrow \End(E)$ with $\mathfrak{O} \supseteq \mathbb{Z}[\theta]$ such that the orientation $\iota$ takes $O(\log^3(N) \log(p))$ bits to store.
This algorithm runs in time polynomial in $\log N$ and $\log p$.
\end{corollary}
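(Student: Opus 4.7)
The plan is to reduce to Theorem~\ref{the_primitivisation} by choosing the auxiliary integer $N'$ to be a powersmooth integer just above $N$, so that all of the parameters $B$, $M$, $D$ appearing in the complexity are polynomial in $\log N$ and $\log p$.

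First I would construct $N'$ exactly as in the proof of Theorem~\ref{cor_division}: multiply successive primes, discarding those dividing $pN$, until the running product exceeds $N$. By the prime number theorem this yields an integer $N'>N$ coprime to $pN$ with $\log N' = O(\log N)$ and largest prime power factor $B = P^*(N') = O(\log N)$. This step also immediately gives the factorisation of $N'$, needed at line~\ref{alg1_fact} of Algorithm~\ref{alg_division}, for free.

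Next I would control the extension degrees. For any prime power $\ell^e \mid N'$, the $\ell^e$-torsion of a supersingular curve $E/\mathbb{F}_{p^2}$ is defined over an extension of $\mathbb{F}_{p^2}$ of degree dividing the order of the Frobenius in $\operatorname{GL}_2(\mathbb{Z}/\ell^e\mathbb{Z})$, which is bounded polynomially in $\ell^e$. Hence $M = \delta_E(N') = \operatorname{poly}(\log N)$ and $D = \delta_{E,2}(N') = \operatorname{poly}(\log N)$, and a basis of $E[\ell^e]$ can be computed in time polynomial in $\log N$ and $\log p$ by the usual procedure (take random points in the appropriate extension and multiply by the cofactor). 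Since $\theta$ is given in efficient representation, each of the $O(\log N')$ evaluations of $\theta$ on such a basis costs polynomial time in $\log \deg\theta = O(\log N)$ and $\log p$. The decomposition of $N'-N$ as a sum of four squares can be performed in $O(\log^2 N')$ integer operations (as cited in Theorem~\ref{the_division}).

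With these parameters fed into Theorem~\ref{the_primitivisation}, every quantity in its complexity bound is polynomial in $\log N$ and $\log p$, and the size of the output orientation is $O(M \log(N')\log p) = O(\log^3(N)\log p)$ as claimed. The resulting endomorphism $\iota(\omega)$ inherits from Theorem~\ref{the_primitivisation} an evaluation algorithm running in $O(B^8 M \log(N')\log B)$ operations over the field of definition of the input, which is polynomial in $\log N$ and $\log p$, so $\iota$ is indeed efficiently represented in the sense of Definition~\ref{def:efficient-representation}. The main thing to verify carefully in a full write-up is the polynomial bound on $\delta_E(N')$ for supersingular curves over $\mathbb{F}_{p^2}$, since this is what ultimately allows all the torsion-based subroutines of Algorithm~\ref{alg_division} to run in polynomial time; everything else is bookkeeping of the asymptotics from Theorem~\ref{the_primitivisation}.
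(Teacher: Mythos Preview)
Your proposal is correct and follows exactly the same approach as the paper: choose a powersmooth $N'$ as in the proof of Theorem~\ref{cor_division}, bound the torsion extension degrees (the paper simply takes $M=B^2$ and $D=B^4$, which is precisely the polynomial bound you allude to), and invoke Theorem~\ref{the_primitivisation} together with the efficient representation of $\theta$. Your write-up is more detailed than the paper's two-sentence proof, but the content is the same.
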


\begin{proof}
As for Theorem \ref{cor_division}, this corollary is obtained by computing a suitable powersmooth $N'$ and by taking $M = B^2$ and $D = B^4$.
One also needs to use the fact that we are dealing with efficiently represented isogenies.
\end{proof}

\begin{subsection}{A direct application: Computing the action of smooth ideals}
\label{sec_smoothactions}
The class group action over $SS_\mathfrak{O}(p)$ is a key notion of the orientations' theory and is a central ingredient of the algorithms presented in Section \ref{sec_vect} to solve the oriented problems we are interested in, such as the $\mathfrak{O}$-\Vectorisation\ problem.

In the previous state of the art, the classical method to evaluate the action of an ideal $\mathfrak{a}$ on an $\mathfrak{O}$-oriented elliptic curve $(E,\iota)$ was to factor $\mathfrak{a}$ into primes ideals and then to apply the action of each factor ideals successively before dividing by the degree.
This method takes a time polynomial in the largest prime power factor of the norm of $\mathfrak{a}$ and in length of the input.
The quality of the representation obtained "degrades" as actions are computed (so applying the action of several powersmooth ideal iteratively would actually take exponential time).

We emphasise that the powersmooth constraint and the progressive "degradation" of the representations come from the division by the norm of $\mathfrak{a}$ in the computation of the induced orientation
$$
{\varphi_{\mathfrak{a}}}_*(\iota)(\omega) = (\varphi_\mathfrak{a} \circ \iota(\omega) \circ \hat \varphi_\mathfrak{a}) \otimes \frac{1}{N(\mathfrak{a})}, \text{where $\omega$ is a generator of } \mathfrak{O}.
$$
Indeed, an efficient representation of the induced orientation times the norm of $\mathfrak{a}$ is given by the composition of the efficient representations of $\varphi_\mathfrak{a},\iota(\omega)$ and $\hat\varphi_\mathfrak{a}$ as
$$N(\mathfrak{a}) {\varphi_{\mathfrak{a}}}_*(\iota) = \varphi_\mathfrak{a} \circ \iota(\omega) \circ \hat{\varphi}_\mathfrak{a},$$
where efficient representations of $\varphi_\mathfrak{a}$ and $\hat \varphi_\mathfrak{a}$ can be obtained in time polynomial in a smooth bound of the norm of $\mathfrak{a}$. 
This representation now "degrades" linearly with the number of successive actions and there are no more powersmooth constraints.
The issue with this division-free computation is that the induced orientation by the order $\mathbb{Z} + N(\mathfrak{a})\mathfrak{O}$ is obviously not primitive anymore, as it can still be divided by the norm of $\mathfrak{a}$.
Thanks to Corollary \ref{cor_primitivisation}, primitivising this orientation can be done efficiently given the factorisation of the conductor of the order, i.e. the factorisation of $N(\mathfrak{a})$ here.
This provides a new polynomial algorithm to compute action of smooth ideals.

\begin{corollary}
\label{cor_smoothactions}
Let $(E,\iota)$ be an $\mathfrak{O}$-oriented supersingular elliptic curve defined over $\mathbb{F}_{p^2}$.
Let $\mathfrak{a}$ be an invertible $\mathfrak{O}$-ideal of $B$-smooth norm.
If $(E,\iota)$ is efficiently represented, then one can compute an efficient representation of $[\mathfrak{a}] \star (E,\iota)$ in time polynomial in $B$, $\log p$, $\log N(\mathfrak{a})$ and in the length of the representation of $\iota$.
\end{corollary}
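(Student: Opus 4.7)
The plan is to compute $\varphi_{\mathfrak{a}} : E \to E^{\mathfrak{a}}$, then transport $\iota$ across it by a division-free composition, and finally primitivise via Corollary~\ref{cor_primitivisation}.

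First I would decompose $\mathfrak{a}$ as a product $\prod_i \mathfrak{p}_i^{e_i}$ of prime ideals; since $N(\mathfrak{a})$ is $B$-smooth, this factorisation is computable in time polynomial in $B$ and $\log N(\mathfrak{a})$. The isogeny $\varphi_{\mathfrak{a}}$ is built step by step as a chain of prime-norm isogenies $\varphi_{\mathfrak{p}_i}$: at each step the kernel $E_i[\mathfrak{p}_i] \subseteq E_i[\ell_i]$ is identified as an eigenspace of the running orientation data, which is evaluated on $E_i[\ell_i]$ using the efficient evaluator of $\iota$ pushed forward along the current isogeny, and V\'elu's formulas yield the next step. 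The $\ell_i$-torsion lives in an extension of $\mathbb{F}_{p^2}$ of degree $O(\ell_i^2) = O(B^2)$, and the division tools of Section~\ref{sec_higher} keep each intermediate representation efficient along the way. This produces an efficient representation of $\varphi_{\mathfrak{a}}$ within the claimed time budget.

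Next I would form the composed endomorphism $\theta' := \varphi_{\mathfrak{a}} \circ \iota(\omega) \circ \hat\varphi_{\mathfrak{a}} \in \End(E^{\mathfrak{a}})$ by chaining efficient representations. Using $\hat\varphi_{\mathfrak{a}} \varphi_{\mathfrak{a}} = [N(\mathfrak{a})]$, a short computation shows that $\theta'$ has trace $N(\mathfrak{a})(\omega + \bar\omega)$ and norm $N(\mathfrak{a})^2 N(\omega)$, so $\disc(\mathbb{Z}[\theta']) = N(\mathfrak{a})^2 \disc(\mathfrak{O})$, and $\theta' = [N(\mathfrak{a})] \cdot (\varphi_{\mathfrak{a}})_*(\iota)(\omega)$ in $\End(E^{\mathfrak{a}}) \otimes \mathbb{Q}$. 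Thus the orientation $\mathbb{Z}[\theta'] \hookrightarrow \End(E^{\mathfrak{a}})$ differs from the target primitive $\mathfrak{O}$-orientation on $E^{\mathfrak{a}}$ by exactly the factor $N(\mathfrak{a})$.

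Finally, I would invoke the primitivisation procedure on $\theta'$. Since $\iota$ is primitive by assumption, the divisibility tests in Algorithm~\ref{algPrimitivisation} can only succeed at primes dividing $N(\mathfrak{a})$, whose factorisation is already known from the first step. Hence it suffices to run a restricted version of Algorithm~\ref{algPrimitivisation} over those primes, requiring neither the factorisation of $\disc(\mathfrak{O})$ nor any modification of the complexity estimate in Corollary~\ref{cor_primitivisation}. This yields an efficient representation of $(\varphi_{\mathfrak{a}})_*(\iota)(\omega)$, which together with $\omega$ and the codomain $E^{\mathfrak{a}}$ constitutes the desired efficient representation of $[\mathfrak{a}] \star (E, \iota)$. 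The main obstacle is carrying the orientation data through the chain of $\mathfrak{p}_i$-isogenies without representation blow-up, which is exactly what the efficient division machinery of Section~\ref{sec_higher} enables, lifting the previous powersmoothness constraint to mere smoothness; everything else is a routine composition of efficient representations together with a direct invocation of an already-established primitivisation algorithm.
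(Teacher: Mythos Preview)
Your proposal is correct and follows essentially the same approach as the paper: factor $N(\mathfrak{a})$, build $\varphi_{\mathfrak{a}}$ as a chain of prime-norm isogenies \`a la \texttt{CSIDH}, form the composition $\varphi_{\mathfrak{a}} \circ \iota(\omega) \circ \hat\varphi_{\mathfrak{a}}$, and primitivise. Your explicit remark that primitivisation only needs the primes dividing $N(\mathfrak{a})$ (so the factorisation of $\disc(\mathfrak{O})$ is not required) is a useful clarification that the paper makes in the surrounding discussion rather than in the proof itself.
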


\begin{proof}
The prime factorisation $\ell_1^{e_1} \dots \ell_m^{e_m}$ of the norm of $\mathfrak{a}$ can be computed in time polynomial in $B$.
From this factorisation, one deduces the decomposition of $\mathfrak{a}$ as a product of $e_1$ prime ideals of norm $\ell_1$ with $e_2$ prime ideals of norm $\ell_2$ and so on.
Then, using a process similar to the \texttt{CSIDH} evaluation algorithm, generalised to a arbitrary orientations as in \cite{colo_orienting_2020}, one computes an efficient representation of $\varphi_\mathfrak{a}$ and $\hat \varphi_\mathfrak{a}$ in time polynomial in $B$, $k \log p$ and in the length of the representation of $\iota$.
Finally, one just has to primitivise the orientation $\varphi_\mathfrak{a} \circ \iota(\omega) \circ \hat \varphi_\mathfrak{a}$ using Corollary \ref{cor_primitivisation} to get an efficient representation of $\mathfrak{a} \star (E,\iota)$ in polynomial time.
\end{proof}

\begin{remark}
We remind the reader that the authors' objective in this article is to develop rigorous algorithms under GRH.
In this section, our aim is to provide an algorithm solving \Primitivisation\ and to give applications of this resolution, such as computing smooth ideal actions with better complexity than the current state of the art without relying on any heuristics.
Nevertheless, one might also be interested in using higher dimensional isogenies as a practical approach to computing group actions induced by oriented elliptic curves, even at the cost of relying on heuristics.
This direction is studied, for example, in SCALLOP-HD \cite{PKC:CheLerPan24}.
\end{remark}

To compute the action of any ideal, we shall use the polynomial time \texttt{Clapoti} algorithm introduced in \cite{page_introducing_2023} to avoid the smoothness constraint.

\begin{proposition}[{\cite[Theorem 2.9]{page_introducing_2023}}]
Let $(E,\iota)$ be an $\mathfrak{O}$-oriented supersingular elliptic curve defined over $\mathbb{F}_{p^2}$.
Given an integral invertible $\mathfrak{O}$-ideal $\mathfrak{a}$, one can compute an efficient representation of $\varphi_\mathfrak{a}: E \rightarrow E_\mathfrak{a}$ in probabilistic time polynomial in $\log p$, $\log N(\mathfrak{a})$ and in the length of the representation of $\iota$.
\end{proposition}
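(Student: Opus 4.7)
The plan is to invoke the \texttt{Clapoti} construction of \cite{page_introducing_2023} as a black box; we sketch its underlying mechanism to indicate why the claimed complexity is plausible using the tools assembled in Section~\ref{sec_higher}. The key difficulty we need to bypass is that the naive approach via factorisation of $\mathfrak{a}$ only works when $N(\mathfrak{a})$ is smooth, so the idea is to realize $\varphi_\mathfrak{a}$ as an entry of a higher dimensional isogeny rather than as a chain of small-degree steps.

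First, I would randomly sample two integral invertible ideals $\mathfrak{b}_1, \mathfrak{b}_2$ in the ideal class $[\mathfrak{a}]$ whose norms are polynomially bounded and satisfy an arithmetic relation of the form $u\, N(\mathfrak{b}_1) + v\, N(\mathfrak{b}_2) = N'$ with $N'$ coprime to $p\, N(\mathfrak{a})$ and admitting a representation as a sum of (four) squares so that a Zarhin-style endomorphism of $E^4$ can be built exactly as in Lemma~\ref{lem_embedding}. Such representatives can be obtained by running a randomized reduction in the rank-2 quadratic lattice representing the class $[\mathfrak{a}]$, whose success probability is analysed in \cite{page_introducing_2023}.

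Next, since $[\mathfrak{b}_1] = [\mathfrak{b}_2] = [\mathfrak{a}]$, the isogenies $\varphi_{\mathfrak{b}_1}, \varphi_{\mathfrak{b}_2} : E \to E_\mathfrak{a}$ (well defined up to $K$-isomorphism of the target) fit in a commutative diagram with auxiliary endomorphisms of $E^n$ and $E_\mathfrak{a}^n$ whose composed degree is exactly $N'$. Invoking Kani's lemma in the guise of Lemma~\ref{lem_ourkani}, this diagram yields an $N'$-isogeny $F$ between products of $E$ and $E_\mathfrak{a}$ whose kernel is maximal isotropic and described purely in terms of the image of a basis of the $N'$-torsion of $E$ under $\iota$ and under a few auxiliary scalar or small-norm endomorphisms. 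Because $\iota$ is given in efficient representation, generators of $\ker F$ can be written down in polynomial time.

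Given $\ker F$, applying the Lubicz--Robert higher dimensional Vélu-type formulas (Lemma~\ref{lem_velu}) produces a representation of $F$ in polynomial time, since $N'$ is $\poly(\log p, \log N(\mathfrak{a}))$-smooth by construction. One then extracts an embedding representation of $\varphi_\mathfrak{a}$ (up to a harmless post-composition by an isomorphism, identified as in the proof of Lemma~\ref{lem_embedding}) from one of the matrix entries of $F$. The main obstacle, and the actual content of \cite{page_introducing_2023}, is the first step: proving that the randomized search for admissible pairs $(\mathfrak{b}_1, \mathfrak{b}_2)$ together with the accompanying arithmetic decomposition terminates in expected polynomial time, with $N'$ polynomially bounded and sufficiently smooth; once this is granted, the rest is a direct application of Section~\ref{sec_higher}.
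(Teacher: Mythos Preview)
The paper does not prove this proposition at all: it is stated as a direct citation of \cite[Theorem~2.9]{page_introducing_2023} and used as a black box. Your proposal does the same --- explicitly invoking \texttt{Clapoti} as a black box --- so it matches the paper's approach; the additional sketch you give of the mechanism (two equivalent ideals with a norm relation, Kani's lemma, higher-dimensional V\'elu) is a reasonable outline of how \cite{page_introducing_2023} proceeds and goes beyond what the present paper provides.
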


\begin{corollary}
\label{cor_clapoti}
Let $(E,\iota)$ be an $\mathfrak{O}$-oriented supersingular elliptic curve defined over $\mathbb{F}_{p^2}$.
One can compute an efficient representation of $[\mathfrak{a}] \star (E,\iota)$ in probabilistic time polynomial in $\log p$, $\log N(\mathfrak{a})$ and in the length of the representation of $\iota$.
\end{corollary}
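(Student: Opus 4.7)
The approach extends the strategy behind Corollary~\ref{cor_smoothactions}: the induced orientation $(\varphi_\mathfrak{a})_*(\iota)(\omega)$ equals, up to a formal division by $N(\mathfrak{a})$, the composition $\varphi_\mathfrak{a} \circ \iota(\omega) \circ \hat\varphi_\mathfrak{a}$. Two obstacles previously prevented a polynomial-time algorithm for an arbitrary ideal $\mathfrak{a}$: obtaining $\varphi_\mathfrak{a}$ without a smoothness hypothesis on $N(\mathfrak{a})$, and performing the division by $N(\mathfrak{a})$. The first is now handled by the \texttt{Clapoti} proposition cited above, and the second by Theorem~\ref{cor_division}.

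Concretely, the algorithm I would write proceeds as follows. First, apply \texttt{Clapoti} to compute an efficient representation of $\varphi_\mathfrak{a}: E \to E^\mathfrak{a}$ (and hence of $\hat\varphi_\mathfrak{a}$) in probabilistic time polynomial in $\log p$, $\log N(\mathfrak{a})$ and the length of the representation of $\iota$. Second, letting $\omega$ be the generator of $\mathfrak{O}$ specified by the efficient representation of $\iota$, form, by composition of efficient representations, the endomorphism $\theta := \varphi_\mathfrak{a} \circ \iota(\omega) \circ \hat\varphi_\mathfrak{a} \in \End(E^\mathfrak{a})$. Its degree is $N(\mathfrak{a})^2 \deg(\iota(\omega))$, so $\log\deg(\theta)$ remains polynomial in the input parameters.

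Third, invoke Theorem~\ref{cor_division} on the pair $(\theta, N(\mathfrak{a}))$. By the very definition of the induced orientation, $\theta/N(\mathfrak{a})$ is a bona fide endomorphism of $E^\mathfrak{a}$, so the division algorithm succeeds and returns an efficient representation of it in time polynomial in $\log p$ and $\log\deg(\theta)$, hence in the desired quantities. The pair $(\omega, \theta/N(\mathfrak{a}))$ is then an efficient representation of $(\varphi_\mathfrak{a})_*(\iota)$, and together with $E^\mathfrak{a}$ it encodes $[\mathfrak{a}] \star (E,\iota)$ as required.

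The two black-box ingredients do all the heavy lifting and no step is a genuine obstacle once both are available; the remaining content is bookkeeping on composition of efficient representations and on the degree of $\theta$. It is worth noting that, in contrast to Corollary~\ref{cor_smoothactions}, we never need to factor $N(\mathfrak{a})$: a direct division by $N(\mathfrak{a})$ via Theorem~\ref{cor_division} bypasses the use of Corollary~\ref{cor_primitivisation}, which would otherwise require the factorisation of the conductor of $\mathbb{Z}[\theta]$ inside $\mathbb{Z}[\iota'(\omega)]$, i.e.\ of $N(\mathfrak{a})$.
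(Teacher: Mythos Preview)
Your proposal is correct. The paper actually gives no proof of this corollary at all, treating it as immediate from the \texttt{Clapoti} proposition; your argument supplies exactly the missing step, following the same template as Corollary~\ref{cor_smoothactions} (compute $\varphi_\mathfrak{a}$, compose to get $\varphi_\mathfrak{a}\circ\iota(\omega)\circ\hat\varphi_\mathfrak{a}$, then divide by $N(\mathfrak{a})$), and your observation that a direct call to Theorem~\ref{cor_division} avoids the factorisation of $N(\mathfrak{a})$ required by Corollary~\ref{cor_primitivisation} is a worthwhile refinement.
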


\begin{remark}
The first version of the present article, which appeared prior to \texttt{Clapoti}, presented an algorithm for computing the action of any ideal with a rigorously proven complexity under GRH.
This complexity matched the previous state of the art while removing all heuristic assumption that were previously required.
The approach involved computing a smooth representative of the input ideal using algorithms with a subexponential complexity proven under GRH, such as \cite[Algorithm 1]{childs_constructing_2014}, followed by the application of Corollary \ref{cor_smoothactions}.
\end{remark}

\end{subsection}
\end{section}

\begin{section}{Resolution of $\mathfrak{O}$-\Vectorisation\ and $\alpha$-\EndRing}
\label{sec_vect}
In this section, we prove, under GRH only, the complexity of a classical and a quantum resolution of $\mathfrak{O}$-\Vectorisation\ which are as good as the current best algorithms based on heuristics.
We then use these rigorous solutions to solve the $\alpha$-\EndRing\ problem.

\begin{subsection}{Classical algorithm}
Currently, the best complexity we can expect for a classical algorithm solving the $\mathfrak{O}$-\Vectorisation\ problem is $l^{O(1)}|\disc(\mathfrak{O})|^{1/4}$, with $l$ the length of the input, for instance with a meet-in-middle approach as in \cite{DCC:DelGal16}.
Before the results presented in this paper, such complexity analyses were based on heuristics.
Indeed, one needs to compute multiple actions of ideals to solve $\mathfrak{O}$-\Vectorisation\ and without using higher dimensional isogenies to compute efficiently smooth ideal actions, one could only handle powersmooth ideals.
Thus, one had to assume some heuristics about the distribution of powersmooth ideals. 
Thanks to Corollary $\ref{cor_smoothactions}$, it is now possible to get rid of the constraint on powersmoothness and to rigorously prove this complexity.\\ 

To solve $\mathfrak{O}$-\Vectorisation, we first study the \Effective{\mathfrak{O}} problem where one also asks the $\mathfrak{O}$-ideal to send the orientation of the first $\mathfrak{O}$-oriented elliptic curve to the orientation of the second one.
Moreover, we want to be able to evaluate the isogeny induced by this ideal on another given $\mathfrak{O}$-orientable elliptic curve.
Notice that $\mathfrak{O}$-\Vectorisation\ and \Effective{\mathfrak{O}} are in fact both equivalent to $\mathfrak{O}$-\EndRing, see \cite{EC:Wesolowski22}.

\begin{problem}[\Effective{\mathfrak{O}}]
Given $(E,\iota),(E',\iota'),(F,\jmath)$ in $SS_\mathfrak{O}(p)$, find an $\mathfrak{O}$-ideal $\mathfrak{a}$ such that $\mathfrak{a} \star (E,\iota) \simeq (E',\iota')$, and an efficient representation of $\varphi_\mathfrak{a} : (F,\jmath) \rightarrow \mathfrak{a}\star(F,\jmath)$.
\end{problem}

Algorithm \ref{alg_classicOvect1orbit} almost solves \Effective{\mathfrak{O}} \textemdash\ it does not give an efficient representation and it only handles the case where $(E,\iota)$ and $(E',\iota')$ are in the same orbit, i.e. when there exists an $\mathfrak{O}$-ideal $\mathfrak{a}$ such that $\mathfrak{a} \star (E,\iota) = (E',\iota')$.
This algorithm follows a meet-in-the-middle approach, namely successive actions of $\mathfrak{O}$-ideals are computed on $(E,\iota)$ and $(E',\iota')$ until a collision is found.

\begin{algorithm}
\caption{Almost \Effective{\mathfrak{O}}}
\begin{flushleft}
\hspace*{\algorithmicindent} \textbf{ Input :}
$(E,\iota),(E',\iota') \in SS_{\mathfrak{O}}(p)$ two efficiently represented oriented elliptic curves in the same orbit and a real $\varepsilon > 0$.\\
\hspace*{\algorithmicindent} \textbf{ Output :}
A $\lceil \log^{2+\varepsilon} |\disc(\mathfrak{O})| \rceil$-smooth $\mathfrak{O}$-ideal with at most $2 \lceil \log | \disc \mathfrak{O} | \rceil$ prime factors which sends $(E,\iota)$ to $(E',\iota')$.
\end{flushleft}
\begin{algorithmic}[1]
	\State $x \gets \lceil \log^{2+\varepsilon} | \disc(\mathfrak{O})| \rceil$. \label{alg_almosteffective_step_x}
	\State $\Sigma_x \gets \{ [\mathfrak{p}] \in \Cl(\mathfrak{O})$, such that  $\gcd(f_\mathfrak{O},\mathfrak{p}) = 1$ and $N(\mathfrak{p}) \leq x$ prime$\}$.
	\State $S_x \gets \Sigma_x \cup \{[\mathfrak{p}]^{-1} \text{ for } [\mathfrak{p}] \in \Sigma_x\}.$
	\State $T[\enc((E,\iota))] \gets (1)$. \label{alg_almosteffective_step_T}
	\While {$\#T < \sqrt{\#\Cl(\mathfrak{O})}$} \label{alg_almosteffective_step_beginwhile1}
		\State $y \gets \Unif\{ y \in \mathbb{N}^{\#S_x} \text{ such that } ||y||_1 = \lceil \log|\disc \mathfrak{O}| \rceil \}$. \label{alg_stepy1}
		\State $\mathfrak{a} \gets S_x^y$.
		\If {$T[\enc(\mathfrak{a} \star (E,\iota))]$ is empty}
			\State $T[\enc(\mathfrak{a} \star (E,\iota))] \gets \mathfrak{a}$. \label{alg_almosteffective_step_endwhile1}
		\EndIf
	\EndWhile
	\State $\mathfrak{a} \gets (1)$.
	\While {$T[\enc(\mathfrak{a} \star (E',\iota'))]$ is empty} \label{alg_almosteffective_step_beginwhile2}
		\State $y \gets \Unif\{ y \in \mathbb{N}^{\#S_x} \text{ such that } ||y||_1 = \lceil \log|\disc \mathfrak{O}| \rceil \}$.
		\label{alg_stepy2}
		\State $\mathfrak{a} \gets S_x^y$.
	\EndWhile
	\State \Return $\bar{\mathfrak{a}} T[\enc(\mathfrak{a}\star(E',\iota'))]$. \label{alg_almosteffective_step_return}
\end{algorithmic}
\label{alg_classicOvect1orbit}
\end{algorithm}

\begin{lemma}[\textbf{GRH}]
\label{lem_classicalOvect}
Algorithm \ref{alg_classicOvect1orbit} runs in expected time $l^{O_\varepsilon(1)}|\disc(\mathfrak{O})|^{1/4}$ where $l$ is the length of the input, and is correct. 
\end{lemma}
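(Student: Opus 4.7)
The plan is to verify correctness from the group-action structure, then bound the expected number of iterations of each while-loop via Proposition~\ref{pro_walk}, and finally charge a polynomial per-iteration cost through Corollary~\ref{cor_smoothactions}. I expect the main difficulty to be that Proposition~\ref{pro_walk} supplies only a one-sided (lower) bound on the hitting probabilities of the random walk, so the probabilistic analysis has to be arranged throughout so that this lower bound is what is needed; a secondary subtlety is that previous meet-in-the-middle analyses of this kind required heuristic assumptions on powersmooth representatives, whereas here Corollary~\ref{cor_smoothactions} will handle $x$-smooth ideals directly, which is what makes the statement rigorous.

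For correctness, I would observe that when the second loop exits with some ideal $\mathfrak{a}$, the table $T$ contains an ideal $\mathfrak{a}'$ (stored during the first loop) with $\enc(\mathfrak{a}' \star (E,\iota)) = \enc(\mathfrak{a} \star (E',\iota'))$, hence $\mathfrak{a}' \star (E,\iota) \simeq_K \mathfrak{a} \star (E',\iota')$. Using that $\mathfrak{a}\bar{\mathfrak{a}} = (N(\mathfrak{a}))$ is principal and therefore acts trivially on $SS_\mathfrak{O}(p)$, applying $\bar{\mathfrak{a}}$ to both sides yields $(\bar{\mathfrak{a}}\mathfrak{a}') \star (E,\iota) \simeq_K (E',\iota')$, so the returned ideal is a valid solution. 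By construction $\mathfrak{a}$ and $\mathfrak{a}'$ are each products of exactly $\lceil\log|\disc(\mathfrak{O})|\rceil$ prime ideals drawn from $S_x$, so the output has at most $2\lceil\log|\disc(\mathfrak{O})|\rceil$ prime factors and is $x$-smooth with $x = \lceil\log^{2+\varepsilon}|\disc(\mathfrak{O})|\rceil$, as required.

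For the expected iteration count, set $N := \#\Cl(\mathfrak{O})$; under GRH, $N = O(|\disc(\mathfrak{O})|^{1/2 + o(1)})$, so $\sqrt{N} = l^{O(1)} |\disc(\mathfrak{O})|^{1/4}$. First I would check the hypotheses of Proposition~\ref{pro_walk}: the choice of $x$ gives $x \geq (\log|\disc(\mathfrak{O})|)^{2+\varepsilon}$, and since $\log N = O(\log|\disc(\mathfrak{O})|)$ the walk length $\lceil\log|\disc(\mathfrak{O})|\rceil$ exceeds $C\log N / \log\log|\disc(\mathfrak{O})|$ for all sufficiently large discriminants. The proposition then gives that each walk $S_x^y$ lands in any fixed subset $H \subseteq \Cl(\mathfrak{O})$ with probability at least $\#H/(2N)$. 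Applied with $H$ the complement of the classes already in $T$, this shows that while $\#T \leq \sqrt{N}$ a new class is added in each first-loop iteration with probability at least $(1-\#T/N)/2 \geq 1/4$, so the first loop terminates in $O(\sqrt{N})$ expected iterations. For the second loop, the same-orbit assumption together with Proposition~\ref{pro_groupaction} implies that the set of classes $[\mathfrak{b}]$ whose action sends $(E',\iota')$ to a class encoded in $T$ has cardinality exactly $\#T \geq \sqrt{N}$; Proposition~\ref{pro_walk} then gives a collision probability at least $1/(2\sqrt{N})$ per iteration, again yielding $O(\sqrt{N})$ expected iterations.

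For the per-iteration cost, precomputing $\Sigma_x$ reduces to enumerating primes up to $x$ and testing their splitting behaviour in $\mathfrak{O}$, which takes $\poly(x) = l^{O_\varepsilon(1)}$. Each sampled ideal $\mathfrak{a} = S_x^y$ is $x$-smooth with $\log N(\mathfrak{a}) = O(\log|\disc(\mathfrak{O})| \cdot \log x)$, so Corollary~\ref{cor_smoothactions} computes $\mathfrak{a}\star(E,\iota)$ in an efficient representation in time polynomial in $x$, $\log p$, $\log N(\mathfrak{a})$ and the representation length of $\iota$, hence in $l^{O_\varepsilon(1)}$. Computing $\enc$ is polynomial time by Section~\ref{subsec_encoding}, and the hash table $T$ adds only polylogarithmic overhead per operation. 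Multiplying the $O(\sqrt{N})$ iterations by this per-iteration cost yields the claimed expected running time $l^{O_\varepsilon(1)} |\disc(\mathfrak{O})|^{1/4}$, and the whole argument avoids heuristics precisely because Corollary~\ref{cor_smoothactions} accepts smooth ideals directly, matching the output of walks in $S_x$.
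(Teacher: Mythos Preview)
Your proposal is correct and follows essentially the same approach as the paper: verify the hypotheses of Proposition~\ref{pro_walk}, use it with $H$ the complement of $T$ to bound the first loop and with $H = T$ to bound the second loop, invoke Corollary~\ref{cor_smoothactions} for the per-iteration cost, and conclude via the standard class-number bound $\#\Cl(\mathfrak{O}) = O(|\disc(\mathfrak{O})|^{1/2+o(1)})$. The correctness argument and the smoothness/factor-count check on the output ideal also match the paper.
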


\begin{proof}
First of all, notice that using a dictionary structure for the table $T$, one can add and search for elements in time $O(\log \# T)$.
From \cite[Section 5.3]{AC:DDFKLP21}, we have the estimate $\#\Cl(\mathfrak{O}) = O(\log(|\disc(\mathfrak{O})|)\sqrt{|\disc(\mathfrak{O})}|)$.
Then insertions and searches in the table $T$ can be done in $O(\log|\disc(\mathfrak{O})|)$.
Moreover, we use the $\enc$ function, see Section \ref{subsec_encoding}, to have a unique encoding of oriented elliptic curves.
Finally, notice that, for $\disc(\mathfrak{O})$ large enough, we have that $\lceil \log | \disc(\mathfrak{O}) | \rceil \geq C \frac{ \log \# \Cl(\mathfrak{O})}{\log \log | \disc(\mathfrak{O}) |}$, where $C$ is the constant from Proposition \ref{pro_walk}.
Thus Proposition \ref{pro_walk} applies properly to the random walks given by the vectors $y$ such that $||y||_1 = \lceil \log | \disc(\mathfrak{O})| \rceil$ at Step \ref{alg_stepy1} and \ref{alg_stepy2}.

\begin{itemize}
	\item[{[\ref{alg_almosteffective_step_x}-\ref{alg_almosteffective_step_T}]}] Those steps are polynomial in $\log^{2+\varepsilon} \disc(\mathfrak{O})$.
	
	\item[{[\ref{alg_almosteffective_step_beginwhile1}-\ref{alg_almosteffective_step_endwhile1}]}] It is expected that this first while loop will end after $O(\sqrt{\#\Cl(\mathfrak{O})})$ iterations. Indeed, by applying Proposition \ref{pro_walk} to subset of vertices $H := \Cl(\mathfrak{O}) - T$, we get that the probability to get a new element for the table $T$ is greater than $\frac{1}{2} - \frac{1}{\sqrt{\#\Cl(\mathfrak{O})}}$.
	In particular, for $\disc(\mathfrak{O}) \geq 36$, one can expect to add a new element to the table $T$ after at most 3 draws of random smooth ideals.

	By Corollary \ref{cor_smoothactions}, computing an efficient representation of $\mathfrak{a} \star (E,\iota)$ is done in polynomial time in $l_1$, in $\log p$ and in $\log^{2 +\varepsilon}|\disc(\mathfrak{O})|$, where $l_1$ is the length of the representation of $\iota$.

	\item[{[\ref{alg_almosteffective_step_beginwhile2}-\ref{alg_almosteffective_step_return}]}] This while loop is also expected to end after $O(\sqrt{\#\Cl((\mathfrak{O})})$ iterations, since, thanks again to Proposition \ref{pro_walk}, each iteration has a probability of success greater than $\frac{1}{2 \sqrt{\#\Cl(\mathfrak{O})}}$.
	This time, we apply the proposition regarding the landing subset $T$.

	Moreover, as in the first loop, using Corollary \ref{cor_smoothactions}, one can compute the action of $\mathfrak{a}$ in time polynomial in $l_2$, $\log p$ and $\log^{2 + \varepsilon}|\disc(\mathfrak{O})|$, where $l_2$ is the length of the representation of $\iota'$.
\end{itemize}
This leads to a global runtime in $(\max(l_1,l_2) \log p \log^{2+\varepsilon} |\disc(\mathfrak{O})|)^{O(1)}\sqrt{\#\Cl(\mathfrak{O})}$.
Thanks again to the estimate $\#\Cl(\mathfrak{O}) = O(\log(|\disc(\mathfrak{O})|)\sqrt{|\disc(\mathfrak{O})|})$, we get the claimed complexity.\\
	
The correctness of the algorithm is given by a short computation.
By construction, the output $\mathfrak{O}$-ideal $\mathfrak{a}$ verifies
$$
	T[\enc(\mathfrak{a} \star (E',\iota'))] \star (E,\iota) \simeq \mathfrak{a} \star (E',\iota').
$$
Hence,
\begin{align*}
	(\bar{\mathfrak{a}} T [\enc(\mathfrak{a} \star (E',\iota'))]) \star (E,\iota) &=
	\bar{\mathfrak{a}} \star (T [\enc(\mathfrak{a} \star (E',\iota'))] \star (E,\iota))\\
	&\simeq \bar{\mathfrak{a}} \star (\mathfrak{a} \star (E',\iota')) = (\bar{\mathfrak{a}}\mathfrak{a})\star (E',\iota') = (E',\iota).
\end{align*}

Finally, the output ideal is a product of two  $\lceil \log^{2+\varepsilon}|\disc(\mathfrak{O})|\rceil$-smooth $\mathfrak{O}$-ideals with at most $\lceil \log | \disc(\mathfrak{O}) | \rceil$ prime factors thus it is a $\lceil \log^{2+\varepsilon} |\disc(\mathfrak{O})|\rceil$-smooth $\mathfrak{O}$-ideal with at most $2 \lceil \log |\disc(\mathfrak{O})| \rceil$ prime factors.
\end{proof}

\begin{remark}
Algorithm \ref{alg_classicOvect1orbit} needs space exponential in the length of the input.
A space-efficient algorithm is conceivable using a Pollard-$\rho$ approach, as it is used to find isogenies between ordinary elliptic curves in \cite{DCC:BisSut12}.
A rigorous analysis of such algorithms typically requires access to a random oracle, and we do not pursue this direction here.
\end{remark}

Algorithm \ref{alg_classicOvect1orbit} is a central subprocedure in our classical resolution of $\mathfrak{O}$-\Vectorisation\ and $\alpha$-\EndRing, as well as \Effective{\mathfrak{O}}.
These applications of Algorithm \ref{alg_classicOvect1orbit} require to move from one orbit to the other using the $\mathfrak{O}$-twists.

\begin{theorem}[\textbf{GRH}, \Effective{\mathfrak{O}}]
\label{the_classicOvect}
There is a classical algorithm taking as input three oriented elliptic curves $(E,\iota),(E',\iota')$ and $(F,\jmath)$ in $SS_\mathfrak{O}(p)$ and a real number $\varepsilon > 0$ which returns an $\mathfrak{{O}}$-ideal $\mathfrak{a}$ $\lceil \log^{2+\varepsilon}|\disc(\mathfrak{{O}}))| \rceil$-smooth such that $E^\mathfrak{a} \sim E'$ together with a representation of $\varphi_\mathfrak{a}: (F,\jmath) \rightarrow \mathfrak{a} \star(F,\jmath)$ in expected time $l^{O_\varepsilon(1)}|\disc(\mathfrak{O})|^{1/4}$ where $l$ is the length of the input.
The returned representation of $\varphi_\mathfrak{a}$ is given by $O(\log |\disc(\mathfrak{O})|)$ isogeny kernels of order at most $\lceil \log^{2+\varepsilon} | \disc(\mathfrak{O})| \rceil$. 
\end{theorem}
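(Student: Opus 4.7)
The plan is to reduce Theorem~\ref{the_classicOvect} to Lemma~\ref{lem_classicalOvect} by supplying two missing features: handling the case where $(E,\iota)$ and $(E',\iota')$ might lie in distinct $\Cl(\mathfrak{O})$-orbits, and producing the explicit chain representation of $\varphi_\mathfrak{a}$ on the auxiliary curve $(F,\jmath)$.

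For the orbit issue, Proposition~\ref{pro_groupaction} guarantees that there are at most two orbits and that either $(E',\iota')$ or its $\mathfrak{O}$-twist $(E',\bar{\iota}')$ lies in the same orbit as $(E,\iota)$. Since we cannot determine a priori which, I would run Algorithm~\ref{alg_classicOvect1orbit} on the two pairs $((E,\iota),(E',\iota'))$ and $((E,\iota),(E',\bar{\iota}'))$ in an interleaved fashion, halting as soon as one of them terminates. By Lemma~\ref{lem_classicalOvect}, the execution whose inputs share an orbit terminates in expected time $l^{O_\varepsilon(1)}|\disc(\mathfrak{O})|^{1/4}$ and returns an $\mathfrak{O}$-ideal $\mathfrak{a}$, expressed explicitly as a product of at most $2\lceil\log|\disc(\mathfrak{O})|\rceil$ prime ideals of norm at most $\lceil\log^{2+\varepsilon}|\disc(\mathfrak{O})|\rceil$, such that $\mathfrak{a}\star(E,\iota)$ is $K$-isomorphic to either $(E',\iota')$ or $(E',\bar{\iota}')$. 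In either case, the underlying curves satisfy $E^\mathfrak{a}\simeq E'$ as required.

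To build the representation of $\varphi_\mathfrak{a}$ on $(F,\jmath)$, I would traverse the explicit factorisation $\mathfrak{a}=\prod_i\mathfrak{p}_i$ returned by the previous step, applying the action one prime at a time. At each stage, starting from the current oriented curve, one computes the kernel of the $\mathfrak{p}_i$-multiplication isogeny from the orientation, applies V\'elu's formulae to obtain the next curve, primitivises the induced orientation via Corollary~\ref{cor_primitivisation} so that the next step can proceed, and records the small kernel. The resulting sequence of $O(\log|\disc(\mathfrak{O})|)$ kernels of order at most $\lceil\log^{2+\varepsilon}|\disc(\mathfrak{O})|\rceil$ is exactly the representation demanded by the statement; equivalently one can invoke Corollary~\ref{cor_smoothactions} directly since $N(\mathfrak{a})$ is smooth.

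For the complexity, the dominant cost is the call to Algorithm~\ref{alg_classicOvect1orbit}, namely $l^{O_\varepsilon(1)}|\disc(\mathfrak{O})|^{1/4}$; the construction of the chain on $(F,\jmath)$ runs in time polynomial in $\log p$, $\log|\disc(\mathfrak{O})|$, and the length of the representation of $\jmath$, hence is negligible. The main obstacle I anticipate is precisely the orbit-handling step: neither execution of Algorithm~\ref{alg_classicOvect1orbit} is guaranteed to terminate when its inputs lie in distinct orbits (the meet-in-the-middle collision never occurs), so the interleaving must be arranged so that the overall expected running time is still dominated by the single successful branch and does not pick up a constant-factor blow-up from the stalled one.
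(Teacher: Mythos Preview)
Your proposal is correct and follows essentially the same approach as the paper: run Algorithm~\ref{alg_classicOvect1orbit} on two inputs in parallel to handle the orbit ambiguity, then use Corollary~\ref{cor_smoothactions} to realise $\varphi_\mathfrak{a}$ on $(F,\jmath)$ as a chain of small-degree isogenies. The only cosmetic differences are that the paper twists the \emph{first} curve rather than the second, first checks whether $p$ is ramified in $K$ (in which case there is a single orbit and no parallel run is needed), and returns \texttt{False} when $(E,\iota)$ and $(E',\iota')$ turn out to lie in different orbits; your variant instead always outputs an ideal with $E^\mathfrak{a}\simeq E'$, which is what the theorem statement literally asks for.
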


\begin{proof}
Suppose we are given some positive real $\varepsilon$ and two oriented supersingular elliptic curves $(E,\iota) \not \simeq (E',\iota') \in SS_\mathfrak{O}(p)$, where $\mathfrak{O}$ is an order of some quadratic field $K$.
First, we check if $p$ is inert or ramified in $K$.
Recall that $p$ does not split over $K$ otherwise $SS_\mathfrak{O}(p)$ would be empty \cite[Proposition 3.2]{onuki_oriented_2021}.

By \cite[Theorem 4.4]{arpin_orientations_2024}, if $p$ is ramified in $K$, then the action of $\Cl(\mathfrak{O})$ has only one orbit.
Thus by running Algorithm \ref{alg_classicOvect1orbit} with the inputs $(E,\iota),(E',\iota')$ and $\varepsilon$, we get an $\mathfrak{O}$-ideal $\mathfrak{a}$ such that $\mathfrak{a} \star (E,\iota) \simeq (E',\iota')$.

Otherwise, if $p$ is inert in $K$, again by \cite[Theorem 4.4]{arpin_orientations_2024}, the action of $\Cl(\mathfrak{O})$ has two orbits.
We then run two instances of Algorithm \ref{alg_classicOvect1orbit} in parallel, the first one with the inputs $(E,\iota),(E',\iota')$ and $\varepsilon$ and the second one with the inputs $(E,\bar{\iota}),(E',\iota')$ and $\varepsilon$.
We know that $(E,\iota)$ and its $\mathfrak{O}$-twist $(E,\bar{\iota})$ are not in the same orbit, see \cite{onuki_oriented_2021}, thus only one procedure will stop.
If it is the instance having $(E,\iota)$ as input, that means that we find an $\mathfrak{O}$-ideal $\mathfrak{a}$ sending $(E,\iota)$ to $(E',\iota')$.
Else, it means that $(E,\bar{\iota})$ and $(E',\iota')$ are in the same orbit. Hence $(E,\iota)$ is not on the same orbit as $(E',\iota')$ and there is no solution to the \Effective{\mathfrak{O}} problem.
In this case, we return \texttt{False}.

Now we have an ideal $\mathfrak{a}$ solving our \Effective{\mathfrak{O}} instance, it remains to compute an efficient representation of the isogeny $\varphi_\mathfrak{a}$.
Since $\mathfrak{a}$ has been returned by Algorithm \ref{alg_classicOvect1orbit} it is a $\lceil \log^{2+\varepsilon}|\disc(\mathfrak{O})|\rceil$-smooth $\mathfrak{O}$-ideal with at most $2 \log |\disc(\mathfrak{O})|$ prime factors.
Then using Corollary \ref{cor_smoothactions} an efficient representation of $\varphi_\mathfrak{a}$ can be computed in time polynomial in $\log p$, $\lceil \log^{2 + \varepsilon} |\disc(\mathfrak{O})| \rceil$ and in the length of the representation of $\iota$. 
\end{proof}

\begin{theorem}[\textbf{GRH}, Classical $\mathfrak{O}$-\Vectorisation]
\label{the_fullclassicOvect}
There is a classical algorithm taking as input two oriented elliptic curves $(E,\iota)$ and $(E',\iota')$ in $SS_\mathfrak{O}(p)$ and a real number $\varepsilon > 0$ which returns an $\mathfrak{O}$-ideal $\mathfrak{a}$ of $\lceil \log^{2+\varepsilon} |\disc(\mathfrak{O})| \rceil$-smooth norm such that $E^\mathfrak{a} \sim E'$ in expected time $l^{O\varepsilon(1)} |\disc (\mathfrak{O})|^{1/4}$ where $l$ is the length of the input. 
\end{theorem}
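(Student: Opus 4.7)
The plan is to reduce $\mathfrak O$-\Vectorisation\ to two calls of Algorithm~\ref{alg_classicOvect1orbit}, leveraging the observation that $\mathfrak O$-\Vectorisation\ asks only for a curve-level isomorphism $E^{\mathfrak a}\simeq E'$, not an $\mathfrak O$-oriented one. Consequently it suffices to produce an ideal $\mathfrak a$ that sends $(E,\iota)$ to \emph{either} $(E',\iota')$ \emph{or} to its $\mathfrak O$-twist $(E',\bar{\iota'})$, since in both cases the underlying curve reached is $E'$.

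First I would compute an efficient representation of the twist $(E',\bar{\iota'})$ from the given efficient representation of $\iota'$: if $\omega$ generates $\mathfrak O$ with trace $t$, then $\bar{\iota'}(\omega)=[t]-\iota'(\omega)$, and this representation has size polynomial in the input length. Next I would launch two instances of Algorithm~\ref{alg_classicOvect1orbit} in parallel (interleaving their atomic steps), one on input $((E,\iota),(E',\iota'),\varepsilon)$ and the other on $((E,\iota),(E',\bar{\iota'}),\varepsilon)$, returning the ideal produced by the first instance to terminate. By Proposition~\ref{pro_groupaction}, the $\Cl(\mathfrak O)$-action on $SS_{\mathfrak O}(p)$ has at most two orbits and at least one of $(E',\iota')$, $(E',\bar{\iota'})$ lies in the same orbit as $(E,\iota)$, so at least one of the two instances terminates successfully. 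Its output $\mathfrak a$ satisfies $\mathfrak a\star(E,\iota)\simeq(E',\iota')$ or $\mathfrak a\star(E,\iota)\simeq(E',\bar{\iota'})$ as oriented curves, hence $E^{\mathfrak a}\simeq E'$ as bare curves, which is exactly what $\mathfrak O$-\Vectorisation\ requires.

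The complexity bound $l^{O_\varepsilon(1)}|\disc(\mathfrak O)|^{1/4}$ and the smoothness bound $\lceil \log^{2+\varepsilon}|\disc(\mathfrak O)|\rceil$ on the norm of the returned ideal both follow directly from Lemma~\ref{lem_classicalOvect}, since that lemma governs each individual parallel instance; the parallel interleaving only inflates the running time by a constant factor, and the twist computation is polynomial in the input length. Unlike in Theorem~\ref{the_classicOvect}, no call to Corollary~\ref{cor_smoothactions} is needed here, since we do not have to output an efficient representation of $\varphi_{\mathfrak a}$ on an auxiliary curve.

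The only real subtlety — more a point of care than a genuine obstacle — is that the instance whose inputs happen to lie in different orbits will never terminate: its meet-in-the-middle second loop cannot find a collision. The two instances must therefore truly interleave, rather than be run sequentially with some ad hoc timeout, to avoid stalling on the wrong pair. Alternatively one could first determine the splitting behaviour of $p$ in $K$ (ramified vs.\ inert, readable from the given factorisation of $\disc(\mathfrak O)$) to decide up front which of the two calls to make, but the parallel formulation gives a uniform and cleaner statement.
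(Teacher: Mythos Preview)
Your proposal is correct and follows essentially the same approach as the paper: run two instances of Algorithm~\ref{alg_classicOvect1orbit} in parallel, using Proposition~\ref{pro_groupaction} to guarantee that one terminates, and invoke Lemma~\ref{lem_classicalOvect} for the complexity and smoothness bounds. The only cosmetic difference is that the paper twists the source curve (running on $(E,\bar\iota),(E',\iota')$) whereas you twist the target (running on $(E,\iota),(E',\bar{\iota'})$); both are equally valid applications of Proposition~\ref{pro_groupaction}.
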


\begin{proof}
We know that the action of $\Cl(\mathfrak{O})$ on $SS_\mathfrak{O}(p)$ has at most $2$ orbits, see Proposition \ref{pro_groupaction}.
Let $O$ be the orbit of $(E',\iota')$. 
From the same proposition, we know that $(E,\iota)$ or its $\mathfrak{O}$-twist $(E,\bar{\iota})$ is in $O$. 
Thus, by running two instances of Algorithm \ref{alg_classicOvect1orbit} until one ends, the first taking as input the oriented elliptic curves $(E,\iota)$ and $(E',\iota')$ and the second taking $(E,\bar{\iota})$ and $(E',\iota')$, we make sure that we find a suitable ideal in an expected time given by Lemma \ref{lem_classicalOvect}.
\end{proof}

From the previous results, we can now prove Theorem \ref{the_classical}.

\begin{proof}[Proof of Theorem \ref{the_classical}]
Let $E$ be a primitively $\mathfrak{O}$-orientable elliptic curve defined over $\bar{\mathbb{F}}_p$ and $\iota : \mathbb{Z}[\alpha] \hookrightarrow \End(E)$ be an orientation of $E$ such that $\mathbb{Z}[\alpha] \subseteq \mathfrak{O}$.
Let us prove that the computation of the endomorphism ring $\End(E)$ can be done in probabilistic time $l^{O(1)} |\disc(\mathbb{Z}[\alpha])|^{1/4}$, where $l$ is the length of the input.

First we compute the factorisation of $\disc(\mathbb{Z}[\alpha])$ in time subexponential in the length of $\disc(\mathbb{Z}[\alpha])$, see for instance \cite{pomerance_fast_1987}.
Then, by Corollary \ref{cor_primitivisation}, we can compute, in probabilistic time polynomial in the length of the input, the primitive orientation $\jmath$ such that $(E,\jmath) \in SS_\mathfrak{O}(p)$.
This reduces the computation of $\End(E)$ to the instance of $\mathfrak{O}$-\EndRing\ given by $(E,\jmath)$ which, in turn, reduces in probabilistic polynomial time to an instance of $\mathfrak{O}$-\Vectorisation\ by Proposition \ref{pro_OEndRingreduction}.
Finally, by Theorem \ref{the_fullclassicOvect} and since $|\disc(\mathbb{Z}[\alpha])|$ is greater than $|\disc(\mathfrak{O})|$, the $\mathfrak{O}$-\Vectorisation\ problem can be solved in $l^{O(1)} |\disc(\mathbb{Z}[\alpha])|^{1/4}$.
\end{proof}

\end{subsection}

\begin{subsection}{Quantum algorithm}

The subexponential quantum resolution of the $\mathfrak{O}$-\Vectorisation\  proven in this section is based on the work of Childs, Jao and Soukharev to construct an isogeny between two given isogenous ordinary elliptic curves, \cite{childs_constructing_2014}.
In particular, we use the fact that given two oriented elliptic curves $(E_0,\iota_0),(E_1,\iota_1) \in SS_\mathfrak{O}(p)$ in the same orbit, finding an $\mathfrak{O}$-ideal $\mathfrak{a}$ such that $\mathfrak{a} \star (E_0,\iota_0) = (E_1,\iota_1)$ can be viewed as an instance of the \HiddenShift\ problem.

\begin{problem}[\HiddenShift]
\label{prob_hiddenshift}
Given a finite abelian group $(A,+)$, a finite set $S \subset \{0,1\}^m$ of encoding length $m$ and two black-box functions $f_0,f_1 : A \rightarrow S$ where $f_0$ is injective and such that there exists an element $s \in A$ verifying $f_1(x) = f_0(s + x)$ for any $x \in A$, find the element $s$ called the shift hidden by $f_0$ and $f_1$.
\end{problem}

In this paper, we assume that the abelian group $A$ of any instance of \texttt{Hidden shift} is always given as $\mathbb{Z}/n_1\mathbb{Z} \times \dots \times \mathbb{Z}/n_k\mathbb{Z}$ for some integers $k,n_1,\dots n_k$.
Notice that the \HiddenShift\ problem can also be considered when $A$ is not abelian.
Nevertheless the above formulation of the problem allows us to use Kuperberg's quantum algorithm to solve it in a subexponential number of queries of the black-box functions $f_0$ and $f_1$.

\begin{theorem}[Theorem 7.1. \cite{kuperberg_subexponential-time_2005}]
\label{the_kuperberg}
There is a quantum algorithm such that the \texttt{Hidden Shift problem} for abelian groups can be solved with time and query complexity $2^{O(\sqrt{\log n})}$, where $n$ is the size of the abelian group, uniformly for all finitely generated abelian groups.
\end{theorem}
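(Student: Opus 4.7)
The plan is to follow Kuperberg's sieve approach. Since $A$ is finitely generated abelian, by the structure theorem I may reduce to the case $A \cong \mathbb{Z}/n_1\mathbb{Z} \times \cdots \times \mathbb{Z}/n_k\mathbb{Z}$ and recover $s$ coordinate by coordinate (or jointly via a multi-dimensional variant of the sieve). The overall structure is: build cheap quantum states that encode $s$ through their phases, then combine these states classically-controlled until one obtains a state whose phase directly reveals (a coordinate of) $s$.

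First, I would construct the coset states. Using one superposition query to each of $f_0$ and $f_1$, prepare
\[ \frac{1}{\sqrt{2|A|}} \sum_{b \in \{0,1\}} \sum_{x \in A} |b\rangle |x\rangle |f_b(x)\rangle \]
and measure the function-value register. Because $f_0$ is injective and $f_1 = f_0(\cdot + s)$, the outcome collapses to
\[ \frac{1}{\sqrt{2}}\bigl(|0\rangle|x\rangle + |1\rangle|x+s\rangle\bigr) \]
for an unknown random $x \in A$, and the remainder of the algorithm is oracle-free. Next I would apply the quantum Fourier transform of $A$ to the second register and measure it, yielding a uniformly random character $\chi \in \hat{A}$ together with a labelled qubit
\[ \frac{1}{\sqrt{2}}\bigl(|0\rangle + \chi(s)|1\rangle\bigr), \]
whose label $\chi$ is known classically and whose relative phase encodes $\chi(s)$. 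At this point the hidden shift lives entirely in the phases of a bank of such states.

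Then I would run Kuperberg's sieve. The idea is to iteratively pair up labelled qubits $(\chi_1, \chi_2)$, combine them by a controlled operation and measure one output, producing with constant probability a new labelled qubit attached to $\chi_1 \pm \chi_2$ (and a useless state otherwise). By grouping labels according to their low-order bits and only combining pairs sharing those bits, successive levels of sieving strictly increase the number of trailing zeros in the label. After $\Theta(\sqrt{\log n})$ levels one reaches a label concentrated on a single coordinate of $\hat{A}$, from which the corresponding coordinate of $s$ can be read off by a final measurement; iterating this strategy reconstructs all of $s$.

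The main obstacle is the combinatorial and probabilistic analysis of the sieve. One must carefully balance the branching factor at the combination step (each level consumes pairs and yields a survivor only with constant probability) against the width of the label space at each level, and show that starting from $2^{O(\sqrt{\log n})}$ coset states and spending a matching amount of quantum and classical work suffices to produce a useful final label with constant probability. This is precisely the birthday-style argument at the heart of \cite{kuperberg_subexponential-time_2005}; the extension from cyclic to arbitrary finitely generated abelian groups then follows from the structure theorem, with only polynomial overhead from running or combining sieves on each cyclic factor.
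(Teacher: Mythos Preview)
The paper does not prove this statement at all: Theorem~\ref{the_kuperberg} is stated as a direct citation of \cite[Theorem~7.1]{kuperberg_subexponential-time_2005} and used as a black box in the analysis of Algorithm~\ref{alg_quantumOvect}. There is therefore no ``paper's own proof'' to compare against.

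Your sketch is a reasonable high-level outline of Kuperberg's sieve (coset-state preparation, Fourier sampling to obtain phase-labelled qubits, pairwise combination to clear low-order bits, and the $2^{O(\sqrt{\log n})}$ birthday-type cost analysis), and it correctly identifies where the real work lies. For the purposes of \emph{this} paper, however, no such argument is expected: the intended ``proof'' is simply the reference to Kuperberg, and anything beyond a one-line citation goes further than the paper does.
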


To solve quantumly $\mathfrak{O}$-\Vectorisation, we first prove the correctness and the expected subexponential runtime of Algorithm \ref{alg_quantumOvect} which solves $\mathfrak{O}$-\Vectorisation\ assuming that the two input curves are in the same orbit.
This algorithm is analogous to \cite[Algorithm 3]{childs_constructing_2014}.

\begin{algorithm}
\caption{Quantum $\mathfrak{O}$-\Vectorisation\  in the same orbit}
\begin{flushleft}
\hspace*{\algorithmicindent} \textbf{ Input :}
$(E_0,\iota_0),(E_1,\iota_1) \in SS_{\mathfrak{O}}(p)$ two oriented elliptic curves in the same orbit.\\
\hspace*{\algorithmicindent} \textbf{ Output :}
a reduced $\mathfrak{O}$-ideal $\mathfrak{a} \in \Cl(\mathfrak{O})$ such that $\mathfrak{a} \star (E_0,\iota_0) = (E_1,\iota_1)$ and the isogeny $\varphi_\mathfrak{a}: (E_0,\iota_0) \rightarrow (E_1,\iota_1)$. 
\end{flushleft}

\begin{algorithmic}[1]
	\State Compute $\Cl(\mathfrak{O})$ as a decomposition $\langle [\mathfrak{b}_1] \rangle \oplus \dots \oplus \langle [\mathfrak{b}_k] \rangle$. \label{alg_quantumvect_classgroup} 
	\State Denote by $n_j$ the order of $\langle [\mathfrak{b}_j] \rangle$, for $j \in \llbracket{1,\dots,k\rrbracket}$.
	\State Solve the \texttt{Hidden Shift} problem instance given with the black-box functions, for $j \in \{0,1\}$, $$f_j : \mathbb{Z}/n_1\mathbb{Z} \times \dots \times \mathbb{Z}/n_k\mathbb{Z} \rightarrow \enc(SS_\mathfrak{O}(p)), (x_1,\dots,x_k) \mapsto \enc((\mathfrak{b}_1^{x_1}\dots\mathfrak{b}_k^{x_k}) \star (E_j,\iota_j))$$ where $s = (s_1,\dots,s_k)$ denoted the hidden shift. \label{alg_quantumvect_hiddenshift}
	\State Compute $\mathfrak{a}$ the reduced representative of the ideal class $[\mathfrak{b}_1^{s_1}\dots\mathfrak{b}_k^{s_k}]$. \label{alg_quantumvect_reducedrep}
	\State Compute the isogeny $\varphi_\mathfrak{a}$ induced by the ideal $\mathfrak{a}$. \label{alg_quantumvect_isogeny}
	\State \Return $\mathfrak{a}$ and $\varphi_\mathfrak{a}$.
\end{algorithmic}
\label{alg_quantumOvect}
\end{algorithm}

\begin{lemma}[\textbf{GRH}]
\label{lem_quantum_aux}
The Algorithm \ref{alg_quantumOvect} is correct and runs in $l^{O(1)}  L_{|\disc(\mathfrak{O})|}[1/2]$ expected time where $l$ is the length of the input.
\end{lemma}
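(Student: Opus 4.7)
The plan is to prove correctness and complexity separately, breaking the algorithm into its natural subroutines and invoking the relevant imported results for each one.

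For correctness, I would first note that the hypothesis ``$(E_0,\iota_0)$ and $(E_1,\iota_1)$ are in the same orbit'' together with Proposition \ref{pro_groupaction} (freeness of the class group action) guarantees that there exists a unique class $[\mathfrak{c}] \in \Cl(\mathfrak{O})$ with $[\mathfrak{c}] \star (E_0,\iota_0) = (E_1,\iota_1)$. Writing $[\mathfrak{c}] = [\mathfrak{b}_1^{s_1} \cdots \mathfrak{b}_k^{s_k}]$ with respect to the decomposition from Step \ref{alg_quantumvect_classgroup}, the element $s = (s_1,\dots,s_k)$ is the hidden shift between $f_0$ and $f_1$: the identity $f_1(x) = f_0(s + x)$ follows directly from the $\Cl(\mathfrak{O})$-action being a group action, and the injectivity of $f_0$ follows from freeness. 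The unique-encoding property of $\enc$ established in Section~\ref{subsec_encoding} is what makes the functions $f_j$ well-defined as maps into $\{0,1\}^m$, so that an instance of Problem~\ref{prob_hiddenshift} is genuinely recovered. Once $s$ is extracted, the reduced representative of $[\mathfrak{b}_1^{s_1}\cdots\mathfrak{b}_k^{s_k}]$ and the isogeny $\varphi_\mathfrak{a}$ both solve the $\mathfrak{O}$-Vectorisation instance.

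For the complexity, I would treat each numbered step:
Step~\ref{alg_quantumvect_classgroup}, computing the structure of $\Cl(\mathfrak{O})$, can be done in quantum time $L_{|\disc(\mathfrak{O})|}[1/2]$ (under GRH), either using Hallgren's subexponential quantum algorithm for class group structure or a classical subexponential algorithm; this is where the subexponential cost comes from and the analysis should cite the standard result.
Step~\ref{alg_quantumvect_hiddenshift} applies Kuperberg's algorithm (Theorem~\ref{the_kuperberg}) to the abelian group $\mathbb{Z}/n_1\mathbb{Z} \times \dots \times \mathbb{Z}/n_k\mathbb{Z}$ of size $n = \#\Cl(\mathfrak{O}) = O(\sqrt{|\disc(\mathfrak{O})|}\log|\disc(\mathfrak{O})|)$, so both the number of queries and the internal time are $2^{O(\sqrt{\log n})} = L_{|\disc(\mathfrak{O})|}[1/2]$. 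Each oracle query requires evaluating a class group action on one of the two starting oriented curves, which by Corollary~\ref{cor_clapoti} (Clapoti) runs in time polynomial in $\log p$, $\log N(\mathfrak{b}_1^{x_1}\cdots\mathfrak{b}_k^{x_k})$, and the input length; bounding the exponents by $n_j \le \#\Cl(\mathfrak{O})$ keeps this polynomial in $l$ and $\log|\disc(\mathfrak{O})|$, hence absorbed into $l^{O(1)} L_{|\disc(\mathfrak{O})|}[1/2]$. Finally, Steps~\ref{alg_quantumvect_reducedrep}--\ref{alg_quantumvect_isogeny} consist of lattice reduction in the class group (standard polynomial-time routines once $\Cl(\mathfrak{O})$ is given) followed by one more Clapoti invocation for the isogeny, both polynomial.

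The main obstacle I expect is the bookkeeping around the oracle's cost and the uniqueness of encoding rather than any conceptual difficulty: one must verify that after substituting the Clapoti subroutine for each evaluation of $f_j$, the accumulated polynomial factors do not spoil the $L[1/2]$ bound, and that $\enc$ produces a deterministic classical string on which Kuperberg's quantum queries can act coherently (so that the same oriented curve is never given two different encodings inside a quantum superposition). Once those two points are handled, combining the bounds for Steps~\ref{alg_quantumvect_classgroup} and \ref{alg_quantumvect_hiddenshift} yields the claimed $l^{O(1)} L_{|\disc(\mathfrak{O})|}[1/2]$ expected running time.
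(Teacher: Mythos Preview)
Your proposal is correct and follows essentially the same structure as the paper's proof: correctness via freeness of the action plus the unique encoding $\enc$, and complexity by bounding each step (class group computation, Kuperberg, ideal reduction, \texttt{Clapoti}). The one substantive difference is Step~\ref{alg_quantumvect_classgroup}: you budget $L_{|\disc(\mathfrak{O})|}[1/2]$ for the class group structure via a subexponential routine, whereas the paper invokes the quantum polynomial-time algorithm of Biasse--Fieker (so that the $L[1/2]$ cost arises solely from Kuperberg). Either choice yields the stated bound, so your argument goes through; you may simply want to tighten that step and drop the remark that ``this is where the subexponential cost comes from.''
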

\begin{proof}
Let us prove the complexity of Algorithm \ref{alg_quantumOvect}:
\begin{itemize}
	\item[{[\ref{alg_quantumvect_classgroup}]}] Under GRH, one can quantumly compute the group structure of $\Cl(\mathfrak{O})$ in time polynomial in $\log |\disc(\mathfrak{O})|$, using for instance \cite[Theorem 1.2]{biasse_efficient_2016}.

	\item[{[\ref{alg_quantumvect_hiddenshift}]}] By Kuperberg's algorithm, Theorem \ref{the_kuperberg}, one can solve the instance of the \texttt{Hidden Shift} problem with time and query complexity $L_{|\disc(\mathfrak{O})|}[1/2]$. All the queries are done on the function $f_0$ and $f_1$ which can be evaluated in time polynomial in the length of the input using \texttt{Clapoti}, see Corollary \ref{cor_clapoti}.
	Thus this step in done in $l^{O(1)}L_{|\disc(\mathfrak{O})|}[1/2]$, where $l$ is the length of the input.

	\item[{[\ref{alg_quantumvect_reducedrep}]}] To compute the reduced representative of the ideal class $[\mathfrak{b}_1^{s_1}\dots\mathfrak{b}_k^{s_k}]$, we use a square-and-multiply approach where the ideal computed at each step is reduced.
	With this method, $\forall i \in \llbracket 1,k\rrbracket$, $[\mathfrak{b}_i^{s_i}]$ can be reduced in $O(\lceil \log \#\Cl(\mathfrak{O})\rceil)$ squarings, multiplications and reductions which all can be done in polynomial time in $\log |\disc \mathfrak{O}|$.
	Then it only remains to compute the reduced representative of $[\mathfrak{b}_1^{s_1}\dots\mathfrak{b}_k^{s_k}]$ from the reduced representatives of $[\mathfrak{b}_1^{s_1}]$,\dots,$[\mathfrak{b}_k^{s_k}]$ in time polynomial in $\log |\disc \mathfrak{O}|$.
	Hence, using the standard result $\# \Cl(\mathfrak{O}) = O(\log(|\disc(\mathfrak{O}))\sqrt{\disc(\mathfrak{O})})$, this whole step is done in time polynomial in $\log |\disc(\mathfrak{O})|$.

\item[{[\ref{alg_quantumvect_isogeny}]}] Finally with \texttt{Clapoti}, we can compute the isogeny $\varphi_\mathfrak{a} : (E_0,\iota_0) \rightarrow (E_1,\iota_1)$ in time polynomial in the length of the input. 
\end{itemize}

A short computation proves that the shift $s = (s_1,\dots,s_k)$ hidden by $f_0$ and $f_1$ gives the ideal class $[\mathfrak{a}] = [\mathfrak{b}_1^{s_1}\dots \mathfrak{b}_k^{s_k}]$ such that $\mathfrak{a} \star (E_0,\iota_0) = (E_1,\iota_1)$.
Indeed, for every $[\mathfrak{b}] \in \Cl(\mathfrak{O})$, there is a vector $b = (b_1,\dots,b_k) \in \mathbb{Z}/n_1\mathbb{Z} \times \dots \times \mathbb{Z}/n_k\mathbb{Z}$ such that $[\mathfrak{b}] = [\mathfrak{b}_1^{b_1}\dots\mathfrak{b}_k^{b_k}]$.
Then,
\begin{align*}
	f_1(b) &= \enc((\mathfrak{b}_1^{b_1}\dots\mathfrak{b}_k^{b_k}) \star (E_1,\iota_1)) = \enc(\mathfrak{b} \star (E_1,\iota_1))\\
	&= \enc((\mathfrak{b} \mathfrak{a}) \star (E_0,\iota_0)) = \enc((\mathfrak{b}_1^{a_1+b_1}\dots\mathfrak{b}_k^{a_k + b_k}) \star (E_0,\iota_0))\\
	&= f_0(a + b).
\end{align*}
Finally, the \HiddenShift\ problem is well defined as $f_0$ is injective because the action of $\Cl(\mathfrak{O})$ over $SS_\mathfrak{O}(p)$ is free.\\
\end{proof}

\begin{theorem}[\textbf{GRH}, Quantum $\mathfrak{O}$-\Vectorisation]
\label{the_quantumOvect}
There is a quantum algorithm taking as input two oriented elliptic curves $(E_0,\iota_0)$ and $(E_1,\iota_1)$ in $SS_\mathfrak{O}(p)$ which returns an $\mathfrak{O}$-ideal $\mathfrak{a}$ such that $E_0^{\mathfrak{a}} \sim E_1$ together with the associated isogeny $\varphi_\mathfrak{a} : E_0 \rightarrow E_1$.
This algorithm runs in expected time $l^{O(1)}L_{|\disc(\mathfrak{O})|}[1/2]$ where $l$ is the length of the input.
\end{theorem}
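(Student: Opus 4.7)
The plan is to bootstrap from Lemma~\ref{lem_quantum_aux}, which solves the same-orbit case, by handling the possibility that the two inputs lie in different orbits of the $\Cl(\mathfrak{O})$-action. By Proposition~\ref{pro_groupaction}, the action of $\Cl(\mathfrak{O})$ on $SS_\mathfrak{O}(p)$ has at most two orbits, and for any primitively $\mathfrak{O}$-oriented curve $(E_0,\iota_0)$, either $(E_0,\iota_0)$ or its $\mathfrak{O}$-twist $(E_0,\bar\iota_0)$ lies in the orbit of $(E_1,\iota_1)$. This is exactly the same dichotomy that Theorem~\ref{the_fullclassicOvect} handled in the classical case, and the remedy is the same: run two instances of the same-orbit algorithm in parallel, one with $(E_0,\iota_0)$ and one with $(E_0,\bar\iota_0)$, and keep the output of whichever one succeeds.

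More concretely, I would first compute the orientation $\bar\iota_0$ from $\iota_0$, which is immediate from the efficient representation since $\bar\iota_0(\omega) = \mathrm{tr}(\omega) - \iota_0(\omega)$ for a generator $\omega$ of $\mathfrak{O}$. Then I launch two parallel executions of Algorithm~\ref{alg_quantumOvect} on the pairs $((E_0,\iota_0),(E_1,\iota_1))$ and $((E_0,\bar\iota_0),(E_1,\iota_1))$. For exactly one of these pairs, the two inputs lie in the same $\Cl(\mathfrak{O})$-orbit (or possibly both, in the single-orbit case when $p$ is ramified in $K$), and on that pair Lemma~\ref{lem_quantum_aux} guarantees correctness and an expected runtime of $l^{O(1)} L_{|\disc(\mathfrak{O})|}[1/2]$. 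The returned ideal $\mathfrak{a}$ is accompanied by the isogeny $\varphi_\mathfrak{a}:E_0\to E_1$ produced at Step~\ref{alg_quantumvect_isogeny} of the same-orbit algorithm via \texttt{Clapoti} (Corollary~\ref{cor_clapoti}), so no further work is needed to produce the required output.

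The one delicate point will be ensuring that the ``wrong'' parallel branch (the one whose inputs are in distinct orbits) does not compromise the output: on that branch there is no hidden shift, so Kuperberg's routine has no correct answer to find. I would address this by inserting a verification step after Kuperberg returns a candidate $s = (s_1,\dots,s_k)$: compute a reduced representative of $[\mathfrak{b}_1^{s_1}\cdots\mathfrak{b}_k^{s_k}]$, evaluate its action on the input oriented curve using \texttt{Clapoti}, and compare the result to $(E_1,\iota_1)$ through the canonical encoding $\enc$ of Section~\ref{subsec_encoding}. Every step of this verification is polynomial in $l$. On the correct branch the verification succeeds and we return the pair $(\mathfrak{a},\varphi_\mathfrak{a})$; on the wrong branch it either fails the comparison (in which case we discard and, optionally, rerun Kuperberg to confirm) or simply never finishes within the time budget of the successful branch.

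Combining everything, the overall expected runtime is twice the runtime of Lemma~\ref{lem_quantum_aux}, which is still $l^{O(1)} L_{|\disc(\mathfrak{O})|}[1/2]$, and correctness follows from the fact that the orbit-matching branch always produces a valid $(\mathfrak{a},\varphi_\mathfrak{a})$ with $\mathfrak{a}\star(E_0,\iota_0) \simeq (E_1,\iota_1)$, hence in particular $E_0^\mathfrak{a}\sim E_1$. The main obstacle to be careful about is really just the orbit-parity issue and the need to certify which branch is the right one; everything else is a direct quotation of Lemma~\ref{lem_quantum_aux} and Corollary~\ref{cor_clapoti}.
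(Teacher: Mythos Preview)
Your proposal is correct and follows essentially the same approach as the paper: run Algorithm~\ref{alg_quantumOvect} in parallel on $((E_0,\iota_0),(E_1,\iota_1))$ and $((E_0,\bar\iota_0),(E_1,\iota_1))$, relying on Proposition~\ref{pro_groupaction} to guarantee one branch lies in a single orbit, and use \texttt{Clapoti} to verify the output so that the wrong branch (where the hidden shift instance is ill-posed) can be discarded. The paper's proof is slightly terser but makes exactly the same points, including the explicit remark that the validity check via \texttt{Clapoti} handles the possibility that the incorrect instance returns garbage or fails to terminate.
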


\begin{proof}
As for the classical resolution of $\mathfrak{O}$-\Vectorisation, it is sufficient to run two instances of Algorithm \ref{alg_quantumOvect}.
The first one with the inputs $(E_0,\iota_0)$ and $(E_1,\iota_1)$ and the second one with the inputs $(E_0,\bar{\iota_0})$ and $(E_1,\iota_1)$.
When the action has 2 orbits (see Proposition \ref{pro_groupaction}), one of these instances corresponds to a scenario where there is no solution to the \HiddenShift\ problem.
It can be assumed that running Algorithm \ref{alg_quantumOvect} on this incorrect instance will either yield an erroneous output or fail to terminate.
However, since we can check the validity of a solution in polynomial time using with \texttt{Clapoti}, this does not pose any issues.
Consequently, the complexity in the Theorem \ref{the_quantumOvect} directly follows from Lemma \ref{lem_quantum_aux}.
\end{proof}

This leads us to the following proof of Theorem \ref{the_quantum}.

\begin{proof}[Proof of Theorem \ref{the_quantum}]
\label{proof_quantum}
By Corollary \ref{cor_primitivisation} and because the factorisation of $\disc(\mathbb{Z}[\alpha])$ can be computed in quantum polynomial time, the $\alpha$-\EndRing\ problem reduces to $\mathfrak{O}$-\EndRing\ in time polynomial in the length of the instance.
Notice that the discriminant of the order returned by this primitivisation step can only decrease in absolute value. 
Then, by Proposition \ref{pro_OEndRingreduction}, $\alpha$-\EndRing\ reduces to $\mathfrak{O}$-\Vectorisation\ in probabilistic time polynomial in the length of the input.
Hence, by Theorem \ref{the_quantumOvect}, $\alpha$-\EndRing\ can be solved in expected time $l^{O(1)}L_{|\disc{\mathbb{Z}[\alpha]}|}[1/2]$.
\end{proof}

\end{subsection}
\end{section}
\begin{section}{Ascending the volcano} 
\label{sec_navigate}
We fix $K$ to be a quadratic number field and we consider supersingular elliptic curves over the finite field $\mathbb{F}_{p^2}$ where $p$ is a prime which does not split in $K$.
Let $\ell \not = p$ be a prime number.\\

Adding $K$-orientations to an $\ell$-isogeny graph of supersingular elliptic curves gives a structure of volcano to each of its connected components, analogous to the structure of isogeny graphs of ordinary elliptic curves. 
We now introduce formally this notion before showing how results of Section \ref{sec_higher} can be used to naviguate efficiently in this volcano.
This is then used to optimise results of the previous section.
Notably, we improve \cite[Theorem 5]{EC:Wesolowski22} by proving that $(\mathbb{Z} + c\mathfrak{O})$-\EndRing\ reduces to $\mathfrak{O}$-\EndRing\ in polynomial time in the largest prime factor of $c$ (instead of it largest \emph{prime power} factor).

We define the \textbf{$K$-oriented $\ell$-isogeny graphs} as the graph having for set of vertices the $K$-oriented supersingular elliptic curves up to $K$-isomorphism and for edges the $K$-oriented isogenies of degree $\ell$ between them.

Let $(E,\iota),(E',\iota')$ be two $K$-oriented supersingular elliptic curves, where $\iota$ is a primitive $\mathfrak{O}$-orientation and $\iota'$ is a primitive $\mathfrak{O}'$-orientation.

For any $K$-oriented isogeny $\varphi: (E,\iota) \rightarrow (E',\iota')$ of degree $\ell$, we say that $\varphi$ is
\begin{center}
\begin{itemize}
	\item[$\nearrow$] \begin{center} \textbf{ascending} if $\mathfrak{O} \varsubsetneq \mathfrak{O}'$, \end{center}
	\item[$\rightarrow$] \begin{center} \textbf{horizontal} if $\mathfrak{O} = \mathfrak{O}'$, \end{center}
	\item[$\searrow$] \begin{center}  \textbf{descending} if $\mathfrak{O} \varsupsetneq \mathfrak{O}'$. \end{center}
\end{itemize}
\end{center}

We denote by $\big( \frac{\disc(\mathfrak{O})}{\ell} \big)$ the Legendre symbol.
From \cite{colo_orienting_2020}, the oriented elliptic curve $(E,\iota)$ has $\ell - \big( \frac{\disc(\mathfrak{O})}{\ell} \big)$ descending isogenies from it.
Moreover, there are in addition
\begin{itemize}
	\item $\big( \frac{\disc(\mathfrak{O})}{\ell} \big) + 1$ horizontal isogenies, if $\mathfrak{O}$ is maximal at $\ell$,
	\item one ascending isogeny, otherwise.
\end{itemize}
Moreover, an isogeny between $K$-oriented elliptic curves of non-prime degree is said to be ascending, horizontal or descending if its factorisation into prime-degree isogenies is only composed of ascending, horizontal or descending isogenies.

Then, we say that each component of the $K$-oriented $\ell$-isogeny graph has a volcano structure as its shape recalls one.
Indeed, it has a finite cycle of horizontal isogenies, called the \textbf{crater}, the surface or the rim, such that from each vertex starts an infinite tree of vertical isogenies.
In particular, an oriented elliptic curve $(E,\iota) \in SS_\mathfrak{O}(p)$ is at the crater of the $K$-oriented $\ell$-isogeny graph if and only if $\mathfrak{O}$ is maximal at $\ell$.
Otherwise, we say that $(E,\iota)$ is at \textbf{depth} $m$ if the valuation at $\ell$ of $[O_K:\mathfrak{O}]$ is $m$, where $O_K$ is the maximal order of $K$.
This means that one can walk from $(E,\iota)$ to the crater of the $K$-oriented $\ell$-isogeny graph by taking $m$ ascending steps.\\

We provide an algorithm to walk to the crater of the volcano as an example of efficient navigation.

\begin{lemma}[Walking to the crater]
\label{lem_walktotherim}
Let $(E,\iota) \in SS_\mathfrak{O}(p)$ be a $\mathfrak{O}$-oriented elliptic curve and $\ell \not = p$ a prime number.
If $(E,\iota)$ is at depth at least $m$ in the $K$-oriented $\ell$-isogeny volcano, then one can compute the unique ascending isogeny $\varphi: (E,\iota) \rightarrow (E',\iota')$ of degree $\ell^m$ in time polynomial in $\ell, m, \log p$ and in the length of the representation of $\iota$.

In particular, one can give the representation of $\varphi$ as $m$ kernels of successive isogenies all defined over an extension of degree $O(\ell^2)$. 
\end{lemma}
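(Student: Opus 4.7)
The plan is to ascend one level at a time, iterating $m$ times. The two key ingredients are: (a) identifying the kernel of the unique ascending $\ell$-isogeny from the current orientation, and (b) updating the orientation on the new curve via Robert's division algorithm (Theorem \ref{cor_division}).

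Write $\omega = a + f \omega_K$ for a generator of $\mathfrak{O}$, where $\omega_K$ generates $O_K$ and $f$ is the conductor of $\mathfrak{O}$, and set $\theta = \iota(\omega) - [a] = \iota(f \omega_K)$. Since the depth is at least $m \geq 1$, $\ell \mid f$; the characteristic polynomial of $\theta$ is $X^2 - f\,\mathrm{Tr}(\omega_K)\,X + f^2 N(\omega_K)$, whose linear coefficient is divisible by $\ell$ and whose constant term is divisible by $\ell^2$, hence $\theta^2$ vanishes on $E[\ell]$. Primitivity of $\iota$ forbids $\theta/\ell \in \End(E)$, so $\theta|_{E[\ell]}$ is not identically zero, and its kernel $H$ is a cyclic subgroup of order exactly $\ell$. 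This $H$ is the ascending kernel: if $\varphi : E \to E'$ denotes the ascending isogeny, the orientation extends to the larger order precisely when $\iota'(\omega_K) = \varphi \theta \hat\varphi / \ell^2 \in \End(E')$; applying $\hat\varphi$ on the left and $\varphi$ on the right to both sides of $\varphi \theta \hat\varphi = \ell^2 \iota'(\omega_K)$ (using $\hat\varphi \varphi = [\ell]$) yields $\theta = \hat\varphi \circ \iota'(\omega_K) \circ \varphi$, forcing $\ker \varphi \subseteq \ker \theta$; both have order $\ell$ inside $E[\ell]$, so they coincide.

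The algorithm runs, for $i = 1, \dots, m$, as follows. Compute a basis of $E_{i-1}[\ell]$ over its minimal field of definition, an extension of $\mathbb{F}_{p^2}$ of degree $O(\ell^2)$ (since on a supersingular curve the Frobenius satisfies $X^2 + p = 0$, so its action on $E_{i-1}[\ell]$ has order dividing $\ell^2 - 1$). Evaluate $\theta_{i-1} = \iota_{i-1}(\omega_{i-1}) - [a_{i-1}]$ on this basis using the efficient representation of $\iota_{i-1}$, and recover $H_i := \ker(\theta_{i-1}|_{E_{i-1}[\ell]})$ by linear algebra over $\mathbb{F}_\ell$. Compute $\varphi_i : E_{i-1} \to E_i$ via Vélu's formulae from $H_i$, and store $H_i$ as the $i$-th kernel of the output chain. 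Finally, form the composition $\varphi_i \circ \theta_{i-1} \circ \hat\varphi_i \in \End(E_i)$ by concatenating efficient representations and invoke Theorem \ref{cor_division} to divide it by $\ell^2$, producing an efficient representation of $\theta_i = \iota_i((f/\ell^i)\omega_K)$, which together with the scalar part determines the updated primitive orientation $\iota_i$ on $E_i$.

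The main point to check is that the representations of $\iota_i$ stay of controlled size across the $m$ iterations, so that each step remains polynomial in the next. This is precisely what Theorem \ref{cor_division} guarantees: its output has size $O(k \log p \log^3(\deg))$ with extension parameter $k = O(\ell^2)$ and input degree $\ell^2 \deg \theta_{i-1}$, while $\deg \theta_i = \deg \theta_{i-1}/\ell^2$ remains bounded by $\deg \theta_0 = f^2 N(\omega_K)$. Consequently, every per-iteration cost \textemdash\ torsion computation, evaluation of $\theta_{i-1}$, Vélu's formulae, and the division \textemdash\ is polynomial in $\ell$, $\log p$, and the current representation length; summing over the $m$ iterations yields the claimed runtime and the promised output format.
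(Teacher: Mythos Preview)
Your proposal is correct and follows essentially the same approach as the paper: iterate $m$ ascending steps, at each step identifying the ascending kernel as $\ker(\theta)\cap E[\ell]$, applying V\'elu's formulae, and updating the orientation by dividing $\varphi\circ\theta\circ\hat\varphi$ by $\ell^2$ via Theorem~\ref{cor_division}, with the size control on the representation ensuring each iteration stays polynomial. The only notable difference is that you give a self-contained argument for why $\ker(\theta)\cap E[\ell]$ is the ascending kernel, whereas the paper simply cites \cite[Lemma~11]{EC:Wesolowski22}; your justification that Frobenius satisfies $X^2+p=0$ is not literally correct over $\mathbb{F}_{p^2}$, but the conclusion that $E[\ell]$ lives in an extension of degree $O(\ell^2)$ holds regardless (the Frobenius acts on $E[\ell]$ through $\mathrm{GL}_2(\mathbb{F}_\ell)$, whose element orders are $O(\ell^2)$).
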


\begin{proof}
Let $(E,\iota) \in SS_\mathfrak{O}(p)$ be an $\mathfrak{O}$-oriented elliptic curve at depth $m$ in the $K$-oriented $\ell$-isogeny volcano and $\varphi: (E,\iota) \rightarrow (E',\iota')$ be the unique ascending $K$-isogeny of degree $\ell^m$.
We compute the isogeny $\varphi$ by composing the $m$ successive ascending isogenies of degree $\ell$ from $(E,\iota)$.

Let $\varphi_1 :(E,\iota) \rightarrow (E_1,\iota_1)$ be the unique ascending isogeny of degree $\ell$ from $(E,\iota)$.
We denote by $\mathfrak{O}_1$ the order such that $(E_1,\iota_1)$ is $\mathfrak{O}_1$-primitively oriented and $\mathfrak{O}$ is a suborder of $\mathfrak{O}_1$.
Let $\omega_1$ be a generator of $\mathfrak{O}_1$.
We assume, without loss of generality, that $\mathfrak{O}$ is given by a generator $\omega$ of the form $\omega = \ell \omega_1$.
Then as shown in \cite[Lemma 11]{EC:Wesolowski22}, $\ker \varphi = \ker(\iota(\omega))\cap E[\ell]$.
As $\iota(\omega)$ is efficiently represented, $\ker \varphi$ can be computed in time polynomial in $\ell,\log p$ and $l_0$, where $l_0$ is the length of the representation of $\iota$.
One just has to compute a basis of $E[\ell]$ and to take a generator of the cyclic subgroup of $E[\ell]$ that vanishes under $\iota(\omega)$.
It provides a representation of $\varphi$ given by its kernel generated by a point living in an extension of degree $O(\ell^2)$.
Thus, it is possible to compute the elliptic curve $E_1 = E/\ker \varphi$ and its orientation $\iota_1$ induced by $\varphi_1$ in time polynomial in $\ell,\log p$ and in $l_0$.

On the one hand, to recover $E_1$, we use Vélu's formula \cite{velu_isogenies_1971}.
On the other hand, for the computation of the induced orientation, we have
$$
\iota_1(\omega_1) = {\varphi_1}_*(\iota(\omega_1)) = \frac{\varphi \circ \iota(\omega_1) \circ \hat\varphi}{\ell} = \frac{\varphi \circ \iota(\ell \omega_1) \circ \hat\varphi}{\ell^2} = \frac{\varphi \circ \iota(\omega) \circ \hat\varphi}{\ell^2}.
$$
Thus, from the known representations of $\varphi$ and $\iota(\omega)$, we get an efficient representation of $\varphi \circ \iota(\omega) \circ \hat\varphi$ and we just need to divide it by $\ell^2$ using Algorithm \ref{alg_division}.
By Theorem \ref{cor_division}, this computation is polynomial in $l, \log p$ and in $l_0$ and returns a representation of $\iota_1$ of size $O(\log(p)\log^3(\ell^2l_0))$ such that one can evaluate it on a point in $\tilde{O}(\log^{11}(\ell^2 l_0))$ operations over its field of definition.\\

We do the same computation to get a representation of the unique ascending isogeny $\varphi_2 : (E_1,\iota_1) \rightarrow (E_2,\iota_2)$ of degree $\ell$.
First, we compute the kernel $\ker \varphi_2 = \ker(\iota_1(\omega_1)) \cap E_1[\ell]$ and deduce the curve $E_2 = E_1/\ker \varphi_2$ together with a representation of $\varphi_2$ in time polynomial in $\ell, \log p$ and in $l_0$.
Then we recover in time polynomial in $\ell, \log p$ and $l_0$ a representation of the induced orientation $\iota_2$, with the same properties as the one of $\iota_1$.\\

After such $m$ steps, one can provide efficient representations for the totality of the $\varphi_i$, for $i \in \llbracket 1,m \rrbracket$, in polynomial time in $\ell, \log p, 
l_0$ and $m$.
The representation $\varphi : (E,\iota) \rightarrow (E',\iota')$ is then given by the composition of the representations of $\varphi_i$, for $i \in \llbracket 1,m \rrbracket$.
Hence, this representation is provided by the kernels of the $m$ successive isogenies, namely by $m$ points living in extension of degree $O(\ell^2)$.
\end{proof}

\begin{theorem}[\textbf{GRH}]
\label{the_cOtoO}
Let $c$ be a positive integer and $\mathfrak{O}$ a quadratic order.
Then $(\mathbb{Z}+c \mathfrak{O})$-\EndRing\ reduces to $\mathfrak{O}$-\EndRing\ in probabilistic polynomial time in the length of the input and in the largest prime factor of $c$. 
\end{theorem}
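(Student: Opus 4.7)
The plan is to reduce an instance of $(\mathbb{Z}+c\mathfrak{O})$-\EndRing\ to an instance of $\mathfrak{O}$-\EndRing\ by climbing to the surface of the oriented isogeny volcano one prime at a time, and then transporting the endomorphism ring back along the climbing isogeny. First I would compute the prime factorisation $c=\ell_1^{e_1}\cdots\ell_r^{e_r}$ by trial division, which runs in time polynomial in $P^+(c)$.

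Writing $(E,\iota)$ for the input, the orientation $\iota$ is primitive for $\mathbb{Z}+c\mathfrak{O}$, so the depth of $(E,\iota)$ in the $K$-oriented $\ell_i$-isogeny volcano is exactly $v_{\ell_i}(c)=e_i$ for each $i$. I would invoke Lemma~\ref{lem_walktotherim}, successively for $i=1,\ldots,r$, to produce the unique ascending $\ell_i^{e_i}$-isogeny from the current oriented curve; each call outputs the next oriented curve together with an efficient representation of its orientation, which then feeds into the next call. Concatenating all steps yields a $K$-oriented isogeny $\varphi:(E,\iota)\to(E',\iota')$ of degree $c$, where $(E',\iota')\in SS_\mathfrak{O}(p)$ is primitively $\mathfrak{O}$-oriented, and the total cost of this phase is polynomial in the length of the input and in $P^+(c)$, since each call to Lemma~\ref{lem_walktotherim} is polynomial in $\ell_i$, $e_i$, $\log p$, and the current representation length.

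I would then query the $\mathfrak{O}$-\EndRing\ oracle on $(E',\iota')$, obtaining an $\varepsilon'$-basis of $\End(E')$. Recovering $\End(E)$ from $\End(E')$ and an efficiently represented isogeny $\varphi:E\to E'$ is the standard endomorphism-ring transport across an isogeny: one computes the one-sided ideal of $\End(E')$ attached to $\varphi$ (of reduced norm $c$), identifies it with an ideal of $B_{p,\infty}$, and outputs its order on the other side; this is a quaternion-arithmetic computation running in probabilistic polynomial time under GRH, available for instance via the tools of \cite{wesolowski_supersingular_2022,EC:PagWes24}.

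The main obstacle, and precisely the reason this refines \cite[Theorem~5]{EC:Wesolowski22}, is the ascent phase. With the previous state of the art, climbing an $\ell$-volcano by $e$ levels relied on applying the action of an ideal of norm $\ell^e$ through powersmooth-ideal machinery, forcing the complexity to scale with $\ell^e$ and accumulating representation blow-up at each level. The isogeny-division algorithm of Theorem~\ref{cor_division}, incorporated into Lemma~\ref{lem_walktotherim}, is what allows every step of the climb to be performed while keeping the orientation's representation efficient, so that the per-prime cost is polynomial in $\ell_i$ and $e_i$ separately, yielding the claimed reduction polynomial in $P^+(c)$ rather than in $P^*(c)$.
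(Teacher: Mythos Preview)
Your proof follows the same route as the paper: factor $c$, climb $e_i$ levels at each prime $\ell_i$ via Lemma~\ref{lem_walktotherim}, query the $\mathfrak{O}$-\EndRing\ oracle at the top, and transport $\End(E')$ back to $E$. Two small points are worth flagging.

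First, the depth of $(E,\iota)$ in the $\ell_i$-volcano is $v_{\ell_i}(c\cdot f_\mathfrak{O})$, not necessarily exactly $e_i$, since $\mathfrak{O}$ need not be maximal at $\ell_i$. This is harmless: Lemma~\ref{lem_walktotherim} only requires depth \emph{at least} $e_i$, and ascending $e_i$ steps lands precisely at the $\mathfrak{O}$-level, which is what you need.

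Second, for the transport step the paper is more concrete than your appeal to general quaternion-ideal machinery. Lemma~\ref{lem_walktotherim} hands you $\varphi$ as a chain of prime-degree isogenies, each specified by its kernel over an extension of degree $O(\ell_i^2)$; the paper then applies \cite[Lemma~12]{EC:Wesolowski22} once per link in the dual chain $\hat\varphi=\hat\phi_{1,1}\circ\cdots\circ\hat\phi_{r,e_r}$ to propagate the $\varepsilon$-basis from $E'$ back to $E$. Your phrasing ``compute the one-sided ideal attached to $\varphi$ of reduced norm $c$'' suggests doing this in one shot, which would naively require handling $\ell_i^{e_i}$-torsion and reintroduce a dependence on $P^*(c)$. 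The chain decomposition you already have is exactly what keeps the transport polynomial in $P^+(c)$, so you should use it explicitly (or cite a transport result that genuinely works from an arbitrary efficient representation).
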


\begin{proof}
Let $(E,\iota) \in SS_{\mathbb{Z} + c\mathfrak{O}}(p)$ be an instance of $(\mathbb{Z} + c\mathfrak{O})$-\EndRing.
Let us solve it using an $\mathfrak{O}$-\EndRing\ oracle.

Here, the main objective is to compute a representation of the unique isogeny $\varphi : E \rightarrow E'$ of degree $c$ such that $\varphi_*(\iota)$ is an $\mathfrak{O}$-orientation.
Indeed, using the $\mathfrak{O}$-\EndRing\ oracle on the instance $(E',\varphi_*(\iota))$ gives an $\varepsilon$-basis of $\End(E')$.
Then, from the $\varepsilon$-basis of $\End(E')$ and $\hat\varphi$, an $\varepsilon$-basis of $\End(E)$ can be computed, under GRH, in probabilistic polynomial time in the length of the input, \cite[Lemma 12]{EC:Wesolowski22}.
Notice that to use directly \cite[Lemma 12]{EC:Wesolowski22}, the isogeny $\hat\varphi$ needs to be represented by its kernel.
It is not an issue for this proof.\\

First, we compute the prime factorisation of $c$ and denote it $\prod_{i=1}^r \ell_i^{e_i}$.
This factorisation can be done in polynomial time in $P^+(c)$.
Using Lemma \ref{lem_walktotherim}, we can successively take $e_i$ steps to the crater of the oriented $\ell_i$-isogeny volcanoes, for $i \in \llbracket 1,r \rrbracket$, to reach $(E',\varphi_*(\iota))$ in polynomial time in the length of the input and in $P^+(c)$.
Let us denote by $(E_i,\iota_i)$ the oriented elliptic curve obtained by walking $e_1$ steps from $(E_0,\iota_0) := (E,\iota)$ to the crater of the oriented $\ell_1$-isogeny volcano then $e_2$ steps to the crater of the oriented $\ell_2$-isogeny volcano and so on until walking $e_i$ steps to the crater of the oriented $\ell_i$-isogeny volcano.
We denote by $\varphi_i$ the isogeny of degree $\ell_i^{e_i}$ that maps $(E_{i-1},\iota_{i-1})$ to $(E_i,\iota_i)$.
By Lemma \ref{lem_walktotherim}, every $\varphi_i$ is given by $e_i$ successive kernels of $\ell_i$-isogenies living in extension of degree $O({P^+(c)}^2)$.
We then denote this decomposition of $\varphi_i$ into $\ell_i$ isogenies by $\varphi_i = \phi_{i,e_i} \circ \dots \circ \phi_{i,1}$.
Finally, using the decomposition of every $\varphi_i$ into isogenies of prime degree, we have the following decomposition of $\hat\varphi : (E',\iota') \rightarrow (E,\iota)$
$$
\hat\varphi = \hat\phi_{1,1} \circ \dots \circ \hat\phi_{1,e_1} \circ \hat\phi_{2,1} \circ \dots \circ \hat\phi_{2,e_2} \circ \dots \circ \hat\phi_{r,1} \circ \dots \circ \hat\phi_{r,e_r},
$$
where all the kernels of the $\hat\phi_{i,j}$ are recoverable in time polynomial in $P^+(c)$ and in $\log p$.

Finally, $\End(E)$ is computable in probabilistic polynomial time in the length of the input and in $P^+(c)$ by propagating the knowledge of the endomorphism ring from $(E',\iota')$ to $(E,\iota)$ using the $O(\log c)$ dual isogenies of prime degree between them, thanks to \cite[Lemma 12]{EC:Wesolowski22}.
\end{proof}

\begin{corollary}[\textbf{GRH}]
Let $c$ be a positive integer and $\mathfrak{O}$ a quadratic order.
Then $(\mathbb{Z}+c\mathfrak{O})$-\EndRing\ can be solved in probabilistic polynomial time in $(l \cdot P^+(c))^{O(1)}|\disc(\mathfrak{O})|^{1/4}$ where $l$ is the length of the input and $P^+(c)$ is the largest prime factor of $c$.
\end{corollary}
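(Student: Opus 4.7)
The plan is to combine the reduction of Theorem~\ref{the_cOtoO} with the classical resolution of $\mathfrak{O}$-\EndRing\ developed in Section~\ref{sec_vect}. Given an instance $(E,\iota)$ of $(\mathbb{Z}+c\mathfrak{O})$-\EndRing\ of total length $l$, I would first invoke Theorem~\ref{the_cOtoO} to reduce it, in probabilistic time polynomial in $l$ and $P^+(c)$, to an instance $(E',\varphi_*(\iota))$ of $\mathfrak{O}$-\EndRing. Note that the reduction also guarantees an efficient representation of $\varphi$ and its dual (as chains of prime-degree isogenies over small extensions), so the length $l'$ of the produced instance is polynomial in $l$ and $P^+(c)$.

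Next, I would solve this $\mathfrak{O}$-\EndRing\ instance. By Proposition~\ref{pro_OEndRingreduction}, given the factorisation of $\disc(\mathfrak{O})$, this reduces in probabilistic polynomial time to $\mathfrak{O}$-\Vectorisation; the factorisation of $\disc(\mathfrak{O})$ itself can be obtained subexponentially in $\log|\disc(\mathfrak{O})|$, which is absorbed in $|\disc(\mathfrak{O})|^{1/4}$. Feeding the resulting $\mathfrak{O}$-\Vectorisation\ instance into the algorithm of Theorem~\ref{the_fullclassicOvect} (say with $\varepsilon = 1$) produces a solution in expected time ${l'}^{O(1)}|\disc(\mathfrak{O})|^{1/4}$.

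Composing the two stages gives an overall expected runtime of $(l\cdot P^+(c))^{O(1)}|\disc(\mathfrak{O})|^{1/4}$, which is the stated bound. There is no genuine obstacle here; the only point worth checking carefully is that the $\mathfrak{O}$-\EndRing\ instance produced by Theorem~\ref{the_cOtoO} has input length polynomial in $l$ and $P^+(c)$ (so that the $l^{O(1)}$ factor from Theorem~\ref{the_fullclassicOvect} stays polynomial in those parameters), which follows directly from the representation of $\varphi_*(\iota)$ constructed in the proof of Lemma~\ref{lem_walktotherim} via Algorithm~\ref{alg_division}. Everything else is a direct composition of the previously established results, and in particular the use of GRH is inherited unchanged from Theorem~\ref{the_cOtoO} and Theorem~\ref{the_fullclassicOvect}.
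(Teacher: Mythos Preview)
Your proposal is correct and follows essentially the same approach as the paper: combine the reduction of Theorem~\ref{the_cOtoO} with the classical resolution of $\mathfrak{O}$-\EndRing. The paper's proof simply cites Theorem~\ref{the_classical} for the second step, whereas you unpack it into its constituents (Proposition~\ref{pro_OEndRingreduction} and Theorem~\ref{the_fullclassicOvect}); your added remark that the reduced instance has length polynomial in $l$ and $P^+(c)$ is a valid point the paper leaves implicit.
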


\begin{proof}
This is a direct consequence of the reduction of $(\mathbb{Z} + c\mathfrak{O})$-\EndRing\ to $\mathfrak{O}$-\EndRing\ given by Theorem \ref{the_cOtoO} together with the complexity result on $\mathfrak{O}$-\EndRing\ given by Theorem \ref{the_classical}.
\end{proof}

\end{section}

\bibliography{ref,abbrev1,crypto}
\end{document}